\newtheorem{lemma}{\textbf{Lemma}}
\newtheorem{theorem}{\textbf{Theorem}}
\newtheorem{proposition}{\textbf{Proposition}}
\newtheorem{definition}{\textbf{Definition}}
\newcommand*\varhrulefill[1][0.4pt]{\leavevmode\hrule height#1\kern0pt}
\begin{document}
\title{Low-complexity Graph Sampling with Noise and Signal Reconstruction via Neumann Series}
\author{
\IEEEauthorblockN{Fen Wang, {Gene Cheung, \emph{Senior Member, IEEE}}, and Yongchao Wang, \emph{Member, IEEE}}
\renewcommand{\baselinestretch}{1.0}
\thanks{This work was supported in part by the National Science Foundation of
China under Grant 61771356 and in part by the 111 project of China under
Grant B08038. \emph{(Corresponding author: Yongchao Wang.)}}
\thanks{F. Wang and Y. Wang are with State Key Laboratory of ISN, Xidian University, Xi'an 710071, Shaanxi, China (e-mail: fenwang@stu.xidian.edu.cn; ychwang@mail.xidian.edu.cn).}
\thanks{{G. Cheung is with the department of EECS, York University, 4700 Keele Street, Toronto, M3J 1P3, Canada (e-mail:genec@yorku.ca).}}
}
\maketitle
\thispagestyle{empty} 

\vspace{0.01cm}
\begin{abstract}
Graph sampling addresses the problem of selecting a node subset in a graph to collect samples, so that a $K$-bandlimited signal can be reconstructed in high fidelity.
Assuming an independent and identically distributed (i.i.d.) noise model, minimizing the expected mean square error (MMSE) leads to the known A-optimality criterion for graph sampling, which is expensive to compute and difficult to optimize.
In this paper, we propose an augmented objective based on Neumann series that well approximates the original criterion and is amenable to greedy optimization.
Specifically, we show that a shifted A-optimal criterion can be equivalently written as {a function of} an ideal low-pass (LP) graph filter, which in turn can be approximated efficiently via fast graph Fourier transform (FGFT).
Minimizing the new objective, we select nodes greedily without large matrix inversions using a matrix inverse lemma. 
Further, for the dynamic network case where node availability varies across time, we propose an extended sampling strategy that replaces offline samples one-by-one in the selected set.
For signal reconstruction, we propose an accompanied biased signal recovery strategy that reuses the approximated filter from sampling.
Experiments show that our reconstruction is more robust to large noise than the least square (LS) solution, and our sampling strategy far outperforms several existing schemes. 
\end{abstract}

\begin{IEEEkeywords}
Graph signal processing (GSP), sampling, optimal design, matrix inversion, signal reconstruction.
\end{IEEEkeywords}
\vspace{-0.1in}
\section{Introduction}
\label{sec:intro}
\textit{Graph signal processing} (GSP) is the study of signals that reside on irregular data kernels described by graphs \cite{GSP,emerging,GSIP}.
To analyze graph signals spectrally, a large amount of research strived to define frequencies on graphs, and subsequently design transforms and wavelets for spectral decomposition of signals \cite{frequency,wavelet,filterbank}.
Specifically, \textit{graph Fourier transform} (GFT) of a graph signal $\mathbf{x} \in \mathbb{R}^N$ is defined by its projection on the eigenvector space of the graph Laplacian matrix \cite{emerging} (or the adjacency matrix \cite{gsp})
\footnote{\cite{Irregularity} proposed an alternative definition of generalized GFT that takes the irregularity of the graph into account.
For direct graphs, \cite{direct-graph} proposed the notion of graph direct variation.
In this paper, we focus on GFT derived from the Laplacian operator, but our sampling method is applicable to other symmetrical graph operators.}.
A graph signal of \textit{bandwidth} $K$ is a signal with non-zeros GFT coefficients for $K$ eigenvectors associated with the $K$ smallest eigenvalues.
Sampling of bandlimited graph signals is an important basic problem in GSP, since sensing in practice is often expensive, \textit{e.g.}, { SAR images from remote sensing \cite{remote sensing} and label assignment of a medical examination \cite{medical}.}
Formally, graph sampling is the study of node selection in a graph to collect samples, such that the original signal can be reconstructed in high fidelity.

Previous works in graph sampling can be broadly divided into two categories: noiseless sampling and noisy sampling.
Given that samples are noiselessly observed, \cite{cornell} showed that random sampling can result in perfect signal reconstruction with high probability, assuming that the sample size is larger than the signal's bandwidth.
However, the condition number of the corresponding coefficient matrix can be large, resulting in an unstable signal reconstruction.
In response,  \cite{towards,AO} proved a necessary and sufficient condition of a
\emph{uniqueness set} \cite{Pesenson2008},
and then proposed a sampling strategy with simple matrix-vector computations based on the notion of spectral proxies (SP).

If the observed samples are corrupted with noise, one classical criterion is to optimize a \textit{minimum mean square error} (MMSE) function, which leads to an A-optimality design criterion assuming an independent and identically distributed (i.i.d.) additive noise model \cite{Boyd}.
Minimizing the A-optimality criterion for graph sampling is known to be difficult: evaluating the criterion for a fixed sample set requires computation of eigenvectors and matrix inverse, and optimizing the sample set for a fixed budget is combinatorial.

To avoid heavy computation,  \cite{Uncertainty} used a greedy procedure to optimize the MMSE problem directly, called \emph{minimum Frobenius norm} (MFN).
Further, \cite{greedybound} proved that the performance of the greedy solution can be bounded via super-modularity analysis.
Alternatively, \cite{SCsampling} used the E-optimality criterion for graph sampling, which can be interpreted as the worst case MSE.
Nonetheless, all proposals above required eigen-decomposition of the Laplacian operator and {costly computations in each greedy step, such as matrix inverse for MFN and singular value decomposition (SVD) for E-optimal sampling}.

To reduce the complexity of evaluating the objective, an eigen-decomposition-free sampling strategy was recently proposed in \cite{sensorSelection,eigenfree} by constraining the coverage of a localization operator, implemented using Chebyshev polynomial approximation
\footnote{{Our proposed method is a function of an ideal graph low-pass filter, which can be viewed as an ideal operator-based sampling defined in \cite{eigenfree}. However, we do not implement this filter by a localized operator. }}.
{However, the performance of this method is competitive only when the sample size is smaller than signal's bandwidth.}
Orthogonally, our previous work,  called \textit{matrix inversion approximation} (MIA) \cite{SPL}, developed a sampling strategy to optimize an approximate MSE function based on truncated Neumann series, but the number of terms in the truncated sum must be sufficiently large for the approximation error to be small, limiting its practicality.

Because noisy graph sampling problem is NP-hard in nature \cite{greedybound}, all aforementioned criteria are optimized in a greedy manner.
To lower complexity, random sample node selection was proposed to sample signals based on a specified probability \cite{randomSampling,structured sampling}.
However, for the same sampling budget it cannot guarantee stable performance comparable to the deterministic counterparts \cite{GSP,eigenfree}.

In this paper, we propose a novel low-complexity deterministic sampling strategy via Neumann series expansion.
Unlike previous approaches {\cite{Uncertainty,SCsampling,eigenfree}}, we minimize a variant of the A-optimality criterion directly but without eigen-decomposition or matrix inversion.
Yet different from our work in \cite{SPL}, there is no explicit computation of the Neumann series sum in our augmented objective, and thus difficult truncation / approximation tradeoff is not necessary.

Specifically, we first propose an augmented objective to approximate the A-optimality criterion, and prove the inverse of its information matrix is equal to a matrix series based on the Neumann series theorem \cite{neumannseries}.
We then rewrite the objective as a function of an ideal low-pass (LP) graph filter with cutoff frequency $K$, which is efficiently approximated using fast Graph Fourier Transform (FGFT) \cite{France Givens}.
To optimize the objective function, we select nodes greedily without any matrix inversions based on a matrix inverse lemma.
Further, the greedy solution is proved to have upper-bounded performance based on super-modularity analysis \cite{greedybound}.

All previous graph sampling works assumed a static node set, while in practice often the availability of nodes as sensors varies over time; \textit{e.g.}, one person goes offline in a social network, or one sensor runs {out of battery} in a sensor network \cite{sensorBroken}.
We additionally address the sampling problem for dynamic time-varying node subsets, where the availability of each node varies as a function of time.
Specifically, we develop an extended sampling strategy that replaces offline samples one-by-one in the selected set via Sherman-Morrison formula \cite{matrix lemma} to reduce complexity.
\emph{To the best of our knowledge, we are the first to formulate and tackle the sampling problem on dynamic node subsets in the literature. }

Finally, we develop an accompanied signal reconstruction strategy to reuse the LP graph filter from sampling.
This reconstruction strategy provides a biased estimator with lower complexity, which is more robust to large noise than the conventional least square (LS) solution \cite{AO}.

{To summarize}, our main contributions are as follows:
\begin{enumerate}
\item We propose a novel graph sampling objective based on Neumann series expansion, which well approximates the A-optimality criterion.
\item We design a fast sampling algorithm to minimize the proposed objective with neither eigen-decomposition nor matrix inversions.
\item For the dynamic network case, we propose an extended node replacement strategy to replace offline nodes one-by-one efficiently.
\item We propose an accompanied biased signal recovery method for sampling that is more robust to large noise.
\item Experiments demonstrate that our proposed sampling method outperforms several state-of-the-art sampling strategies in artificial and real-world graphs.
\end{enumerate}

The outline of this paper is as follows.
In Section \ref{sec:formulate}, we overview GSP concepts and terminology used throughout this paper and discuss the A-optimality sampling problem.
We propose the augmented A-optimal sampling criterion in Section \ref{sec:AOptimal}.
In Section \ref{sec:greedy}, we detail our fast graph sampling algorithm.
We discuss the dynamic graph sampling problem in Section \ref{sec:DSS}.
We develop our biased signal reconstruction method in Section \ref{sec:reconstruction}.
Finally, we present experimental results nad conclusion in Section \ref{sec:experiments} and \ref{sec:conclusion}, respectively. 


\vspace{-0.05in}
\section{Preliminaries}
\label{sec:formulate}
Denote by $\mathcal{G}=(\mathcal{V},\mathcal{E},\mathbf{W})$ a graph with $N$ nodes indexed by $\mathcal{V}=\{ 1,\ldots,N\}$.
$\mathcal{E}$ specifies the set of connected node pairs $(i,j)$, and the $(i,j)$-th entry $w_{i,j}$ of an $N \times N$ \textit{adjacency matrix} $\mathbf{W}$ is the weight of an edge connecting nodes $i$ and $j$ ($w_{i,j} = 0$ if nodes $i$ and $j$ are not connected).
We additionally define a diagonal \textit{degree matrix} $\mathbf{D}$, where $d_{i,i} = \sum_j w_{i,j}$.
In this paper we focus on connected, undirected graphs with no self-loops, and we adopt the symmetric \textit{combinatorial graph Laplacian matrix} $\mathbf{L}=\mathbf{D}-\mathbf{W}$ as the variation operator.
Suppose that the eigen-decomposition of $\mathbf{L}$ is $\mathbf{L}= \mathbf{V} \boldsymbol{\Sigma} {{\mathbf{V}^{\top}}} $, where {$\mathbf{V}={ [\mathbf{v}_1}, ... ,{\mathbf{v}_N} ]$} contains a set of $N$ orthonormal eigenvectors as columns
corresponding to non-decreasing eigenvalues {$0 = {\lambda _1} <{\lambda _2} \le {\lambda _3} \le \ldots \le {\lambda _N}$}.
Then the GFT of a graph signal $\mathbf{x} \in \mathbb{R}^N$ is defined as its expansion on the eigenvector space of $\mathbf{L}$, \textit{i.e.}, $\tilde {\mathbf{x}} = {\mathbf{V}^{\top}}\mathbf{x}$, and the inverse GFT is $\mathbf{x} = \mathbf{V}\tilde {\mathbf{x}}$.
A graph signal is called $K$-\textit{bandlimited} if its GFT coefficients $\tilde {\mathbf{x}}$ are non-zero only for the lowest $K$ frequencies
\footnote{Authors in \cite{greedybound} defined the $\mathcal{K}$-spectrally sparse graph signal by  ${\mathbf{x} = {\mathbf{V}_{\mathcal{K}}}{{\tilde {\mathbf{x}}}_{\mathcal{K}}}}$ which is a generalization of our formulation since $\mathcal{K}\in\mathcal{V}$ is not restricted to be the first $K$ elements. We use the more conventional definition here, but our proposed algorithm is applicable to the general case.}.
A $K$-{bandlimited} graph signal can be expressed as
${\mathbf{x} = {\mathbf{V}_K}{{\tilde {\mathbf{x}}}_K}}$, where
${\mathbf{V}_K}$ means the first ${K}$ columns of ${\mathbf{V}}$, and ${{{\tilde {\mathbf{x}}}_K}}$ denotes the first ${K}$ elements of ${\tilde {\mathbf{x}}}$. 
To describe sampling on bandlimited graph signals, we define a sampling operator at first \cite{CMUicassp}.

\begin{definition}
{To sample $M$ elements from $\mathbf{x}$ to produce ${\mathbf{x}_{\mathcal{S}}=\mathbf{Cx} \in {{\rm{\mathbb{R}}}^M}}$ with ${\left| \mathcal{S} \right| = M}$ and ${\mathcal{S} \subseteq \mathcal{V}}$, we define a 0-1 binary \textit{sampling matrix} $\mathbf{C}$ as}
\begin{equation}\label{sampling}
\begin{split}
{{\mathbf{C}_{ij}} = \left\{ \begin{array}{l}
1,~~~{j = {\mathcal{S}(i)}};\\
0,~~~\textrm{otherwise},
\end{array} \right.}
\end{split}
\end{equation}
where $\mathcal{S}$ is the set of sampling indices, $\mathcal{S}(i)$ means the $i$-th element of set $\mathcal{S}$, and ${\left| \mathcal{S} \right|}$ is the number of elements in ${\mathcal{S}}$.
\end{definition}

A sampled $K$-bandlimited graph signal can now be written as ${{\mathbf{x}_\mathcal{S}}= \mathbf{C} {\mathbf{V}_K}{{\tilde {\mathbf{x}}}_K}}$.
Sampling operators satisfying $\textrm{rank}{{{\left( \mathbf{C}{\mathbf{V}_K} \right)}}} = K$ are called \emph{qualified sampling operators} in \cite{SCsampling}, since any $K$-bandlimited graph signal $\mathbf{x}$ can be perfectly recovered from samples $\mathbf{x}_\mathcal{S}$ on those operators.
Specifically, if $\textrm{rank}{{{\left( \mathbf{C}{\mathbf{V}_K} \right)}}} = K$, then, from a sampled graph signal, the LS solution can provide a perfect and unique reconstruction in noiseless environment, \textit{i.e.}, $\hat{\mathbf{x}}_{\text{LS}} = {\mathbf{V}_K}{\left( {\mathbf{C} {\mathbf{V}_K}} \right)^{\dagger}}\mathbf{x}_\mathcal{S}$, where $\dag$ denotes the pseudo-inverse computation \cite{AO}.
Further, $\textrm{rank}{{{\left( \mathbf{C}{\mathbf{V}_K} \right)}}} = K$ was demonstrated empirically to be fulfilled with high probability via random node selection when $M \geq K$ \cite{cornell}.
In the sequel, we focus on the sampling region $M \geq K$ and assume $\textrm{rank}{{{\left( \mathbf{C}{\mathbf{V}_K} \right)}}} = K$ is satisfied.

When the samples $\mathbf{x}_{\mathcal{S}}$ are corrupted by additive noise $\mathbf{n}_{\mathcal{S}}$, the LS recovery will produce a \textit{minimum variance unbiased} estimator $\hat{\mathbf{x}}_{\text{LS}} = {\mathbf{V}_K}{\left( {\mathbf{C} {\mathbf{V}_K}} \right)^{\dagger}}\mathbf{y}_\mathcal{S}={\mathbf{V}_K}{\left( {\mathbf{C} {\mathbf{V}_K}} \right)^{\dag}}(\mathbf{x}_\mathcal{S}+\mathbf{n}_{\mathcal{S}})=\mathbf{x}+{\mathbf{V}_K}{\left( {\mathbf{C} {\mathbf{V}_K}} \right)^{\dagger}}\mathbf{n}_{\mathcal{S}}$ \cite{Statistical signal}.
Assuming that noise $\mathbf{n}_{\mathcal{S}}$ is i.i.d. with zero mean and unit variance, the covariance matrix of the reconstruction error is
$\mathbf{R}_{\hat{\mathbf{x}}_{\text{LS}}}=\mathbb{E}\left[(\hat{\mathbf{x}}_{\text{LS}}-\mathbb{E}[\hat{\mathbf{x}}_{\text{LS}}])
(\hat{\mathbf{x}}_{\text{LS}}-\mathbb{E}[\hat{\mathbf{x}}_{\text{LS}}])^{\top}\right]
={\mathbf{V}_K}\left[ ({\mathbf{C}{\mathbf{V}_K})^{\top}}\mathbf{C} {\mathbf{V}_K} \right]^{ - 1}{\mathbf{V}^{\top}_K}$.
By the theory of optimal experiments design \cite{experiments}, finding a sampling matrix $\mathbf{C}$ to minimize the trace of the covariance matrix leads to the known \textit{A-optimality} criterion:
\begin{equation}\label{original formulation}
\begin{split}
\mathbf{C}^{*}=\mathop {\arg \min }\limits_{\mathbf{C} \in \mathbb{F}^{M\times N}}\textrm{tr}\left({[ {{{(\mathbf{C}{\mathbf{V}_K})}^{\top}}\mathbf{C}{\mathbf{V}_K}} ]^{ - 1}}\right).
\end{split}
\end{equation}
where $\mathbb{F}^{M\times N}$ the set of all $\mathbf{C}$ defined in \eqref{sampling}.

Minimizing the A-optimality objective \eqref{original formulation} directly using $\mathbf{C}$ is difficult because finding an optimal $\mathbf{C}^{*}$ is NP-hard and evaluating the A-optimality value for a given $\mathbf{C}$ requires expensive computations of $\mathbf{V}_K$ and matrix inverse. 

For simplicity, in the sequel the complement set of $\mathcal{S}$ is denoted by $\mathcal{S}^{c}$.
$|\mathcal{E}|$ means the cardinality of set $\mathcal{E}$, and ${{\mathbf{A}_{{\mathcal{S}_1}{\mathcal{S}_2}}}}$  is the sub-matrix of a matrix ${\mathbf{A}}$ with rows and columns indexed by ${\mathcal{S}_1}$ and ${\mathcal{S}_2}$, respectively.
${{\mathbf{A}_{{\mathcal{S}}{\mathcal{S}}}}}$ is simplified to ${{\mathbf{A}_{{\mathcal{S}}}}}$.
$\mathbf{I}$ is the identity matrix whose dimension depends on the context. 

\vspace{-0.05in}
\section{Augmented A-optimality Graph Signal Sampling}
\label{sec:AOptimal}
{In this section, we will propose a new sampling objective to approximate the A-optimality criterion. 
We first review the \emph{Neumann series theorem} in \cite{neumannseries} and one result in \cite{SPL}, both of which will be used for proving a forthcoming Theorem.}

\begin{theorem}
\label{neumann}
({\textbf{Neumann series theorem}}) If the eigenvalues $\lambda_i$ of a square matrix $\mathbf{A}$ have the property that $\rho(\mathbf{A}) \doteq \mathop {\max }\limits_i \left| {{\lambda _i}} \right| <1$, then its Neumann series $\mathbf{I} + \mathbf{A} + {\mathbf{A}^2} +  \cdots $ converges to ${(\mathbf{I} - \mathbf{A})^{ - 1}}$, \textit{i.e.}, ${(\mathbf{I} - \mathbf{A})^{ - 1}}=\sum\limits_{l=0}^\infty {\mathbf{A}^l}$.
\end{theorem}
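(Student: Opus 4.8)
The plan is to prove convergence of the partial sums of the series directly, using a telescoping identity together with the vanishing of the powers $\mathbf{A}^l$.

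First I would verify that $\mathbf{I}-\mathbf{A}$ is invertible. Since $\rho(\mathbf{A})<1$, the scalar $1$ is not an eigenvalue of $\mathbf{A}$, so $\det(\mathbf{I}-\mathbf{A})\neq 0$ and $(\mathbf{I}-\mathbf{A})^{-1}$ exists. Next, for the $n$-th partial sum $\mathbf{S}_n=\sum_{l=0}^{n}\mathbf{A}^l$, a one-line telescoping computation gives $(\mathbf{I}-\mathbf{A})\mathbf{S}_n=\mathbf{I}-\mathbf{A}^{n+1}$, and left-multiplying by $(\mathbf{I}-\mathbf{A})^{-1}$ yields $\mathbf{S}_n=(\mathbf{I}-\mathbf{A})^{-1}\big(\mathbf{I}-\mathbf{A}^{n+1}\big)$.

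The crux is to show $\mathbf{A}^{n}\to\mathbf{0}$ as $n\to\infty$ (in any matrix norm; all norms on a finite-dimensional space are equivalent, so the choice is immaterial). I would argue this via Gelfand's formula $\rho(\mathbf{A})=\lim_{n\to\infty}\lVert\mathbf{A}^n\rVert^{1/n}$: fixing $r$ with $\rho(\mathbf{A})<r<1$, there is $n_0$ such that $\lVert\mathbf{A}^n\rVert<r^n$ for all $n\geq n_0$, hence $\lVert\mathbf{A}^n\rVert\to 0$. An alternative, more self-contained route is to put $\mathbf{A}$ in Jordan canonical form $\mathbf{A}=\mathbf{P}\mathbf{J}\mathbf{P}^{-1}$ and observe that the entries of $\mathbf{J}^n$ are finite sums of the form $\binom{n}{j}\lambda^{\,n-j}$ with $|\lambda|<1$, each of which tends to $0$; therefore $\mathbf{J}^n\to\mathbf{0}$ and $\mathbf{A}^n=\mathbf{P}\mathbf{J}^n\mathbf{P}^{-1}\to\mathbf{0}$. (In the application of interest $\mathbf{A}$ is symmetric, so one may simply diagonalize and this step becomes immediate.) Letting $n\to\infty$ in the expression for $\mathbf{S}_n$ then gives $\sum_{l=0}^{\infty}\mathbf{A}^l=\lim_{n\to\infty}\mathbf{S}_n=(\mathbf{I}-\mathbf{A})^{-1}$, which is the claim.

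The main obstacle is exactly the step $\mathbf{A}^n\to\mathbf{0}$: the spectral-radius hypothesis is not itself a norm bound on $\mathbf{A}$, so it cannot be used to bound $\lVert\mathbf{A}^n\rVert$ by $\rho(\mathbf{A})^n$ directly; one must route through Gelfand's formula (or the Jordan decomposition) to convert the eigenvalue condition into decay of the powers. Everything else — invertibility of $\mathbf{I}-\mathbf{A}$, the telescoping identity, and passing to the limit — is elementary.
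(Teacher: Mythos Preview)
Your argument is correct and is essentially the standard textbook proof of the Neumann series theorem. Note, however, that the paper does not supply its own proof of this statement: Theorem~\ref{neumann} is quoted as a known result from \cite{neumannseries} (Horn and Johnson, \emph{Matrix Analysis}) and used as a tool in the proof of Theorem~\ref{shift proposition}, so there is no in-paper proof to compare against. Your proof would serve perfectly well as a self-contained justification, and the observation you make in passing---that in the paper's applications the matrices $\tilde{\mathbf{\Phi}}$ and $\mathbf{I}-(\mathbf{T}_{\mathcal{S}}+\mu\mathbf{I})$ are symmetric, so the Jordan/Gelfand machinery collapses to a trivial diagonalization---is exactly right and worth keeping if you want the exposition to be maximally elementary.
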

\begin{proposition}
\label{infinite series}
Denote by $\mathbf{\Psi}  = {\left( {\mathbf{C}{\mathbf{V}_K}} \right)^{\top}}\mathbf{C}{\mathbf{V}_K}$, $\mathbf{\Phi}=\mathbf{I}-\mathbf{\Psi}$ and ${\delta _1} \le \ldots  \le {\delta _K}$ as the eigenvalues of $\mathbf{\Phi}$. Then,
\begin{equation}\label{eigenScope}
\begin{split}
0 \le {\delta _i}<1
\end{split}
\end{equation}
if ${{ \mathbf{C}{\mathbf{V}_K}}}$ is full column rank, \textit{i.e.},
$\emph{\textrm{rank}}{{{\left( \mathbf{C}{\mathbf{V}_K} \right)}}} = K$.
\end{proposition}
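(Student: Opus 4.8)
The plan is to exploit two structural properties: that $\mathbf{\Psi}$ is a Gram matrix, hence symmetric and positive semidefinite, and that the sampling matrix $\mathbf{C}$ of Definition~1 satisfies $\mathbf{0}\preceq\mathbf{C}^{\top}\mathbf{C}\preceq\mathbf{I}_N$. First I would write $\mathbf{\Psi}=(\mathbf{C}\mathbf{V}_K)^{\top}(\mathbf{C}\mathbf{V}_K)=\mathbf{V}_K^{\top}(\mathbf{C}^{\top}\mathbf{C})\mathbf{V}_K$ and let $\mu_1\le\cdots\le\mu_K$ denote its eigenvalues; writing $\mathbf{\Psi}=\mathbf{B}^{\top}\mathbf{B}$ with $\mathbf{B}=\mathbf{C}\mathbf{V}_K$ shows immediately that every $\mu_i\ge 0$, so the task reduces to bounding the two endpoints of the spectrum of $\mathbf{\Psi}$ and then translating these into bounds on $\delta_i=1-\mu_{K+1-i}$.

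For the strict lower bound, I would invoke the hypothesis $\textrm{rank}(\mathbf{C}\mathbf{V}_K)=K$: since $\mathbf{\Psi}$ is $K\times K$ with $\textrm{rank}(\mathbf{\Psi})=\textrm{rank}(\mathbf{C}\mathbf{V}_K)=K$, it is nonsingular, so $\mu_1>0$ and hence $\mu_i>0$ for all $i$. For the upper bound $\mu_i\le 1$, I would recall from Definition~1 that $\mathbf{C}$ has exactly one $1$ in each row and that the $M$ chosen columns are distinct, so $\mathbf{C}^{\top}\mathbf{C}$ is the diagonal $0$--$1$ matrix with ones exactly in the positions indexed by $\mathcal{S}$; thus $\mathbf{0}\preceq\mathbf{C}^{\top}\mathbf{C}\preceq\mathbf{I}_N$, and conjugating by $\mathbf{V}_K$ (whose columns are orthonormal, $\mathbf{V}_K^{\top}\mathbf{V}_K=\mathbf{I}_K$) preserves the semidefinite ordering, giving $\mathbf{0}\preceq\mathbf{\Psi}\preceq\mathbf{I}_K$ and therefore $\mu_i\le 1$. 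Equivalently, for any unit vector $\mathbf{u}$ one has $\mathbf{u}^{\top}\mathbf{\Psi}\mathbf{u}=\|\mathbf{C}\mathbf{V}_K\mathbf{u}\|_2^2\le\|\mathbf{V}_K\mathbf{u}\|_2^2=1$, since retaining only a subset of the entries of a vector cannot increase its $\ell_2$ norm.

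Combining the two bounds gives $\mu_i\in(0,1]$ for every $i$, and since the eigenvalues of $\mathbf{\Phi}=\mathbf{I}-\mathbf{\Psi}$ are $\delta_i=1-\mu_{K+1-i}$, we obtain $0\le\delta_i<1$, as claimed. I do not expect a genuine obstacle here; the only points needing care are the justification of $\mathbf{C}^{\top}\mathbf{C}\preceq\mathbf{I}_N$, which relies on reading off the precise combinatorial structure of $\mathbf{C}$ from Definition~1 (one nonzero per row, distinct columns), and keeping track of the reversal in the eigenvalue ordering when passing from $\mathbf{\Psi}$ to $\mathbf{\Phi}$. Note that $0\le\delta_i<1$ is exactly the spectral condition $\rho(\mathbf{\Phi})=\delta_K<1$ required to apply Theorem~\ref{neumann} to $\mathbf{\Psi}=\mathbf{I}-\mathbf{\Phi}$ in the subsequent development.
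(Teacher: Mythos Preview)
Your proposal is correct and follows essentially the same approach as the paper: both arguments use that $\mathbf{C}^{\top}\mathbf{C}$ is a diagonal $0$--$1$ matrix together with the orthonormality of the columns of $\mathbf{V}_K$ to bound the Rayleigh quotient $\mathbf{x}^{\top}\mathbf{\Psi}\mathbf{x}\in(0,1]$ (the paper writes this as $\mathbf{b}_{\mathcal{S}}^{\top}\mathbf{b}_{\mathcal{S}}$ with $\mathbf{b}=\mathbf{V}_K\mathbf{x}$), invoke the full-rank hypothesis for strict positivity, and then translate to $\mathbf{\Phi}=\mathbf{I}-\mathbf{\Psi}$. Your phrasing via the Loewner order $\mathbf{0}\preceq\mathbf{C}^{\top}\mathbf{C}\preceq\mathbf{I}_N$ is a mild repackaging of the same computation.
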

\begin{proof}
The proof is in Appendix \ref{eigenScope}.
\end{proof}
{These two results will support us to prove the Theorem \ref{shift proposition} in the next subsection.}

\subsection{Augmented A-optimality Criterion}

We propose an augmented objective by adding a weighted identity matrix to \eqref{original formulation}:
\begin{equation}\label{shift formulation}
\begin{split}
\mathbf{C}^{*}=\mathop {\arg \min }\limits_{\mathbf{C} \in \mathbb{F}^{M\times N}}{\textrm{tr}}\left({[ {{{(\mathbf{C}{\mathbf{V}_K})}^{\top}}\mathbf{C}{\mathbf{V}_K}}+\mu\mathbf{I}]^{ - 1}}\right)
\end{split}
\end{equation}
where $\mu$ is a small weight (shift) parameter with $0<\mu<1$.
We discuss its design in details later.

Problem \eqref{shift formulation} is also difficult to optimize directly like equation \eqref{original formulation}. Before proposing its alternative simpler form, we first present the next Proposition which will be employed to prove the following Theorem.
\begin{proposition}
\label{eigenvalue}
Denote by ${\epsilon _1} \le \ldots  \le {\epsilon_M}$ the eigenvalues of ${\mathbf{T}_\mathcal{S}}$, then
\begin{equation}
0\leq \epsilon _i\leq1
\end{equation}
where ${\mathbf{T}}={\mathbf{V}_K}{\mathbf{V}^{\top}_K}$ is an ideal LP graph filter with cutoff frequency $K$.
\end{proposition}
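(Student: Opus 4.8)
The plan is to exploit the fact that $\mathbf{T}=\mathbf{V}_K\mathbf{V}_K^{\top}$ is an orthogonal projection. Since the columns of $\mathbf{V}_K$ are orthonormal, $\mathbf{V}_K^{\top}\mathbf{V}_K=\mathbf{I}_K$, so $\mathbf{T}^{\top}=\mathbf{T}$ and $\mathbf{T}^2=\mathbf{V}_K(\mathbf{V}_K^{\top}\mathbf{V}_K)\mathbf{V}_K^{\top}=\mathbf{T}$; hence $\mathbf{T}$ is symmetric idempotent with spectrum contained in $\{0,1\}$, which gives the Loewner bounds $\mathbf{0}\preceq\mathbf{T}\preceq\mathbf{I}_N$. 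First I would write $\mathbf{T}_{\mathcal{S}}$ as a compression of $\mathbf{T}$ by the sampling matrix, namely $\mathbf{T}_{\mathcal{S}}=\mathbf{C}\mathbf{T}\mathbf{C}^{\top}=(\mathbf{C}\mathbf{V}_K)(\mathbf{C}\mathbf{V}_K)^{\top}$, and note that $\mathbf{C}$ has orthonormal rows, i.e. $\mathbf{C}\mathbf{C}^{\top}=\mathbf{I}_M$ (each row of $\mathbf{C}$ is a distinct canonical basis vector). Conjugating the Loewner inequality by $\mathbf{C}$ then yields $\mathbf{0}=\mathbf{C}\mathbf{0}\mathbf{C}^{\top}\preceq\mathbf{C}\mathbf{T}\mathbf{C}^{\top}\preceq\mathbf{C}\mathbf{I}_N\mathbf{C}^{\top}=\mathbf{I}_M$, i.e. $\mathbf{0}\preceq\mathbf{T}_{\mathcal{S}}\preceq\mathbf{I}_M$, which is exactly $0\le\epsilon_i\le 1$ for every eigenvalue of $\mathbf{T}_{\mathcal{S}}$.

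Alternatively, and more in keeping with the preceding development, I would connect the statement directly to Proposition~\ref{infinite series}. The matrices $\mathbf{\Psi}=(\mathbf{C}\mathbf{V}_K)^{\top}(\mathbf{C}\mathbf{V}_K)$ and $\mathbf{T}_{\mathcal{S}}=(\mathbf{C}\mathbf{V}_K)(\mathbf{C}\mathbf{V}_K)^{\top}$ are of the form $\mathbf{B}^{\top}\mathbf{B}$ and $\mathbf{B}\mathbf{B}^{\top}$ with $\mathbf{B}=\mathbf{C}\mathbf{V}_K\in\mathbb{R}^{M\times K}$, so they share the same nonzero eigenvalues, and $\mathbf{T}_{\mathcal{S}}$ additionally carries the eigenvalue $0$ with multiplicity $M-K$ when $M>K$. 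Since Proposition~\ref{infinite series} shows the eigenvalues $\delta_i$ of $\mathbf{\Phi}=\mathbf{I}-\mathbf{\Psi}$ satisfy $0\le\delta_i<1$, the eigenvalues of $\mathbf{\Psi}$ lie in $(0,1]$, and therefore every eigenvalue of $\mathbf{T}_{\mathcal{S}}$ lies in $[0,1]$. A third, fully self-contained route is Cauchy's interlacing theorem: $\mathbf{T}_{\mathcal{S}}$ is a principal submatrix of the symmetric matrix $\mathbf{T}$, so its eigenvalues interlace those of $\mathbf{T}$, and since all eigenvalues of $\mathbf{T}$ equal $0$ or $1$, the interlaced values are trapped in $[0,1]$.

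There is no real obstacle here; the only point requiring a little care is the reduction of $\mathbf{T}_{\mathcal{S}}=\mathbf{T}_{\mathcal{S}\mathcal{S}}$ to the congruent form $\mathbf{C}\mathbf{T}\mathbf{C}^{\top}$ together with the identity $\mathbf{C}\mathbf{C}^{\top}=\mathbf{I}_M$, which is what lets the Loewner order pass through to the submatrix; equivalently, one must recall that $\mathbf{B}\mathbf{B}^{\top}$ and $\mathbf{B}^{\top}\mathbf{B}$ have the same nonzero spectrum so that the bound of Proposition~\ref{infinite series} transfers. I would present the argument via Proposition~\ref{infinite series} to keep the exposition consistent, possibly noting the projection/interlacing viewpoint in passing.
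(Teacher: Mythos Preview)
Your first route is exactly the paper's: the paper takes an arbitrary unit $\mathbf{x}\in\mathbb{R}^M$, sets $\mathbf{y}=\mathbf{C}^{\top}\mathbf{x}$ (so $\|\mathbf{y}\|_2=1$ via $\mathbf{C}\mathbf{C}^{\top}=\mathbf{I}_M$), observes that $\mathbf{T}=\mathbf{V}_K\mathbf{V}_K^{\top}$ has eigenvalues in $\{0,1\}$, and invokes the Rayleigh quotient to obtain $0\le \mathbf{x}^{\top}\mathbf{T}_{\mathcal{S}}\mathbf{x}=\mathbf{y}^{\top}\mathbf{T}\mathbf{y}\le 1$. That is precisely your Loewner-order congruence $\mathbf{0}\preceq\mathbf{C}\mathbf{T}\mathbf{C}^{\top}\preceq\mathbf{C}\mathbf{I}_N\mathbf{C}^{\top}$ written out pointwise.

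Your second and third routes are correct alternatives the paper does not use. One small caveat on the second: Proposition~\ref{infinite series} as stated carries the extra hypothesis $\mathrm{rank}(\mathbf{C}\mathbf{V}_K)=K$, which Proposition~\ref{eigenvalue} does not assume; the transfer still goes through because the inequality $0\le\mathbf{x}^{\top}\mathbf{\Psi}\mathbf{x}\le 1$ in the \emph{proof} of Proposition~\ref{infinite series} holds regardless of rank, but if you cite only its statement you would be importing an unnecessary assumption. The shared-nonzero-spectrum and Cauchy-interlacing arguments are both slightly slicker than the paper's Rayleigh-quotient computation, though they buy no extra generality here.
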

\begin{proof}
$\forall \mathbf{x} \in {\mathbb{R}^M}$ and $\left\| \mathbf{x} \right\|_2 = 1$,
$$ {\mathbf{x}^{\top}}\mathbf{T}_\mathcal{S} \mathbf{x} = {\left( \mathbf{C}^{\top}{\mathbf{x}} \right)^{\top}}\left( {{\mathbf{V}_K}{\mathbf{V}^{\top}_K}} \right)\left( \mathbf{C}^{\top}{\mathbf{x}} \right)={\mathbf{y}^{\top}}\left( {{\mathbf{V}_K}{\mathbf{V}^{\top}_K}} \right)\mathbf{y}
$$
where $\mathbf{y}=\mathbf{C}^{\top}{\mathbf{x}}$.
Hence $\mathbf{y}_\mathcal{S}={\mathbf{x}}$, $\mathbf{y}_{\mathcal{S}^{c}}={\mathbf{0}}$ and $\left\| \mathbf{y} \right\|_2 = 1$.

{Since $\mathbf{T}$ is symmetric and its eigen-decomposition can be written as $\mathbf{T}={{\mathbf{V}_K}{\mathbf{V}^{\top}_K}}={\mathbf{V}}\text{diag}\{1,...,1,0,...,0\}\mathbf{V}^{\top}$, its eigenvalues are 0 or 1.}
Then, due to the {Rayleigh quotient theorem} \cite{neumannseries}, $\forall \mathbf{y} \in {\mathbb{R}^N}$ and $\left\| \mathbf{y} \right\|_2 = 1$, $0 \le {\mathbf{y}^{\top}}({{\mathbf{V}_K}{\mathbf{V}^{\top}_K}})\mathbf{y} \leq 1$, which holds also for some specific $\mathbf{y}$.
Thus,
\begin{equation}\label{eigenTs2}
\begin{split}
0\leq {\mathbf{x}^{\top}}\mathbf{T}_\mathcal{S} \mathbf{x}\leq 1
 \end{split}
\end{equation}
which implies Proposition \ref{eigenvalue} via the Rayleigh quotient theorem.
\end{proof}

Now, we propose and prove the next Theorem by employing the results from Theorem \ref{neumann}, Proposition \ref{infinite series} and Proposition \ref{eigenvalue}.
\begin{theorem}\label{shift proposition}
The augmented A-optimal objective \eqref{shift formulation} is equivalent to
\begin{equation}\label{exchange formulation}
\begin{split}
{\mathcal{S}^*} = \mathop {\arg \min }_{\mathcal{S}:|\mathcal{S}|=M} \emph{\textrm{tr}}\left({\mathbf{T}_\mathcal{S}}+ \mu\mathbf{I} \right)^{-1},
 \end{split}
\end{equation}
if the selected matrix ${{ \mathbf{C}{\mathbf{V}_K}}}$ is full column rank.
\end{theorem}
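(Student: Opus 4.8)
The plan is to show that the two objectives, one over sampling matrices $\mathbf{C}$ and one over index sets $\mathcal{S}$, have identical trace values term-by-term, so that minimizing one is the same as minimizing the other. The bridge is the identity $(\mathbf{C}\mathbf{V}_K)^{\top}\mathbf{C}\mathbf{V}_K = \mathbf{V}_K^{\top}\mathbf{C}^{\top}\mathbf{C}\mathbf{V}_K$, where $\mathbf{C}^{\top}\mathbf{C}$ is the $N\times N$ diagonal 0-1 matrix with ones exactly on $\mathcal{S}$. First I would write $\mathbf{\Psi} = \mathbf{V}_K^{\top}\mathbf{C}^{\top}\mathbf{C}\mathbf{V}_K$ and observe that $\textrm{tr}\left([\mathbf{\Psi}+\mu\mathbf{I}]^{-1}\right) = \sum_{i=1}^{K}\frac{1}{\delta_i'+\mu}$ where $\delta_i'$ are the eigenvalues of $\mathbf{\Psi}$; by Proposition~\ref{infinite series} these lie in $[0,1)$, so with $0<\mu<1$ the shifted matrix $\mathbf{\Psi}+\mu\mathbf{I}$ is positive definite and the inverse and its trace are well defined.

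Next I would bring in the filter matrix. Since $\mathbf{T} = \mathbf{V}_K\mathbf{V}_K^{\top}$, the submatrix $\mathbf{T}_\mathcal{S}$ equals $\mathbf{C}\mathbf{T}\mathbf{C}^{\top} = \mathbf{C}\mathbf{V}_K\mathbf{V}_K^{\top}\mathbf{C}^{\top} = (\mathbf{C}\mathbf{V}_K)(\mathbf{C}\mathbf{V}_K)^{\top}$, which is the $M\times M$ Gram matrix sharing all nonzero eigenvalues with the $K\times K$ matrix $\mathbf{\Psi}=(\mathbf{C}\mathbf{V}_K)^{\top}(\mathbf{C}\mathbf{V}_K)$. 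The key linear-algebra fact is that for any matrix $\mathbf{B}$ of size $M\times K$ (here $\mathbf{B}=\mathbf{C}\mathbf{V}_K$), $\mathbf{B}\mathbf{B}^{\top}$ and $\mathbf{B}^{\top}\mathbf{B}$ have the same nonzero eigenvalues with the same multiplicities; when $\textrm{rank}(\mathbf{C}\mathbf{V}_K)=K$ and $M\geq K$, $\mathbf{\Psi}$ has exactly $K$ eigenvalues $\epsilon_{M-K+1},\dots,\epsilon_M$ all positive, while $\mathbf{T}_\mathcal{S}$ additionally has the eigenvalue $0$ with multiplicity $M-K$. All these eigenvalues lie in $[0,1]$ by Proposition~\ref{eigenvalue} (and in fact the nonzero ones are in $(0,1]$ by full rank), so again $\mathbf{T}_\mathcal{S}+\mu\mathbf{I}$ is invertible for $0<\mu<1$.

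I would then compute both traces explicitly in terms of these shared eigenvalues. Writing $\{\epsilon_j\}$ for the eigenvalues of $\mathbf{T}_\mathcal{S}$ with $\epsilon_1=\cdots=\epsilon_{M-K}=0$ and $\epsilon_{M-K+1},\dots,\epsilon_M$ the eigenvalues of $\mathbf{\Psi}$:
\begin{equation}
\textrm{tr}\left(\mathbf{T}_\mathcal{S}+\mu\mathbf{I}\right)^{-1} = \sum_{j=1}^{M}\frac{1}{\epsilon_j+\mu} = \frac{M-K}{\mu} + \sum_{j=M-K+1}^{M}\frac{1}{\epsilon_j+\mu} = \frac{M-K}{\mu} + \textrm{tr}\left(\mathbf{\Psi}+\mu\mathbf{I}\right)^{-1}.
\end{equation}
Hence the two objectives differ only by the additive constant $(M-K)/\mu$, which does not depend on the choice of $\mathcal{S}$ (equivalently $\mathbf{C}$) as long as $|\mathcal{S}|=M$ and $\textrm{rank}(\mathbf{C}\mathbf{V}_K)=K$. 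Therefore the minimizing sampling set is the same for \eqref{shift formulation} and \eqref{exchange formulation}, establishing the equivalence. Finally I would note the one-to-one correspondence between a qualified $\mathbf{C}\in\mathbb{F}^{M\times N}$ and its support set $\mathcal{S}$ with $|\mathcal{S}|=M$, so the two arg-min problems are literally over the same feasible set.

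The main obstacle is the bookkeeping of eigenvalue multiplicities: one must argue carefully that $\mathbf{T}_\mathcal{S}=\mathbf{C}\mathbf{V}_K\mathbf{V}_K^{\top}\mathbf{C}^{\top}$ and that $\mathbf{B}\mathbf{B}^{\top}$ versus $\mathbf{B}^{\top}\mathbf{B}$ share exactly the nonzero spectrum, so that the leftover $M-K$ zero eigenvalues of $\mathbf{T}_\mathcal{S}$ contribute precisely the constant $(M-K)/\mu$ after the shift — this is where the full-column-rank hypothesis is essential, since without it the count of zero eigenvalues would change with $\mathcal{S}$ and the difference would no longer be constant. The invertibility claims then follow immediately from Propositions~\ref{infinite series} and \ref{eigenvalue} together with $0<\mu<1$.
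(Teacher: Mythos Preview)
Your proof is correct and reaches the same identity as the paper, namely
\[
\textrm{tr}\left[(\mathbf{C}\mathbf{V}_K)^{\top}\mathbf{C}\mathbf{V}_K+\mu\mathbf{I}\right]^{-1}
=\frac{K-M}{\mu}+\textrm{tr}\left(\mathbf{T}_\mathcal{S}+\mu\mathbf{I}\right)^{-1},
\]
but by a genuinely different and much shorter route. The paper expands $(\mathbf{\Psi}+\mu\mathbf{I})^{-1}$ as a Neumann series, applies a binomial expansion to each term, uses the cyclic trace identity $\textrm{tr}(\mathbf{\Psi}^d)=\textrm{tr}(\mathbf{T}_\mathcal{S}^d)$ to swap $\mathbf{\Psi}$ for $\mathbf{T}_\mathcal{S}$ inside every power, re-sums the series, and then applies Neumann's theorem a second time to collapse back to $(\mathbf{T}_\mathcal{S}+\mu\mathbf{I})^{-1}$. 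You instead observe directly that $\mathbf{T}_\mathcal{S}=(\mathbf{C}\mathbf{V}_K)(\mathbf{C}\mathbf{V}_K)^{\top}$ and $\mathbf{\Psi}=(\mathbf{C}\mathbf{V}_K)^{\top}(\mathbf{C}\mathbf{V}_K)$ share nonzero spectra, so the trace difference is simply $(M-K)/\mu$ coming from the extra zero eigenvalues. Your argument is more elementary and avoids any convergence bookkeeping; the paper's argument, while longer, keeps the Neumann-series machinery front and center, which is thematically aligned with the rest of the paper (the same expansion is reused in the reconstruction section).

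One small correction to your closing remark: the full-column-rank hypothesis is \emph{not} actually needed for the constant-difference identity. If $\textrm{rank}(\mathbf{C}\mathbf{V}_K)=r\le K$, then $\mathbf{\Psi}$ has $K-r$ zero eigenvalues and $\mathbf{T}_\mathcal{S}$ has $M-r$ zero eigenvalues, and the difference of the shifted-inverse traces is still $(M-r)/\mu-(K-r)/\mu=(M-K)/\mu$, independent of $\mathcal{S}$. The rank condition in the theorem statement is inherited from the paper's standing assumption (and is what makes Proposition~\ref{infinite series} give strict positivity of $\mathbf{\Psi}$), but your eigenvalue-counting argument goes through without it.
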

\begin{proof}
Let $\tilde{\mathbf{\Psi}}= {{{(\mathbf{C}{\mathbf{V}_K})}^{\top}}\mathbf{C}{\mathbf{V}_K}}+\mu\mathbf{I}=\mathbf{\Psi}+\mu\mathbf{I}$, and $\tilde{\mathbf{\Phi}}=\mathbf{I}-\tilde{\mathbf{\Psi}}=\mathbf{\Phi}-\mu\mathbf{I}$.
{We know from Proposition 1 that $0 \le {\delta _i}<1$.}
Hence the eigenvalues of $\tilde{\mathbf{\Phi}}$ are in [$-\mu$,$1-\mu$).
Since the shift parameter $0<\mu< 1$, $\rho(\tilde{\mathbf{\Phi}})<1$, and {by applying Theorem \ref{neumann}}
\begin{equation}\label{shift Neumann series}
\begin{split}
[ {{{(\mathbf{C}{\mathbf{V}_K})}^{\top}}\mathbf{C}{\mathbf{V}_K}}+\mu\mathbf{I}]^{ - 1}=\sum_{l=0}^\infty(\mathbf{I}-\tilde{\mathbf{\Psi}})^l
\end{split}
\end{equation}
Leveraging on a property of the trace operator, we write
\begin{align}\label{trace property}
&\hspace{0.3cm}\textrm{tr}(\mathbf{I}-\tilde{\mathbf{\Psi}})^l
=\textrm{tr}\left[(1-\mu)\mathbf{I}-\mathbf{\Psi}\right]^l\nonumber\\
&\hspace{0cm}=\textrm{tr} \left[(1-\mu)^l\mathbf{I}+\sum\limits_{d = 1}^{l}\tbinom{l}{d}(1-\mu)^{(l-d)}(-\mathbf{\Psi})^{d}\right]\nonumber\\
&\hspace{0cm}\mathop =\limits^{(a)}(1-\mu)^lK+\sum\limits_{d = 1}^{l}\tbinom{l}{d}(1-\mu)^{(l-d)}\textrm{tr}\left[(-{\mathbf{T}_\mathcal{S}})^{d}\right]\\
&\hspace{0cm}=(1-\mu)^l(K-M)+\sum\limits_{d = 0}^{l}\tbinom{l}{d}(1-\mu)^{(l-d)}\textrm{tr}\left[(-{\mathbf{T}_\mathcal{S}})^{d}\right]\nonumber\\
&\hspace{0cm}=(1-\mu)^l(K-M)+\textrm{tr}\left[(1-\mu)\mathbf{I}-{\mathbf{T}_\mathcal{S}}\right]^l\nonumber
\end{align}
where $(a)$ holds because $\textrm{tr}(\mathbf{\Psi}^{d})=\textrm{tr}({\mathbf{V}^{\top}_K}{ {\mathbf{C}}^{\top}}\mathbf{C}{\mathbf{V}_K}...{\mathbf{V}^{\top}_K}{ {\mathbf{C}}^{\top}}\\
\mathbf{C}{\mathbf{V}_K})=\textrm{tr}(
\mathbf{C}{\mathbf{V}_K}{\mathbf{V}^{\top}_K}{{\mathbf{C}}^{\top}}...\mathbf{C}{\mathbf{V}_K}{\mathbf{V}^{\top}_K}{ {\mathbf{C}}^{\top}})
 =\textrm{tr}(\mathbf{T}^{d}_{\mathcal{S}})$.

Hence,
\begin{align}\label{infinite trace sum}
&\hspace{0.3cm}\textrm{tr}\left[\sum_{l=0}^\infty(\mathbf{I}-\tilde{\mathbf{\Psi}})^l\right]=
\sum_{l=0}^\infty\textrm{tr}(\mathbf{I}-\tilde{\mathbf{\Psi}})^l\nonumber\\
&\hspace{0cm}=\sum_{l=0}^\infty(1-\mu)^l(K-M)+\textrm{tr}\left[(1-\mu)\mathbf{I}-{\mathbf{T}_\mathcal{S}}\right]^l\\
&\hspace{0cm}=(K-M)\frac{1-(1-\mu)^{\infty}}{\mu}+
\textrm{tr}\sum_{l=0}^\infty\left[(1-\mu)\mathbf{I}-{\mathbf{T}_\mathcal{S}}\right]^l\nonumber\\
&\hspace{0cm}\mathop =\limits^{(a)}\frac{K-M}{\mu}+
\textrm{tr}\sum_{l=0}^\infty\left[\mathbf{I}-({\mathbf{T}_\mathcal{S}}+\mu\mathbf{I})\right]^l\nonumber
\end{align}
where equality $(a)$ holds since $0<\mu< 1$.

From Proposition \ref{eigenvalue}, the eigenvalues of $\mathbf{I}-({\mathbf{T}_\mathcal{S}}+\mu\mathbf{I})$ are in [$-\mu$,$1-\mu$].
Again, since the shift parameter $\mu$ is small and positive, $\rho\left[\mathbf{I}-({\mathbf{T}_\mathcal{S}}+\mu\mathbf{I})\right]\leq1-\mu<1$.
{Using Theorem \ref{neumann}}, we have
\begin{equation}\label{back transformation}
\begin{split}
\sum_{l=0}^\infty\left[\mathbf{I}-({\mathbf{T}_\mathcal{S}}+\mu\mathbf{I})\right]^l=
({\mathbf{T}_\mathcal{S}}+\mu\mathbf{I})^{-1}
\end{split}
\end{equation}

Combining \eqref{shift Neumann series}, \eqref{infinite trace sum} and \eqref{back transformation},
\begin{align}\label{new proposition}
\textrm{tr}\left({[ {{{(\mathbf{C}{\mathbf{V}_K})}^{\top}}\mathbf{C}{\mathbf{V}_K}}+\mu\mathbf{I}]^{ - 1}}\right)=\frac{K-M}{\mu}+\textrm{tr}\left({\mathbf{T}_\mathcal{S}}+ \mu\mathbf{I} \right)^{-1}\nonumber
\end{align}
which implies Theorem \ref{shift proposition} because $M$, $K$ and $\mu$ are all constant in a sampling problem.
\end{proof}

{Optimizing the new problem \eqref{exchange formulation} requires computation of an ideal LP filter rather than first $K$ eigenvectors of $\mathbf{L}$.
We discuss an efficient approximation of the LP filter next. }

\subsection{Ideal Low-pass Graph Filter Approximation}

One classical method is to approximate an ideal LP filter $\mathbf{T}$ in \eqref{exchange formulation} by using a Chebyshev matrix polynomial of $\mathbf{L}$, \textit{i.e.}, $\mathbf{T}^{\textrm{Poly}}=\sum _{i=1}^{n}\left( \sum_{j=0}^q \beta_j\lambda^j_i\right)\mathbf{v}_i\mathbf{v}^{\top}_i=\sum_{j=0}^q \beta_j \mathbf{L}^j$ \cite{wavelet}.
Another method is {to apply the Jacobi eigenvalue algorithm}: approximately diagonalize $\mathbf{L}$ with an estimated eigenvector matrix $\tilde{\mathbf{V}}=\mathbf{S}_1...\mathbf{S}_J$ \cite{France Givens}.
Specifically, the goal is to solve the following optimization problem:
\begin{equation}\label{GivensOpt}
\mathop {\min}_{\hat{\pmb{\Lambda}},\mathbf{S}_1,...,\mathbf{S}_J}~~~\|\mathbf{L}-\mathbf{S}_1...\mathbf{S}_J
\hat{\pmb{\Lambda}}\mathbf{S}_J^{\top}...\mathbf{S}_1^{\top}\|^2_F
\end{equation}
where $\hat{\pmb{\Lambda}}$ is constrained to be a diagonal matrix {and $\mathbf{S}_i$ are constrained to be Givens rotation matrices.}

A truncated Jacobi algorithm \cite{Jacobi} was adopted to optimize \eqref{GivensOpt} iteratively.
%
{With the optimized $\tilde{\mathbf{V}}$ where $\mathbf{L}=\tilde{\mathbf{V}}\hat{\pmb{\Lambda}}\tilde{\mathbf{V}}^{\top}$},
an ideal LP filte $\mathbf{T}$ can be implemented as $\mathbf{T}^{\textrm{FGFT}}=\tilde{\mathbf{V}}_K\tilde{\mathbf{V}}^{\top}_K$.

We adopt $\mathbf{T}^{\textrm{FGFT}}$ for the following reasons:
\begin{enumerate}[(1)]
  \item \cite{France Givens} claimed the superiority of $\mathbf{T}^{\textrm{FGFT}}$ compared with $\mathbf{T}^{\textrm{Poly}}$ when approximating ideal graph LP filters; specifically, $\frac{\left\|\mathbf{T}-\mathbf{T}^{\textrm{FGFT}}\right\|_F}{\left\|\mathbf{T}\right\|_F} < \frac{\left\|\mathbf{T}-\mathbf{T}^{\textrm{Poly}}\right\|_F}{\left\|\mathbf{T}\right\|_F}$ under the same approximation complexity \cite{France Givens}.
  In our work, $\mathbf{T}$ is an ideal graph LP filter. Moreover, if the bandwidth of graph signals $K$ varies in one graph, we can reuse the same $\tilde{\mathbf{V}}$ when realizing $\mathbf{T}^{\textrm{FGFT}}$.
  \item  A sufficient condition of Proposition \ref{eigenvalue} is that the eigenvalues of $\mathbf{T}$ are in $\{0,1\}$, based on which we can claim the eigenvalues of $\mathbf{I}-({\mathbf{T}_\mathcal{S}}+\mu\mathbf{I})$ are in [$-\mu$,$1-\mu$] in Theorem 2.
  Then, \eqref{back transformation} can hold with a small positive parameter $\mu$.
  Small $\mu$ is also required for \eqref{shift formulation} to approximate the original A-optimality criterion \eqref{original formulation} well.
  Because $\tilde{\mathbf{V}}$ estimated from Givens matrices is orthogonal, Proposition \ref{eigenvalue} also holds for $\mathbf{T}^{\textrm{FGFT}}_{\mathcal{S}}$.
  In contrast, the eigenvalue scope of $\mathbf{T}^{\textrm{Poly}}_{\mathcal{S}}$ is unpredictable.
\end{enumerate}

We now write our final augmented objective as
\begin{equation}\label{final formula}
\begin{split}
{\mathcal{S}^*} = \mathop {\arg \min }_{\mathcal{S}:|\mathcal{S}|=M} \textrm{tr}\left({\mathbf{T}^{\textrm{FGFT}}_\mathcal{S}}+ \mu\mathbf{I} \right)^{-1}
 \end{split}
\end{equation}
which requires just an approximated LP filter operator rather than eigen-decomposition of the graph Laplacian operator.
For simplicity, we write $\mathbf{G}=\mathbf{T}^{\textrm{FGFT}}+ \mu\mathbf{I}$ in the sequel.

\subsection{Selection of Shift Parameter $\mu$}
We now discuss how to select an appropriate shift $\mu$.
To well approximate the original criterion \eqref{original formulation}, the shift $\mu$ ought to be as small as possible, but a small $\mu$ would cause the matrix inverse computation in \eqref{final formula} to be unstable.
To ensure numerical stability, we can bound the \textit{condition number} $\kappa(\mathbf{G})$ of $\mathbf{G}$, where $\kappa(\mathbf{G})=\lambda_{\text{max}}(\mathbf{G})/\lambda_{\text{min}}(\mathbf{G})$, using $\mu$ as follows \cite{condition number}.
As discussed, Proposition \ref{eigenvalue} also holds for $\mathbf{T}^{\textrm{FGFT}}_{\mathcal{S}}$, so $\mu\leq\lambda\left(\mathbf{G}_{\mathcal{S}}\right)\leq1+\mu$.
Hence we can bound $\kappa(\mathbf{G}_{\mathcal{S}})$ as follows:
\begin{equation}\label{CNbound}
    \kappa(\mathbf{G}_{\mathcal{S}})=\frac{\lambda_{\text{max}}(\mathbf{G}_{\mathcal{S}})}
    {\lambda_{\text{min}}(\mathbf{G}_{\mathcal{S}})}\leq\frac{1+\mu}{\mu}\leq \kappa_0
\end{equation}
where $\kappa_0$ is the upper bound of an acceptable condition number.
From the latter part, $\mu\geq\frac{1}{\kappa_0-1}$.
Because $\mu$ should be as small as possible, the optimal $\mu$ is $\mu^{*}=\frac{1}{\kappa_0-1}$.
We will set $\kappa_0=100$ in our experiments and compare its performance with other design strategies via simulations. 
\vspace{-0.05in}
\section{Fast Greedy Sampling}
\label{sec:greedy}
It is difficult to compute an optimal solution of our proposed criterion \eqref{final formula} with reasonable complexity since the problem is combinatorial.
Greedy algorithms have been commonly used in the graph sampling literature \cite{AO,SCsampling}.
However, if \eqref{final formula} is minimized na\"{i}vely using a greedy scheme, then the algorithm needs to perform one matrix inversion to evaluate each candidate sample set $\mathcal{S}$.
To alleviate this computation burden, we propose an accelerated greedy algorithm to avoid the matrix inverse operation.

\subsection{Accelerated Greedy Sampling}

Our accelerated algorithm is based on the \emph{block matrix inversion  formula}, reviewed as follows \cite{blockwise}.
\begin{lemma}\label{matrix Lemma1}
The inverse of matrix $\mathbf{M}$ can be computed using the inverse of sub-matrix $\mathbf{A}$ and the inverse of its Schur complement, \textit{i.e.},
\begin{align}\label{inversion lemma1}
&\hspace{0.3cm}
 \mathbf{M}^{-1} = \left[ \begin{array}{l}
 \mathbf{A}~~~{\mathbf{U}} \\
 {\mathbf{V}}~~~{\mathbf{C}} \\
 \end{array} \right]^{-1}\\
 &\hspace{0cm}=\left[ \begin{array}{l}
 \mathbf{A}^{-1}+\mathbf{A^{-1}UH^{-1}VA^{-1}}~~~-{\mathbf{A^{-1}UH^{-1}}} \nonumber\\
 -{\mathbf{H^{-1}VA^{-1}}}~~~~~~~~~~~~~~~~~~~~~~~~~{\kern 1pt}{\mathbf{H^{-1}}} \\
 \end{array} \right]
\end{align}
where $\mathbf{H}=\mathbf{C-VA^{-1}U}$ is the Schur complement $\mathbf{M} / \mathbf{A}$ of sub-matrix $\mathbf{A}$ of matrix $\mathbf{M}$.
\end{lemma}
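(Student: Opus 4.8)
The plan is to prove Lemma \ref{matrix Lemma1} via a block LDU factorization of $\mathbf{M}$, then invert the factors individually and recombine them. First I would verify, by a single block multiplication using only that $\mathbf{A}$ is invertible, that
\[
\mathbf{M} = \begin{bmatrix} \mathbf{I} & \mathbf{0} \\ \mathbf{V}\mathbf{A}^{-1} & \mathbf{I} \end{bmatrix}
\begin{bmatrix} \mathbf{A} & \mathbf{0} \\ \mathbf{0} & \mathbf{H} \end{bmatrix}
\begin{bmatrix} \mathbf{I} & \mathbf{A}^{-1}\mathbf{U} \\ \mathbf{0} & \mathbf{I} \end{bmatrix},
\]
where $\mathbf{H} = \mathbf{C} - \mathbf{V}\mathbf{A}^{-1}\mathbf{U}$; the $(2,2)$ block of the product collapses because $\mathbf{V}\mathbf{A}^{-1}\mathbf{U} + \mathbf{H} = \mathbf{C}$, and all other blocks match $\mathbf{M}$ immediately.

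Next, since a product of invertible matrices is invertible with the inverses taken in reverse order, and since each unit block-triangular factor is inverted simply by negating its single off-diagonal block (these are elementary block row/column operations), I would write
\[
\mathbf{M}^{-1} = \begin{bmatrix} \mathbf{I} & -\mathbf{A}^{-1}\mathbf{U} \\ \mathbf{0} & \mathbf{I} \end{bmatrix}
\begin{bmatrix} \mathbf{A}^{-1} & \mathbf{0} \\ \mathbf{0} & \mathbf{H}^{-1} \end{bmatrix}
\begin{bmatrix} \mathbf{I} & \mathbf{0} \\ -\mathbf{V}\mathbf{A}^{-1} & \mathbf{I} \end{bmatrix},
\]
where inverting the block-diagonal middle factor block by block is precisely where the hypothesis that the Schur complement $\mathbf{H}$ is invertible is used. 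Carrying out this product — first multiplying the right two factors to get $\bigl[\begin{smallmatrix} \mathbf{A}^{-1} & \mathbf{0} \\ -\mathbf{H}^{-1}\mathbf{V}\mathbf{A}^{-1} & \mathbf{H}^{-1}\end{smallmatrix}\bigr]$, then pre-multiplying by the left factor — yields exactly the four blocks claimed in \eqref{inversion lemma1}: the $(2,2)$ block $\mathbf{H}^{-1}$, the $(2,1)$ block $-\mathbf{H}^{-1}\mathbf{V}\mathbf{A}^{-1}$, the $(1,2)$ block $-\mathbf{A}^{-1}\mathbf{U}\mathbf{H}^{-1}$, and the $(1,1)$ block $\mathbf{A}^{-1} + \mathbf{A}^{-1}\mathbf{U}\mathbf{H}^{-1}\mathbf{V}\mathbf{A}^{-1}$.

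There is no genuine obstacle here; the only points requiring care are the standing invertibility assumptions ($\mathbf{A}$ invertible so the factorization exists, $\mathbf{H}$ invertible so the middle factor can be inverted) and the bookkeeping of the block products. As an alternative route, and as a sanity check, one can bypass the factorization and simply verify that the right-hand side $\mathbf{N}$ of \eqref{inversion lemma1} satisfies $\mathbf{M}\mathbf{N} = \mathbf{N}\mathbf{M} = \mathbf{I}$ by direct block multiplication, repeatedly using $\mathbf{H} + \mathbf{V}\mathbf{A}^{-1}\mathbf{U} = \mathbf{C}$ to cancel terms; this works but is more tedious, which is why I favor the factorization argument.
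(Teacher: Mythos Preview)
Your proof via block LDU factorization is correct and is the standard textbook argument for this identity. Note, however, that the paper does not actually supply a proof of Lemma~\ref{matrix Lemma1}: it merely states the block matrix inversion formula as a known result and cites \cite{blockwise}. So there is nothing in the paper to compare against; your write-up goes beyond what the paper does by giving a self-contained derivation, which is fine.
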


We observe that matrix $\mathbf{G}$ is symmetric, and under some permutation, its sub-matrix $\mathbf{G}_{\mathcal{S}\cup\{i\}}$ can be expressed as
\begin{align}\label{inversion lemma1}
\mathbf{G}_{\mathcal{S}\cup\{i\}} = \left[ \begin{array}{l}
 \mathbf{G}_\mathcal{S}~~~~~~~{\mathbf{G}}_{\mathcal{S},\{i\}} \\
{\mathbf{G}}_{\{i\},\mathcal{S}}~~~{\mathbf{G}}_{ii} \\
 \end{array} \right]\doteq \left[ \begin{array}{l}
 \mathbf{G}_\mathcal{S}~~~\mathbf{g}_i \\
~\mathbf{g}^{\top}_i~~\mathbf{G}_{ii} \\
 \end{array} \right]
\end{align}
where $\mathbf{g}_i$ denotes the partial vector of $i$-th column of $\mathbf{G}$ indexed by $\mathcal{S}$.

From Lemma \ref{matrix Lemma1}, we can compute the inverse of $\mathbf{G}_{\mathcal{S}\cup\{i\}}$ as
\begin{align}\label{inversion lemma1}
\mathbf{G}^{-1}_{\mathcal{S}\cup\{i\}} = \left[ \begin{array}{l}
\mathbf{G}^{-1}_{\mathcal{S}}+h^{-1}\mathbf{G}^{-1}_{\mathcal{S}}\mathbf{g}_i\mathbf{g}^{\top}_i\mathbf{G}^{-1}_{\mathcal{S}}
~~~~-h^{-1}\mathbf{G}^{-1}_{\mathcal{S}}\mathbf{g}_i \\
~~~~~~~~~-h^{-1}\mathbf{g}^{\top}_i\mathbf{G}^{-1}_{\mathcal{S}}~~~~~~~~~~~~~~~h^{-1} \\
 \end{array} \right]
\end{align}
where $h=\mathbf{G}_{ii}-\mathbf{g}^{\top}_i\mathbf{G}^{-1}_{\mathcal{S}}\mathbf{g}_i$ is actually a scalar here.

$\mathbf{G}^{-1}_{\mathcal{S}\cup \{i\}}$ can thus be computed using $\mathbf{G}^{-1}_{\mathcal{S}}$ stored in the last iteration.
Instead of computing $\mathbf{G}^{-1}_{\mathcal{S}\cup \{i\}}$ directly, \eqref{inversion lemma1} performs {only two} matrix-vector products and then stacks the results.
We outline our algorithm in pseudo-code in {Algorithm \ref{GFS}} and call it \emph{graph filter submatrix} (GFS)-based sampling algorithm.
We note that GFS requires neither computation of eigenvectors of $\mathbf{L}$ nor large matrix inversion, and it produces exactly the same sampling set as na\"{i}vely optimizing problem \eqref{final formula} with a greedy scheme.

We next study the performance bound of this greedy solution via super-modularity analysis.


\subsection{Super-modularity Analysis}

Let $f(\mathcal{S})=\textrm{tr}\left({\mathbf{T}^{\textrm{FGFT}}_\mathcal{S}}+ \mu\mathbf{I} \right)^{-1}$ and
$g(\mathcal{S})=\text{tr}[(\mathbf{C}\tilde{\mathbf{V}}_K)^{\top}\mathbf{C}\tilde{\mathbf{V}}_K+\mu\mathbf{I}]
^{-1}$.
Because Proposition \ref{eigenvalue} holds for $\mathbf{T}^{\textrm{FGFT}}_{\mathcal{S}}$, following a similar proof to one in Theorem \ref{shift proposition}, we have $f(\mathcal{S})=g(\mathcal{S})+\frac{M-K}{\mu}$.
Hence, the solution of greedily optimizing $f(\mathcal{S})$ is exactly the same as that of greedily optimizing $g(\mathcal{S})$, but with much lower complexity.
Since $g(\mathcal{S})$ approximates the real MSE value, it is reasonable to evaluate the sub-optimality of its greedy solution.
\begin{algorithm}[tp]
\caption{GFS graph signal sampling algorithm}
\label{GFS}
\textbf{Input:} Graph operator $\mathbf{L}$, {bandwidth $K$}, budget ${M}$ and parameter $\mu$. $\mathcal{S}=\{\varnothing\}$.
\begin{algorithmic}[1]
\STATE Compute $\tilde{\mathbf{V}}=\mathbf{S}_1...\mathbf{S}_J$ of $\mathbf{L}$ via \eqref{GivensOpt}.
\STATE Compute $\mathbf{T}^{\textrm{FGFT}}=\tilde{\mathbf{V}}_K\tilde{\mathbf{V}}^{\top}_K$ and $\mathbf{G}=\mathbf{T}^{\textrm{FGFT}}+\mu\mathbf{I}$.
\STATE Select the first node by $u = \mathop {\arg \max }\limits_{i \in \mathcal{V}} \mathbf{G}_{ii} $, update $\mathbf{G}^{-1}_{\mathcal{S}}={\mathbf{G}^{-1}_{uu}}$ and ${\mathcal{S} \leftarrow \mathcal{S} \cup \left\{ u \right\}}$
\STATE\textbf{While} $\left| \mathcal{S} \right| < M$
\STATE$\forall i \in \mathcal{S}^{c}$, compute\\
$\mathbf{g}_i={\mathbf{G}}_{\mathcal{S},\{i\}}$ and $h=\mathbf{G}_{ii}-\mathbf{g}^{\top}_i\mathbf{G}^{-1}_{\mathcal{S}}\mathbf{g}_i$\\
$\mathbf{G}^{-1}_{\mathcal{S}\cup\{i\}} = \left[ \begin{array}{l}
\mathbf{G}^{-1}_{\mathcal{S}}+h^{-1}\mathbf{G}^{-1}_{\mathcal{S}}\mathbf{g}_i\mathbf{g}^{\top}_i\mathbf{G}^{-1}_{\mathcal{S}}
~-h^{-1}\mathbf{G}^{-1}_{\mathcal{S}}\mathbf{g}_i \\
~~~~~~-h^{-1}\mathbf{g}^{\top}_i\mathbf{G}^{-1}_{\mathcal{S}}~~~~~~~~~~~~~~~h^{-1} \\
 \end{array}\right]$\\
\STATE Select $u = \mathop {\arg \min }\limits_{i \in \mathcal{S}^{c}} \textrm{tr}\left[\mathbf{G}^{-1}_{\mathcal{S}\cup\{i\}}\right] $\\
\STATE Update $\mathbf{G}^{-1}_{\mathcal{S}}=\mathbf{G}^{-1}_{\mathcal{S}\cup\{u\}}$ and ${\mathcal{S} \leftarrow \mathcal{S} \cup \left\{ u \right\}}$\\
\STATE \textbf{end While}\\
\STATE Return $\mathcal{S}$ and $\mathbf{G}^{-1}_{\mathcal{S}}$
\end{algorithmic}
\end{algorithm}

\begin{table*}
\caption{Complexity Comparison of Different Sampling Strategies}
\label{complexity}
\begin{center}
\begin{tabular}{ccc}
 \hline
\hline
  \textbf{}&{Preparation}&{Selection}\\\hline
Spectral Proxies &\emph{NONE}&$\mathcal{O}\left({k\left| \mathcal{E} \right|M{T_2}\left( k \right)}+NM\right)$ \\
E-optimal &$\mathcal{O}\left( {\left( {\left| \mathcal{E} \right|M + R{M^3}} \right){T_1}} \right)$&$\mathcal{O}\left(NM^{4}\right)$\\
{MFN} &{$\mathcal{O}\left( {\left( {\left| \mathcal{E} \right|M + R{M^3}} \right){T_1}} \right)$}&{$\mathcal{O}\left(NM^{4}\right)$}\\
MIA&$\mathcal{O}(qN|\mathcal{E}|)$&$\mathcal{O}\left(NLM^{3.373}\right)$\\
Proposed GFS&$\mathcal{O}(N^{2}\text{log}^{2}N)$&$\mathcal{O}\left(NM^{3}\right)$\\\hline\hline
 \end{tabular}
 \end{center}
\end{table*}

\begin{lemma}\label{submodular}
 The set function $g(\mathcal{S})=\emph{\text{tr}}[(\mathbf{C}\tilde{\mathbf{V}}_K)^{\top}\mathbf{C}\tilde{\mathbf{V}}_K+\mu\mathbf{I}]
^{-1}$ is (i) monotone decreasing and (ii) $\alpha$-supermodular with
 \begin{equation}\label{alpha}
    \alpha \geq \frac{\mu^3(\mu+2)}{(\mu+1)^4}
 \end{equation}
\end{lemma}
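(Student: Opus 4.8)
The plan is to prove the two claims separately, reducing each to a statement about how the eigenvalues of $\mathbf{H}_{\mathcal{S}} \doteq (\mathbf{C}\tilde{\mathbf{V}}_K)^{\top}\mathbf{C}\tilde{\mathbf{V}}_K + \mu\mathbf{I}$ change when we add one index to $\mathcal{S}$. Writing $g(\mathcal{S}) = \mathrm{tr}\,\mathbf{H}_{\mathcal{S}}^{-1} = \sum_{i=1}^{K} 1/\lambda_i(\mathbf{H}_{\mathcal{S}})$, the key observation is that $\mathbf{H}_{\mathcal{S}\cup\{j\}} = \mathbf{H}_{\mathcal{S}} + \tilde{\mathbf{v}}_j \tilde{\mathbf{v}}_j^{\top}$, where $\tilde{\mathbf{v}}_j^{\top}$ is the $j$-th row of $\tilde{\mathbf{V}}_K$; that is, each new sample is a rank-one positive semidefinite update. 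From this, monotone decreasingness (claim (i)) is immediate: adding a PSD matrix only increases eigenvalues (Weyl / Courant-Fischer), hence decreases each $1/\lambda_i$, so $g(\mathcal{S}\cup\{j\}) \le g(\mathcal{S})$.

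For claim (ii), I would recall the definition of $\alpha$-supermodularity: a set function $g$ is $\alpha$-supermodular if for all $\mathcal{S}\subseteq\mathcal{T}$ and $j\notin\mathcal{T}$,
\[
g(\mathcal{S}) - g(\mathcal{S}\cup\{j\}) \;\ge\; \alpha\,\bigl(g(\mathcal{T}) - g(\mathcal{T}\cup\{j\})\bigr).
\]
So I must lower-bound the marginal decrease at the smaller set $\mathcal{S}$ and upper-bound it at the larger set $\mathcal{T}$. Using the Sherman-Morrison formula on the rank-one update,
\[
g(\mathcal{S}) - g(\mathcal{S}\cup\{j\}) \;=\; \mathrm{tr}\!\left(\frac{\mathbf{H}_{\mathcal{S}}^{-1}\tilde{\mathbf{v}}_j\tilde{\mathbf{v}}_j^{\top}\mathbf{H}_{\mathcal{S}}^{-1}}{1+\tilde{\mathbf{v}}_j^{\top}\mathbf{H}_{\mathcal{S}}^{-1}\tilde{\mathbf{v}}_j}\right) \;=\; \frac{\tilde{\mathbf{v}}_j^{\top}\mathbf{H}_{\mathcal{S}}^{-2}\tilde{\mathbf{v}}_j}{1+\tilde{\mathbf{v}}_j^{\top}\mathbf{H}_{\mathcal{S}}^{-1}\tilde{\mathbf{v}}_j}.
\]
The plan is then to bound this Rayleigh-type quotient using spectral bounds on $\mathbf{H}_{\mathcal{S}}$ that hold \emph{uniformly} in $\mathcal{S}$. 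By Proposition~\ref{eigenvalue} (which, as the text notes, applies to $\mathbf{T}^{\textrm{FGFT}}_{\mathcal{S}}$ since $\tilde{\mathbf{V}}$ is orthogonal), every eigenvalue of $\mathbf{H}_{\mathcal{S}}$ lies in $[\mu, 1+\mu]$, independent of $\mathcal{S}$. Substituting the worst-case eigenvalue bounds into the numerator and denominator — e.g. numerator $\ge \|\tilde{\mathbf{v}}_j\|^2/(1+\mu)^2$, denominator $\le 1 + \|\tilde{\mathbf{v}}_j\|^2/\mu$, and the reverse bounds at $\mathcal{T}$ — yields an $\mathcal{S}$-independent ratio between the two marginal gains. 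Since $\mathcal{S}\subseteq\mathcal{T}$ forces $\|\tilde{\mathbf{v}}_j\|^2$ to be the same quantity on both sides, after simplification one is left with a bound depending only on $\mu$; carrying the algebra through should produce $\alpha \ge \mu^3(\mu+2)/(\mu+1)^4$.

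The main obstacle is the final algebraic optimization: the crude worst-case eigenvalue substitution must be done carefully to get exactly the stated constant rather than a weaker one, since the numerator and denominator bounds interact and the norm $\|\tilde{\mathbf{v}}_j\|^2$ (which can range over $[0,1]$) must be handled so it either cancels or is bounded in the direction that makes the inequality tightest. I expect the cleanest route is to show the marginal-gain function $t \mapsto \frac{t/\lambda_{\max}^2}{1 + t/\lambda_{\min}}$ type expressions are monotone in the relevant variables so the extremum is attained at a boundary value of $\|\tilde{\mathbf{v}}_j\|^2$, then plug in $\lambda_{\min}=\mu$, $\lambda_{\max}=1+\mu$. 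A secondary, more minor point to verify is that the rank-one update identity $\mathbf{H}_{\mathcal{S}\cup\{j\}} = \mathbf{H}_{\mathcal{S}} + \tilde{\mathbf{v}}_j\tilde{\mathbf{v}}_j^{\top}$ is literally correct with the indexing conventions of the paper (that $\mathbf{C}\tilde{\mathbf{V}}_K$ simply stacks the sampled rows of $\tilde{\mathbf{V}}_K$), which follows directly from the definition of $\mathbf{C}$ in \eqref{sampling}.
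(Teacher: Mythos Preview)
Your proposal is correct and follows essentially the same route as the paper: both express the marginal decrement via the Sherman--Morrison identity $g(\mathcal{S})-g(\mathcal{S}\cup\{j\}) = \tilde{\mathbf{v}}_j^{\top}\mathbf{H}_{\mathcal{S}}^{-2}\tilde{\mathbf{v}}_j / (1+\tilde{\mathbf{v}}_j^{\top}\mathbf{H}_{\mathcal{S}}^{-1}\tilde{\mathbf{v}}_j)$, bound each Rayleigh-type factor using the uniform spectral inclusion $\lambda(\mathbf{H}_{\mathcal{S}})\in[\mu,1+\mu]$, and then optimize the remaining ratio over $\|\tilde{\mathbf{v}}_j\|_2^2\in[0,1]$ by showing it is monotone and attains its extremum at $\|\tilde{\mathbf{v}}_j\|_2^2=1$. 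The only cosmetic differences are that you invoke Weyl/Courant--Fischer for monotonicity (the paper reuses the Sherman--Morrison expression there too) and that you cite Proposition~\ref{eigenvalue} for the eigenvalue bound whereas the paper cites the analogous Proposition~\ref{infinite series}-style argument for the $K\times K$ matrix directly; since $(\mathbf{C}\tilde{\mathbf{V}}_K)^{\top}\mathbf{C}\tilde{\mathbf{V}}_K$ and $\mathbf{T}^{\textrm{FGFT}}_{\mathcal{S}}$ share the same nonzero spectrum, either citation works.
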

\begin{proof}
The detailed proof is shown in Appendix \ref{submodularProof}
\end{proof}

Let $\mathcal{S}^{*}$ be the optimal solution of minimizing $g(\mathcal{S})$ when $|\mathcal{S}|=M$, and $\mathcal{S}$ be the set obtained by applying the GFS algorithm.
{We have the following result using} Lemma \ref{submodular}:
\begin{equation}\label{performanceBound}
    \frac{g(\mathcal{S})-g(\mathcal{S}^{*})}{g(\{\})-g(\mathcal{S}^{*})}\leq e^{-\alpha}
\end{equation}
which is easily derived from Theorem 2 in \cite{greedybound}.

{From \eqref{performanceBound}, the value of $g(\mathcal{S})$ will be bounded by a function of $g(\mathcal{S^{*}})$, thus the sub-optimality of our GFS solution is up-bounded.  }

\subsection{Complexity Analysis}

We here analyze the complexity of the proposed GFS sampling strategy, assuming $M=K$.
The overall complexity is divided into two parts: ``Preparation'' and ``Selection''.
The preparation step includes the complexity of computing the prior information, \textit{e.g.}, eigen-decomposition, and the selection step has the complexity of collecting samples.
GFS needs to compute matrix $\mathbf{G}$ in the preparation step.
Authors in \cite{France Givens} has shown that the complexity of computing $\tilde{\mathbf{V}}$ via parallel truncated Jacobi algorithm is $\mathcal{O}(NJ\text{log}N)$.
During experiments, the number of Givens rotation matrices is set at $J=\mathcal{O}(N\text{log}N)$.
Once we obtain $\mathbf{S}_1,...,\mathbf{S}_J$, $\mathbf{T}^{\textrm{FGFT}}$ can be implemented via $\mathbf{T}^{\textrm{FGFT}}=\mathbf{S}_1...\mathbf{S}_J \mathbf{B}\mathbf{S}_J^{\top}...\mathbf{S}_1^{\top}$, where $\mathbf{B}$ is a diagonal matrix with ones on the first $K$ diagonal elements and zeros for the rest.
Since every Givens matrix $\mathbf{S}_i$ is sparse with 4 non-zero entries, $\mathbf{T}^{\textrm{FGFT}}$ can be computed iteratively with complexity $\mathcal{O}(NJ)$.
For the selection step, the complexity for matrix-vector product is $\mathcal{O}(M^2)$.
Considering $|\mathcal{S}|=M$ and $|\mathcal{S}^c|=N-M$, the complexity of GFS in the sampling step is at most $\mathcal{O}\left(NM^{3}\right)$ as shown in Table\;\ref{complexity}.

Table\;\ref{complexity} shows the complexity of different sampling algorithms.
In the preparation step, the E-optimal and MFN sampling methods require $\mathbf{V}_K$, whose complexity is $\mathcal{O}\left( {\left( {\left| \mathcal{E} \right|M + R{M^3}} \right){T_1}} \right)$ via the block version of the Rayleigh quotient minimization method \cite{LOBPCG}.
$T_1$ denotes the average number of iterations for convergence of this method, and $R$ is a constant for mixing the complexity of two iterative steps in this method \cite{LOBPCG}.
MIA requires the $K$-th eigenvalue of $\mathbf{L}$ and an approximate LP filter  $\mathbf{T}^{\text{Poly}}$, whose complexity is $\mathcal{O}(qN|\mathcal{E}|)$, where $q$ is the length of a Chebyshev matrix polynomial.
In the selection step, the first eigen-pair of $\left((\mathbf{L}^{\top})^k\mathbf{L}^k\right)_{\mathcal{S}^c}$ is required for {SP}, and its complexity is $\mathcal{O}\left( {k\left| \mathcal{E} \right|M{T_2}\left( k \right)} \right)$, where $k$ is the order of the proxies approximation, and $T_2(k)$ is the number of iterations for convergence of one single eigen-pair \cite{AO}.
E-optimal requires SVD, and MFN requires matrix inversion computation for selection, so their complexity in this step are both $\mathcal{O}(NM^4)$.
The matrix multiplication in MIA has an asymptotic complexity of $\mathcal{O}(M^{2.373})$, thus its complexity in the selection step is $\mathcal{O}(NLM^{3.373})$, where $L$ is a truncation parameter \cite{SPL}.
{From Table \ref{complexity}, we can see that the complexity of the proposed GFS method is lower than the MIA, MFN and E-optimal method in the selection step. Numerical analysis for complexity will be illustrated in the experimental section.} 
\vspace{-0.05in}
\section{ Dynamic Subset Sampling}
\label{sec:DSS}
Most existing graph sampling works assume a static graph from which to select nodes \cite{Uncertainty,SCsampling,AO}, although, as argued in the Introduction, the availability of sensing nodes varies over time for many practical applications.
In this section, we extend our proposed algorithm to address sampling on dynamic graphs: how to select nodes for sampling if the available nodes are a time-varying subset?
For concreteness, we first develop a mathematical model to generate a node subset $\mathcal{A}_t$ at time $t$ that slowly evolves over time.

{At $t=0$, we first randomly generate $P_0 N$ nodes to form an initial available node subset $\mathcal{A}_0$ from $\mathcal{V}$, where $P_0$ is a fixed probability.
Denote by $z^t_j$ the state of node $j$ at time $t$, where $z^t_j=1$ if $j\in\mathcal{A}_t$, and $z^t_j=0$ otherwise, \textit{i.e.} $j$ is in the complement set $\mathcal{A}^c_t$.
Define the state crossover probability as $P(z^{t+1}_j=0|z^{t}_j=1)=P(z^{t+1}_j=1|z^{t}_j=0)=\epsilon$.
Given $\mathcal{A}_t$, one can generate $\mathcal{A}_{t+1}$ from $\mathcal{A}_{t}$ using the state crossover probability.}
Note that the crossover probability $\epsilon$ tends to be small in practice.
The \textit{dynamic graph sampling} problem\footnote{If $\mathcal{S}_t$ at time $t$ is still available at $t+1$, we make no changes. While it is possible to swap newly available nodes at time $t+1$ with selected nodes in $\mathcal{S}_t$, we found the performance-complexity tradeoff not worthwhile. } can be stated formally as follows:
Given a sampling set $\mathcal{S}_t$ at time $t$, where $\mathcal{S}_t \subset \mathcal{A}_{t}$, how to compute $\mathcal{S}_{t+1}$ if $\mathcal{S}_t \not\subset \mathcal{A}_{t+1}$?

We next discuss our sampling strategy for this problem.


\subsection{Sampling on Dynamic Subsets}

For conciseness, we denote $\mathcal{P}=\mathcal{A}_{t+1}\cap\mathcal{S}_{t}$ to be the \textit{selected and available} (SA) set and $\mathcal{U}=\mathcal{S}_{t}\setminus\mathcal{P}$ as the \textit{selected and unavailable} (SU) node set.
Then, our sampling strategy based on node exchange (NE) can be described as follows:
\begin{enumerate}
  \item We replace a node $j\in\mathcal{U}$ with an \textit{available and unselected} (UA) node $k\in\mathcal{Q}$ where $\mathcal{Q}=\mathcal{A}_{t+1}\setminus\mathcal{P}$ is the UA node set.
  The optimal node $k^*$ is greedily chosen if exchanging $j$ with $k^*$ minimizes the objective function in all the $(j,k)$ exchange pairs.
  Repeating this procedure, all unavailable nodes in SU set are replaced by available nodes, which will form a new sampling set $\mathcal{\hat{S}}$.
  \item To further improve performance, node $p\in\mathcal{\hat{S}}$ will be replaced by node $q\in \mathcal{H}$ if the $(p,q)$-pair exchange can induce a lower objective, where $\mathcal{H}=\mathcal{A}_{t+1}\setminus\mathcal{\hat{S}}$.
  \item Step 2) will return the final sampling set $\mathcal{S}_{t+1}$ until it achieves the iteration number constraint, or it could not find one more node-pair exchange to make the objective decrease in this constraint.
\end{enumerate}

After each node-pair exchange, $\mathbf{G}^{-1}_{\mathcal{S}}$ has to be computed.
In the following, we propose one method to relieve this computation burden, in which the node-pair is generally indexed by $(j,k)$.
\vspace{-0.1in}
\subsection{One-pair Node Exchange}
The complexity of computing $\mathbf{G}^{-1}_{\mathcal{S}}$ after each node-pair exchange  can be reduced based on \emph{Sherman-Morrison formula} \cite{matrix lemma}, reviewed below.

\begin{lemma}\label{matrix lemma2} Suppose $\mathbf{A} \in \mathbb{R}^{N \times N}$ is an invertible square matrix and $\mathbf{u},\mathbf{v} \in \mathbb{R}^{N}$ are column vectors. If $\mathbf{A}+\mathbf{u}\mathbf{v}^{\top}$ is invertible, then its inverse is given by
\begin{equation}\label{sherman-morisson}
\begin{split}
\left(\mathbf{A}+\mathbf{u}\mathbf{v}^{\top}\right)^{-1}
=\mathbf{A}^{-1}-\frac{\mathbf{A}^{-1}\mathbf{u}\mathbf{v}^{\top}\mathbf{A}^{-1}}
{1+\mathbf{v}^{\top}\mathbf{A}^{-1}\mathbf{u}}
 \end{split}
\end{equation}
\end{lemma}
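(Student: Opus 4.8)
The plan is to establish \eqref{sherman-morisson} by direct verification. Write $\mathbf{B}=\mathbf{A}^{-1}-\frac{\mathbf{A}^{-1}\mathbf{u}\mathbf{v}^{\top}\mathbf{A}^{-1}}{1+\mathbf{v}^{\top}\mathbf{A}^{-1}\mathbf{u}}$ for the candidate inverse; since $\mathbf{A}+\mathbf{u}\mathbf{v}^{\top}$ is square and invertible by hypothesis, it suffices to check the single identity $(\mathbf{A}+\mathbf{u}\mathbf{v}^{\top})\mathbf{B}=\mathbf{I}$, the mirror identity then being automatic. Expanding the product, the first term is $(\mathbf{A}+\mathbf{u}\mathbf{v}^{\top})\mathbf{A}^{-1}=\mathbf{I}+\mathbf{u}\mathbf{v}^{\top}\mathbf{A}^{-1}$, and for the second term one uses that $c\doteq\mathbf{v}^{\top}\mathbf{A}^{-1}\mathbf{u}$ is a scalar, so that $(\mathbf{A}+\mathbf{u}\mathbf{v}^{\top})\mathbf{A}^{-1}\mathbf{u}=\mathbf{u}+\mathbf{u}c=(1+c)\mathbf{u}$; hence the second term equals $\frac{(1+c)\,\mathbf{u}\mathbf{v}^{\top}\mathbf{A}^{-1}}{1+c}=\mathbf{u}\mathbf{v}^{\top}\mathbf{A}^{-1}$, which cancels the extra summand of the first term and leaves $\mathbf{I}$.

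Before that, I would dispose of the well-definedness of $\mathbf{B}$, i.e.\ that the scalar $1+\mathbf{v}^{\top}\mathbf{A}^{-1}\mathbf{u}$ is nonzero. If $\mathbf{u}=\mathbf{0}$ the claim is trivial; otherwise $\mathbf{A}^{-1}\mathbf{u}\neq\mathbf{0}$, and from $(\mathbf{A}+\mathbf{u}\mathbf{v}^{\top})(\mathbf{A}^{-1}\mathbf{u})=(1+\mathbf{v}^{\top}\mathbf{A}^{-1}\mathbf{u})\,\mathbf{u}$ a vanishing denominator would exhibit a nonzero null vector of $\mathbf{A}+\mathbf{u}\mathbf{v}^{\top}$, contradicting its invertibility. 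Equivalently one may invoke the matrix determinant lemma $\det(\mathbf{A}+\mathbf{u}\mathbf{v}^{\top})=(1+\mathbf{v}^{\top}\mathbf{A}^{-1}\mathbf{u})\det\mathbf{A}$.

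An alternative, in the spirit of the block-inversion machinery already introduced, is to derive \eqref{sherman-morisson} directly from Lemma \ref{matrix Lemma1}: apply that lemma to the $(N+1)\times(N+1)$ matrix $\mathbf{M}=\bigl[\begin{smallmatrix}\mathbf{A}&-\mathbf{u}\\ \mathbf{v}^{\top}&1\end{smallmatrix}\bigr]$ and compute the $(1,1)$-block of $\mathbf{M}^{-1}$ in two ways --- through the Schur complement of the scalar block, which is $\mathbf{A}+\mathbf{u}\mathbf{v}^{\top}$, and through the Schur complement of $\mathbf{A}$, which is the scalar $1+\mathbf{v}^{\top}\mathbf{A}^{-1}\mathbf{u}$ --- and equate the results. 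I expect no genuine obstacle here: the only points requiring care are the nonvanishing of the denominator and the routine bookkeeping that $\mathbf{v}^{\top}\mathbf{A}^{-1}\mathbf{u}$ is a scalar and so commutes freely past the matrix factors.
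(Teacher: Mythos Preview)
Your proof is correct and complete. Note, however, that the paper does not actually prove this lemma: it is stated as the classical Sherman--Morrison formula and cited from the literature (reference \cite{matrix lemma}), with no accompanying argument. Your direct verification --- checking that the candidate inverse multiplies to the identity, after first confirming that $1+\mathbf{v}^{\top}\mathbf{A}^{-1}\mathbf{u}\neq 0$ via the invertibility hypothesis --- is the standard textbook proof and is entirely sound. The alternative derivation you sketch via Lemma~\ref{matrix Lemma1} and Schur complements is also valid and nicely ties the two block-inverse lemmas together, though it is more machinery than the statement requires.
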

After exchanging node-pair $(j,k)$, we will have a new sample set $\mathcal{\tilde{S}}=\mathcal{S} \cup \left\{k\right\} \setminus \left\{j\right\}$.

It is observed that under some permutation
\begin{align}\label{original set}
\mathbf{G}_{\mathcal{S}} = \left[ \begin{array}{l}
{\mathbf{G}}_{\mathcal{S}\setminus\{j\}}~~~~~~~{\mathbf{G}}_{\mathcal{S}\setminus\{j\},\{j\}} \\
{\mathbf{G}}_{\{j\},\mathcal{S}\setminus\{j\}}~~~{\mathbf{G}}_{jj} \\
 \end{array} \right]
\end{align}
and
\begin{align}\label{update set}
\mathbf{G}_{\mathcal{\tilde{S}}} = \left[ \begin{array}{l}
{\mathbf{G}}_{\mathcal{S}\setminus\{j\}}~~~~~~~{\mathbf{G}}_{\mathcal{S}\setminus\{j\},\{k\}} \\
{\mathbf{G}}_{\{k\},\mathcal{S}\setminus\{j\}}~~~{\mathbf{G}}_{kk} \\
 \end{array} \right]
\end{align}

$\mathbf{G}_{\mathcal{\tilde{S}}}$ and $\mathbf{G}_{\mathcal{S}}$ share the same sub-matrix of dimension $(M-1)\times(M-1)$.
Hence, we can construct $\mathbf{G}_{\mathcal{\tilde{S}}}$ by modifying only the column and row in $\mathbf{G}_{\mathcal{S}}$ corresponding to node $j$.
Assuming that node $j$ is the $i$-th node in $\mathcal{S}$, \textit{i.e}, $j=\mathcal{S}(i)$, we first replace the $i$-th row in $\mathbf{G}_{\mathcal{S}}$ with $\mathbf{G}^{\top}_{\mathcal{\tilde{S}},\{k\}}$ to form an intermediate matrix $\mathbf{F}$, and then replace the $i$-th column of $\mathbf{F}$ with the column vector $\mathbf{G}_{\mathcal{\tilde{S}},\{k\}}$ to construct $\mathbf{G}_{\mathcal{\tilde{S}}}$.
Therefore, we can express $\mathbf{G}_{\tilde{\mathcal{S}}}$ as $\mathbf{G}_{\mathcal{S}}$ with two rank-1 updates:
\begin{align}\label{rank 1 update}
\mathbf{G}_{\mathcal{\tilde{S}}}=\mathbf{G}_{\mathcal{S}}+\mathbf{e}_i\mathbf{p}^{\top}
+\mathbf{q}\mathbf{e}^{\top}_i
\end{align}
where $\mathbf{p}=\mathbf{G}_{{\mathcal{\tilde{S}}},\{k\}}-\mathbf{G}_{\mathcal{S},\{j\}}$,
$\mathbf{q}=\mathbf{p}-(\mathbf{G}_{kk}-\mathbf{G}_{jj})\mathbf{e}_i$ and $\mathbf{e}_i$ is an indicator vector
with $\mathbf{e}_i(i)=1$ and $\mathbf{e}_i(j)=0$ for $j\neq i$.
As introduced, $\mathbf{F}=\mathbf{G}_{\mathcal{S}}+\mathbf{e}_i\mathbf{p}^{\top}$.

From Lemma \ref{matrix lemma2}, the inverse of $\mathbf{G}_{\mathcal{\tilde{S}}}$ can be computed via:
\begin{align}\label{rank 1 inverse1}
\mathbf{F}^{-1}=\mathbf{G}^{-1}_{\mathcal{S}}
-\frac{\mathbf{G}^{-1}_{\mathcal{S}}\mathbf{e}_i\mathbf{p}^{\top}\mathbf{G}^{-1}_{\mathcal{S}}}
{1+\mathbf{p}^{\top}\mathbf{G}^{-1}_{\mathcal{S}}\mathbf{e}_i}
\end{align}
\begin{align}\label{rank 1 inverse2}
\mathbf{G}^{-1}_{\mathcal{\tilde{S}}}=\mathbf{F}^{-1}
-\frac{\mathbf{F}^{-1}\mathbf{q}\mathbf{e}^{\top}_i\mathbf{F}^{-1}}
{1+\mathbf{e}^{\top}_i\mathbf{F}^{-1}\mathbf{q}}
\end{align}
composed of simple matrix-vector products only.

We compute the initial set $\mathcal{S}_0$ using our proposed GFS algorithm described in Section \ref{sec:greedy}.
The details of the dynamic subset sampling strategy are outlined in Algorithm \ref{GFSNE}, called GFS-NE sampling.
$K_0$ is the upper bound of exchange iterations to contain complexity.
In simulations, we will compare this proposed dynamic subset sampling algorithm to existing greedy sampling algorithms which select nodes from scratch at each time $t$ from time-varying subsets. \begin{algorithm}[tp]
\caption{GFS-NE dynamic subset sampling algorithm}
\label{GFSNE}
Apply the GFS algorithm on $\mathcal{A}_0$ and then output $\mathcal{S}_0$ and $\mathbf{G}^{-1}_{\mathcal{S}_0}$.
Given $K_0$, $\mathcal{S}_t$ and $\mathbf{G}^{-1}_{\mathcal{S}_t}$,
\begin{algorithmic}[1]
\STATE Initialization: $\xi=0$, $\mathcal{S}=\mathcal{S}_t$ and $\mathbf{G}^{-1}_{\mathcal{S}}=\mathbf{G}^{-1}_{\mathcal{S}_t}$\\
\STATE \textbf{For} $r=1,...,|\mathcal{U}|$\\
\STATE$j=\mathcal{U}(r)$ and $i=d|_{\mathcal{S}(d)=j}$\\
\STATE$\forall k \in \mathcal{Q}$, $\mathcal{\tilde{S}}=\mathcal{S} \cup\left\{k\right\} \setminus \left\{j\right\}$, compute $\mathbf{G}^{-1}_{\mathcal{\tilde{S}}}$ based on equations \eqref{rank 1 inverse1} and \eqref{rank 1 inverse2}.
\STATE Select $k^* = \mathop {\arg \min }\limits_{k \in \mathcal{Q}} \textrm{tr}\left[\mathbf{G}^{-1}_{\mathcal{\tilde{S}}}\right]$, and $\mathcal{\hat{S}}=\mathcal{S} \cup\left\{k^*\right\} \setminus \left\{j\right\}$\\
\STATE $\mathcal{S}=\mathcal{\hat{S}}$, $\mathbf{G}^{-1}_{\mathcal{S}}=\mathbf{G}^{-1}_{\mathcal{\hat{S}}}$ and $\mathcal{Q}=\mathcal{Q}\setminus\{k^*\}$
\STATE\textbf{end~For}\\
\end{algorithmic}
\begin{algorithmic}[1]
\STATE \textbf{For} $i=1,...,|\mathcal{S}|$\\
\STATE$j=\mathcal{S}(i)$ \\
\STATE\textbf{For} $r=1,...,|\mathcal{H}|$\\
\STATE$k=\mathcal{H}(r)$ and $\mathcal{\tilde{S}}=\mathcal{S} \cup\left\{k\right\} \setminus \left\{j\right\}$, compute $\mathbf{G}^{-1}_{\mathcal{\tilde{S}}}$ based on equations \eqref{rank 1 inverse1} and \eqref{rank 1 inverse2}\\
\STATE\textbf{If}~$\textrm{tr}(\mathbf{G}^{-1}_{\mathcal{\tilde{S}}})
          <\textrm{tr}(\mathbf{G}^{-1}_{\mathcal{S}})$\\
\STATE$\xi=\xi+1$, ${\mathcal{S}}={\mathcal{\tilde{S}}}$ and $\mathbf{G}^{-1}_{\mathcal{S}}=\mathbf{G}^{-1}_{\mathcal{\tilde{S}}}$\\
\STATE\textbf{If} $\xi\geq K_0$, return $\mathcal{S}_{t+1}=\mathcal{S}$ and $\mathbf{G}^{-1}_{\mathcal{S}_{t+1}}=\mathbf{G}^{-1}_{\mathcal{S}}$\\
\STATE\textbf{end If}\\
\STATE\textbf{end~For}\\
\STATE\textbf{end~For}\\
\STATE Return $\mathcal{S}_{t+1}=\mathcal{S}$ and $\mathbf{G}^{-1}_{\mathcal{S}_{t+1}}=\mathbf{G}^{-1}_{\mathcal{S}}$
\end{algorithmic}
\end{algorithm}

\vspace{-0.05in}
\section{Accompanied Graph Signal Reconstruction}
\label{sec:reconstruction}
Extending our derivation in sampling, we propose a biased signal reconstruction method with lower variance and complexity than the conventional LS estimator.

\subsection{Graph Filter Sub-matrix-based Reconstruction}

Recall that the LS estimator \cite{AO} is $\hat{\mathbf{x}}_{\text{LS}} = {\mathbf{V}_K}[ {{{(\mathbf{C}{\mathbf{V}_K})}^{\top}}\mathbf{C}{\mathbf{V}_K}} ]^{-1}{(\mathbf{CV}_K)^{\top}}\mathbf{y}_\mathcal{S}$, where $\mathbf{y}_\mathcal{S}$ is an observed noise-corrupted signal.
The LS reconstruction requires computation of the first $K$ eigenvectors of the graph operator $\mathbf{L}$ and a matrix inverse operation, {and it achieves lowest MSE value among unbiased recovery solution.}
We next derive a biased signal reconstruction method.

\begin{proposition}
Adding a small shift ${\beta}>0$ into the LS solution, we obtain a biased estimator
$\hat{\mathbf{x}} = {\mathbf{V}_K}[ {{{(\mathbf{C}{\mathbf{V}_K})}^{\top}}\mathbf{C}{\mathbf{V}_K}}+{\beta}\mathbf{I} ]^{-1}{(\mathbf{CV}_K)^{\top}}\mathbf{y}_\mathcal{S}$,
which can be approximated by
\begin{equation}
\label{proposition3}
\begin{split}
\hat{\mathbf{x}}=\mathbf{T}^{\emph{\textrm{FGFT}}}_{\mathcal{VS}}{\mathbf{H}}^{-1}_{\mathcal{S}}\mathbf{y}_{\mathcal{S}}
\end{split}
\end{equation}
where $\mathbf{T}^{\emph{\textrm{FGFT}}}$ was obtained during GFS sampling using $\mathbf{T}^{\textrm{FGFT}}=\tilde{\mathbf{V}}_K\tilde{\mathbf{V}}^{\top}_K$, and ${\mathbf{H}}=\mathbf{T}^{\emph{\textrm{FGFT}}}+{\beta}\mathbf{I}$.
\end{proposition}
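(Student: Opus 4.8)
The plan is to start from the closed-form biased estimator $\hat{\mathbf{x}} = \mathbf{V}_K[(\mathbf{C}\mathbf{V}_K)^{\top}\mathbf{C}\mathbf{V}_K + \beta\mathbf{I}]^{-1}(\mathbf{C}\mathbf{V}_K)^{\top}\mathbf{y}_{\mathcal{S}}$ and rewrite it so that the $K\times K$ inverse becomes an inverse acting on the sampled coordinates, and $\mathbf{V}_K$ appears only through the ideal low-pass filter $\mathbf{T}=\mathbf{V}_K\mathbf{V}_K^{\top}$. First I would invoke the \emph{push-through identity}: for any real matrix $\mathbf{A}$ and $\beta>0$, $(\mathbf{A}^{\top}\mathbf{A}+\beta\mathbf{I})^{-1}\mathbf{A}^{\top} = \mathbf{A}^{\top}(\mathbf{A}\mathbf{A}^{\top}+\beta\mathbf{I})^{-1}$, which follows from the elementary identity $(\mathbf{A}^{\top}\mathbf{A}+\beta\mathbf{I})\mathbf{A}^{\top} = \mathbf{A}^{\top}(\mathbf{A}\mathbf{A}^{\top}+\beta\mathbf{I})$ after left- and right-multiplying by the two inverses. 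Both inverses are well defined here: since $\textrm{rank}(\mathbf{C}\mathbf{V}_K)=K$ and $\beta>0$, the matrices $(\mathbf{C}\mathbf{V}_K)^{\top}\mathbf{C}\mathbf{V}_K+\beta\mathbf{I}$ and $\mathbf{C}\mathbf{V}_K(\mathbf{C}\mathbf{V}_K)^{\top}+\beta\mathbf{I}$ are both positive definite. Applying the identity with $\mathbf{A}=\mathbf{C}\mathbf{V}_K$ yields $\hat{\mathbf{x}} = \mathbf{V}_K(\mathbf{C}\mathbf{V}_K)^{\top}\big[\mathbf{C}\mathbf{V}_K(\mathbf{C}\mathbf{V}_K)^{\top}+\beta\mathbf{I}\big]^{-1}\mathbf{y}_{\mathcal{S}}$.

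Next I would collapse the eigenvector products into $\mathbf{T}$. We have $\mathbf{V}_K(\mathbf{C}\mathbf{V}_K)^{\top} = \mathbf{V}_K\mathbf{V}_K^{\top}\mathbf{C}^{\top} = \mathbf{T}\mathbf{C}^{\top}$, and because $\mathbf{C}^{\top}$ extracts the columns indexed by $\mathcal{S}$, this equals the submatrix $\mathbf{T}_{\mathcal{VS}}$ (all rows, columns in $\mathcal{S}$). Similarly $\mathbf{C}\mathbf{V}_K(\mathbf{C}\mathbf{V}_K)^{\top} = \mathbf{C}\mathbf{T}\mathbf{C}^{\top} = \mathbf{T}_{\mathcal{S}}$. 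Hence the exact biased estimator is $\hat{\mathbf{x}} = \mathbf{T}_{\mathcal{VS}}(\mathbf{T}_{\mathcal{S}}+\beta\mathbf{I})^{-1}\mathbf{y}_{\mathcal{S}}$, which already has the claimed form but in terms of the exact filter.

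Finally, I would replace $\mathbf{T}$ by the FGFT realization $\mathbf{T}^{\textrm{FGFT}}=\tilde{\mathbf{V}}_K\tilde{\mathbf{V}}_K^{\top}$ already produced by Algorithm~\ref{GFS}; since $\tilde{\mathbf{V}}$ is orthogonal, Proposition~\ref{eigenvalue} applies to $\mathbf{T}^{\textrm{FGFT}}_{\mathcal{S}}$, so $\mathbf{H}_{\mathcal{S}}=\mathbf{T}^{\textrm{FGFT}}_{\mathcal{S}}+\beta\mathbf{I}$ is positive definite and $\mathbf{H}^{-1}_{\mathcal{S}}$ exists, giving $\hat{\mathbf{x}}=\mathbf{T}^{\textrm{FGFT}}_{\mathcal{VS}}\mathbf{H}^{-1}_{\mathcal{S}}\mathbf{y}_{\mathcal{S}}$, i.e.\ \eqref{proposition3}. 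If a quantitative statement is desired, one can further bound $\big\|\mathbf{T}_{\mathcal{VS}}(\mathbf{T}_{\mathcal{S}}+\beta\mathbf{I})^{-1} - \mathbf{T}^{\textrm{FGFT}}_{\mathcal{VS}}\mathbf{H}^{-1}_{\mathcal{S}}\big\|$ by $\|\mathbf{T}-\mathbf{T}^{\textrm{FGFT}}\|$ using a resolvent perturbation argument, exploiting that $\beta$ lower-bounds the smallest eigenvalue of both $\mathbf{T}_{\mathcal{S}}+\beta\mathbf{I}$ and $\mathbf{H}_{\mathcal{S}}$.

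I do not expect a deep obstacle: the only delicate points are verifying the invertibility hypotheses of the push-through identity and carefully tracking the index sets (distinguishing $\mathbf{T}_{\mathcal{VS}}$ from $\mathbf{T}_{\mathcal{S}}$) when converting the multiplications by $\mathbf{C}$ and $\mathbf{C}^{\top}$ into submatrices. The ``approximation'' in the statement is a substitution of $\mathbf{T}^{\textrm{FGFT}}$ for $\mathbf{T}$ rather than an inequality, so the only place real work would be needed is the optional error bound, where controlling the resolvent perturbation uniformly in the (small) shift $\beta$ is the subtle step.
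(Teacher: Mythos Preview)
Your argument is correct and reaches exactly the same exact identity $\hat{\mathbf{x}}=\mathbf{T}_{\mathcal{VS}}(\mathbf{T}_{\mathcal{S}}+\beta\mathbf{I})^{-1}\mathbf{y}_{\mathcal{S}}$ as the paper, followed by the same substitution $\mathbf{T}\mapsto\mathbf{T}^{\textrm{FGFT}}$. The route, however, is genuinely different. The paper mirrors its proof of Theorem~\ref{shift proposition}: it expands $[(\mathbf{C}\mathbf{V}_K)^{\top}\mathbf{C}\mathbf{V}_K+\beta\mathbf{I}]^{-1}$ as a Neumann series, applies a binomial expansion in $(1-\beta)\mathbf{I}$ and $-\mathbf{\Psi}$, uses the cycling identity $\mathbf{V}_K\mathbf{\Psi}^d\mathbf{V}_K^{\top}=\mathbf{T}\mathbf{P}^d$ with $\mathbf{P}=\mathbf{C}^{\top}\mathbf{C}\mathbf{T}$, computes the block-triangular powers $[\mathbf{I}-(\mathbf{P}+\beta\mathbf{I})]^{l}$ explicitly, and then re-sums the series via Theorem~\ref{neumann} to recover the closed form. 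Your push-through identity $(\mathbf{A}^{\top}\mathbf{A}+\beta\mathbf{I})^{-1}\mathbf{A}^{\top}=\mathbf{A}^{\top}(\mathbf{A}\mathbf{A}^{\top}+\beta\mathbf{I})^{-1}$ bypasses all of this in one line and needs only $\beta>0$ for invertibility, whereas the paper's derivation additionally relies on $0<\beta<1$ so that the spectral radius conditions of Theorem~\ref{neumann} hold at both ends. What the paper's approach buys is thematic consistency: it makes transparent that the reconstruction formula is the ``untraced'' analogue of the sampling objective in Theorem~\ref{shift proposition}, and it reuses the eigenvalue bounds from Propositions~\ref{infinite series} and~\ref{eigenvalue}. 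What your approach buys is brevity and a weaker hypothesis on $\beta$.
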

\begin{proof}
Similar derivation in \eqref{shift Neumann series} and \eqref{trace property} can show that the shifted LS solution is equal to
\begin{align}\label{recovery1}
&\hspace{0cm}\hat{\mathbf{x}}={\mathbf{V}_K}\left[\sum\limits_{l = 0}^{\infty} {\sum\limits_{d = 0}^{l} \tbinom{l}{d}(1-{\beta})^{l-d} (\mathbf{-\Psi})^{d} }\right]{(\mathbf{CV}_K)^{\top}}\mathbf{y}_\mathcal{S}\nonumber\\
&\hspace{0.3cm}\mathop  = \limits^{(a)}\mathbf{T}\left[ \sum\limits_{l = 0}^{\infty}{\sum\limits_{d = 0}^{l} \tbinom{l}{d}(1-{\beta})^{l-d} (\mathbf{-P})^{d} }\right]\mathbf{C}^{\top}\mathbf{y}_\mathcal{S}\\
&\hspace{0.3cm}=\mathbf{T}\sum\limits_{l = 0}^{\infty}\left[\mathbf{I}-(\mathbf{P}+{\beta}\mathbf{I})\right]^{l}\mathbf{C}^{\top}\mathbf{y}_\mathcal{S}\nonumber
 \end{align}
where $\mathbf{P}={ {\mathbf{C}}^{\top}}\mathbf{C}{\mathbf{V}_K}{\mathbf{V}^{\top}_K}$ and (a) holds since ${\mathbf{V}_K}\mathbf{\Psi}^d{\mathbf{V}^{\top}_K}={\mathbf{V}_K}({\mathbf{V}^{\top}_K}{ {\mathbf{C}}^{\top}}\mathbf{C}{\mathbf{V}_K})...
({\mathbf{V}^{\top}_K}{{\mathbf{C}}^{\top}}\mathbf{C}{\mathbf{V}_K}){\mathbf{V}^{\top}_K}
={\mathbf{V}_K}{\mathbf{V}^{\top}_K}\mathbf{P}^{d}$.

From the definition of $\mathbf{C}$, we know that ${\mathbf{C}^{\top}}\mathbf{C} = {\left[ \begin{array}{l}
 {\mathbf{I}_\mathcal{S}}~~~ {\bf{0}} \\
 {\bf{0}}~~~~{\kern 1pt} {\bf{0}} \\
 \end{array} \right]}$ under appropriate permutation, which leads to $\mathbf{P} = \left[ \begin{array}{l}
{\mathbf{T}_\mathcal{SV}} \\
~\mathbf{0} \\
 \end{array} \right]$.
 Hence, $
\mathbf{I} - (\mathbf{P}+{\beta}\mathbf{I}) = \left[ \begin{array}{l}
{\mathbf{I}}- ({\mathbf{T}_\mathcal{S}}+{\beta}\mathbf{I})~~-{\mathbf{T}_{\mathcal{S}{\mathcal{S}^c}}} \\
~~~~~~\mathbf{0}~~~~~~~~~~~(1-{\beta}){\mathbf{I}} \\
 \end{array} \right]
 $,
 resulting in
\begin{equation}\label{structure}
\begin{split}
\left[\mathbf{I} - (\mathbf{P}+{\beta}\mathbf{I})\right]^{l} = \left[ \begin{array}{l}
\left[\mathbf{I} - (\mathbf{T}_{\mathcal{S}}+{\beta}\mathbf{I})\right]^{l}~~ \bullet\\
~~~~~~~~ \mathbf{0}~~~~~~~~(1-{\beta})^{l}{\mathbf{I}}
\end{array} \right]
\end{split}
\end{equation}
where ``$\bullet$'' denotes a nonzero matrix whose dimension is $M\times (N-M)$.

Then, \eqref{recovery1} can be written as
 \begin{equation}\label{recovery2}
 \begin{split}
&\hspace{0cm}\hat{\mathbf{x}}=\mathbf{T}\sum\limits_{l = 0}^{\infty} \left[ \begin{array}{l}
\left[\mathbf{I} - (\mathbf{T}_{\mathcal{S}}+{\beta}\mathbf{I})\right]^{l}~~ \bullet\\
~~~~~~~~ \mathbf{0}~~~~~~~~(1-{\beta})^{l}{\mathbf{I}}
\end{array} \right]\left[ \begin{array}{l}
\mathbf{y}_\mathcal{S}   \\
{\kern 1pt}{\kern 1pt}\mathbf{0} \\
 \end{array} \right]\\
 &\hspace{0.3cm}=\mathbf{T}_{\mathcal{VS}}\sum\limits_{l = 0}^{\infty}\left[\mathbf{I} - (\mathbf{T}_{\mathcal{S}}+{\beta}\mathbf{I})\right]^{l}\mathbf{y}_\mathcal{S}\\
 &\hspace{0.3cm}\mathop=\limits^{(a)}\mathbf{T}_{\mathcal{VS}}(\mathbf{T}_{\mathcal{S}}
 +{\beta}\mathbf{I})^{-1}\mathbf{y}_\mathcal{S}
 \end{split}
 \end{equation}
where equality $(a)$ holds from a similar derivation of \eqref{back transformation} with a different shift.
Finally, we approximate $\mathbf{T}$ by the $\mathbf{T}^{\textrm{FGFT}}$.
\end{proof}
\begin{figure*}[htbp]
    \centering
     \subfigure[Reconstruction MSE in {\textbf{G1}} at 10dB]{
    \begin{minipage}{5.5cm}
    \centering
        \includegraphics[width=2.2in,height=2.2in]{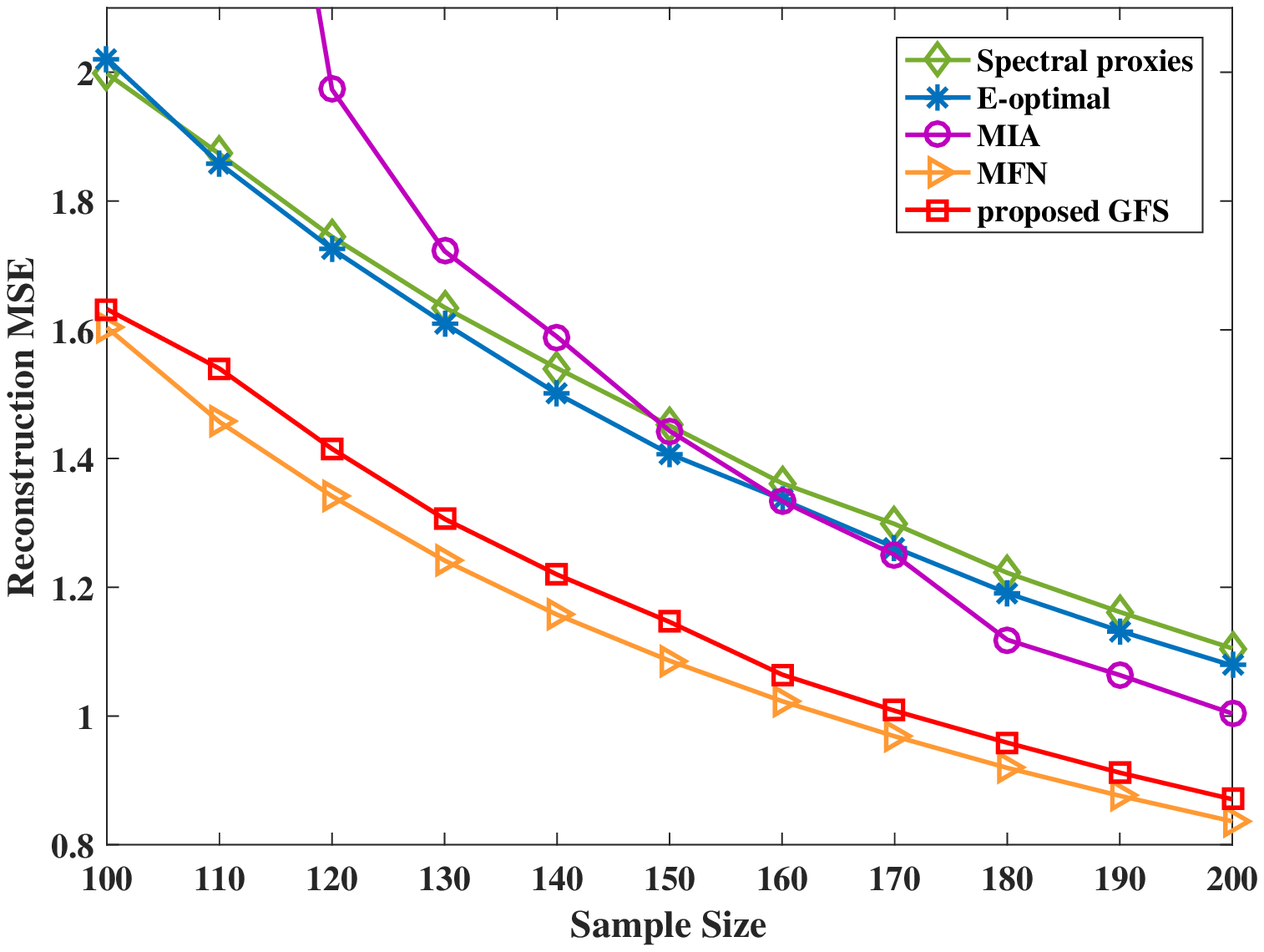}
    \end{minipage}%
    }
     \subfigure[Reconstruction MSE in {\textbf{G1}} at 0dB]{
    \begin{minipage}{5.5cm}
    \centering
        \includegraphics[width=2.2in,height=2.2in]{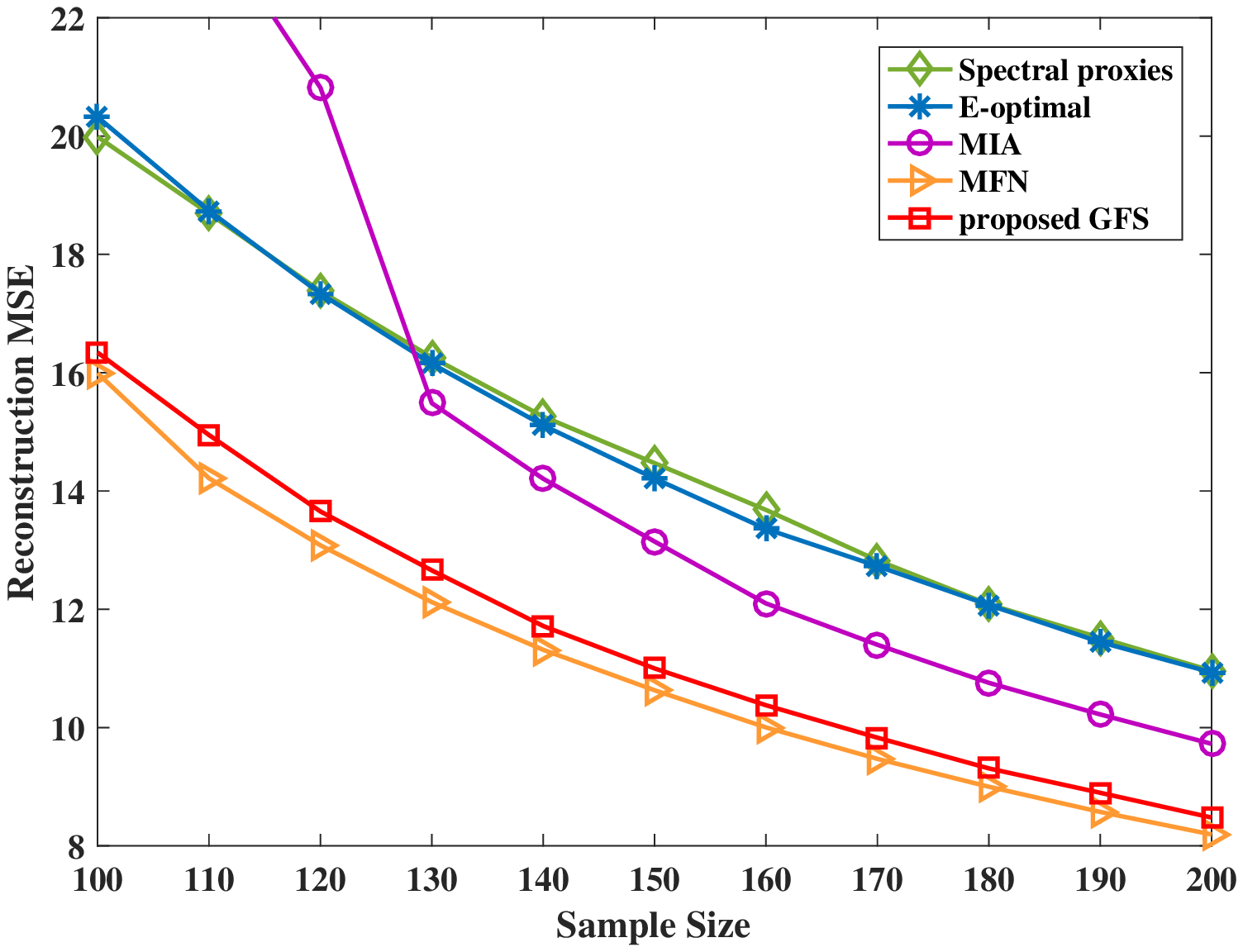}
    \end{minipage}%
    }
    \subfigure[Reconstruction MSE in {\textbf{G2}} at 10dB]{
    \begin{minipage}{5.5cm}
    \centering
        \includegraphics[width=2.2in,height=2.2in]{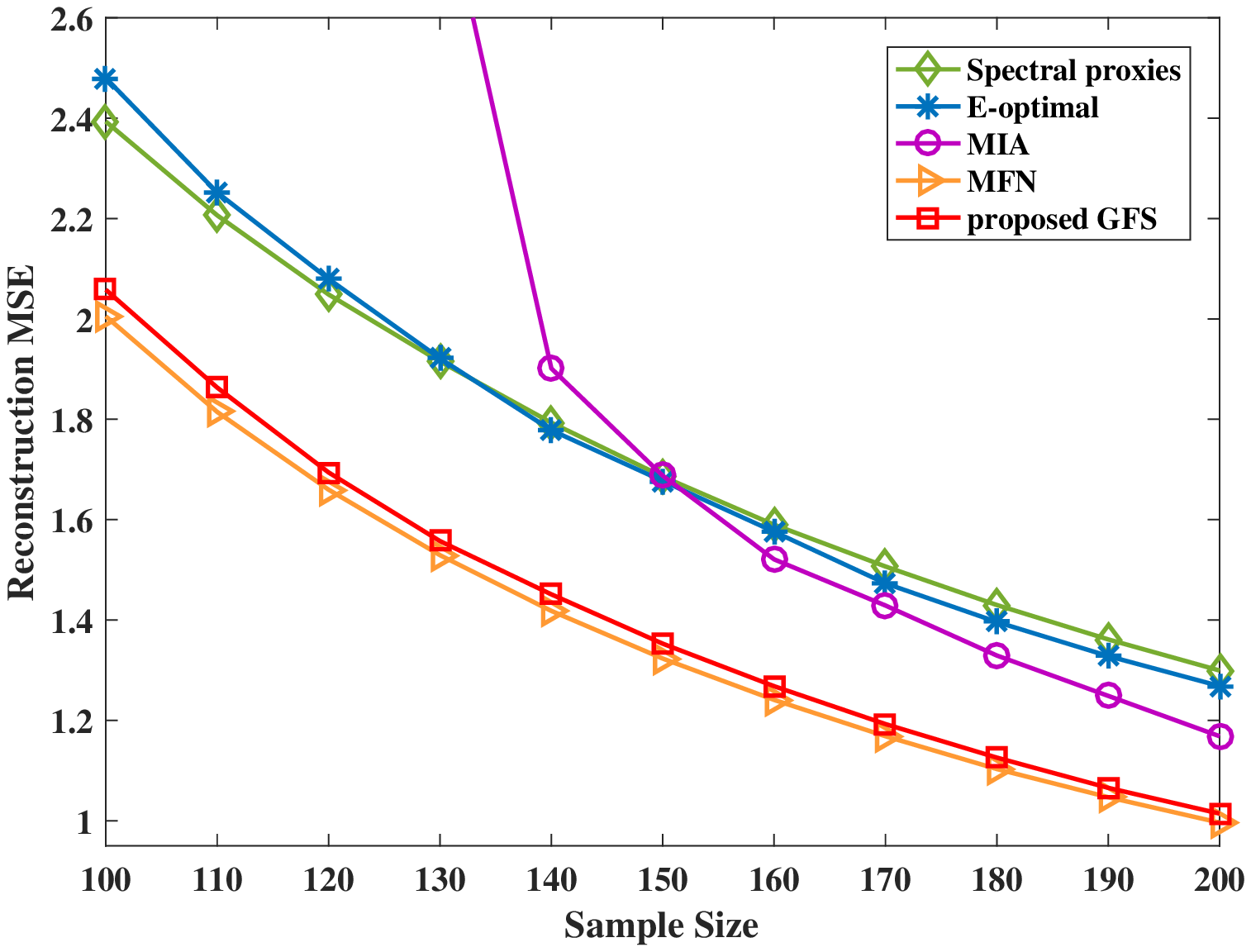}
    \end{minipage}%
    }
     \centering
     \subfigure[Reconstruction MSE in {\textbf{G2}} at 0dB]{
    \begin{minipage}{5.5cm}
    \centering
        \includegraphics[width=2.2in,height=2.2in]{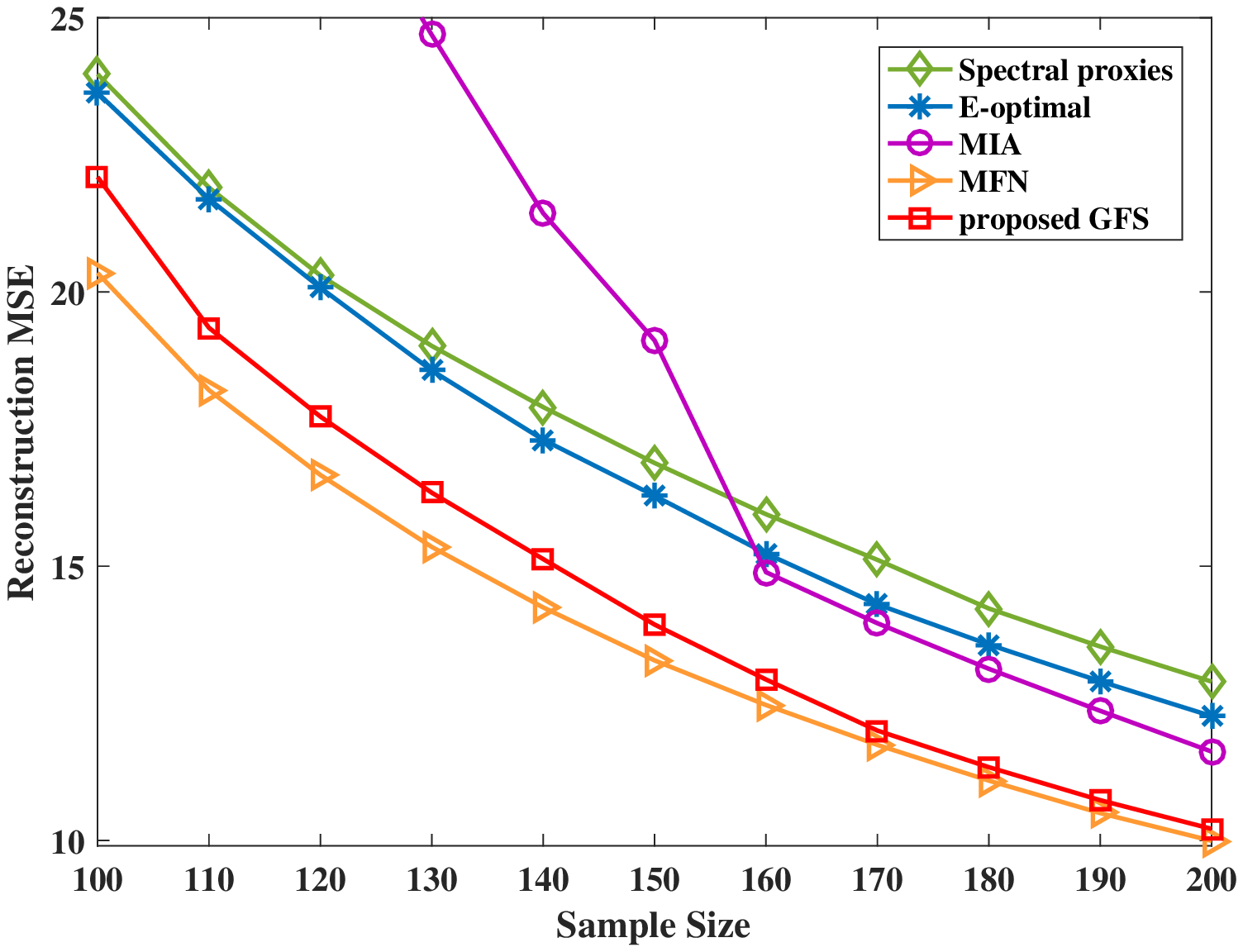}
    \end{minipage}%
    }
     \subfigure[Reconstruction MSE in {\textbf{G3}} at 10dB]{
    \begin{minipage}{5.5cm}
    \centering
        \includegraphics[width=2.2in,height=2.2in]{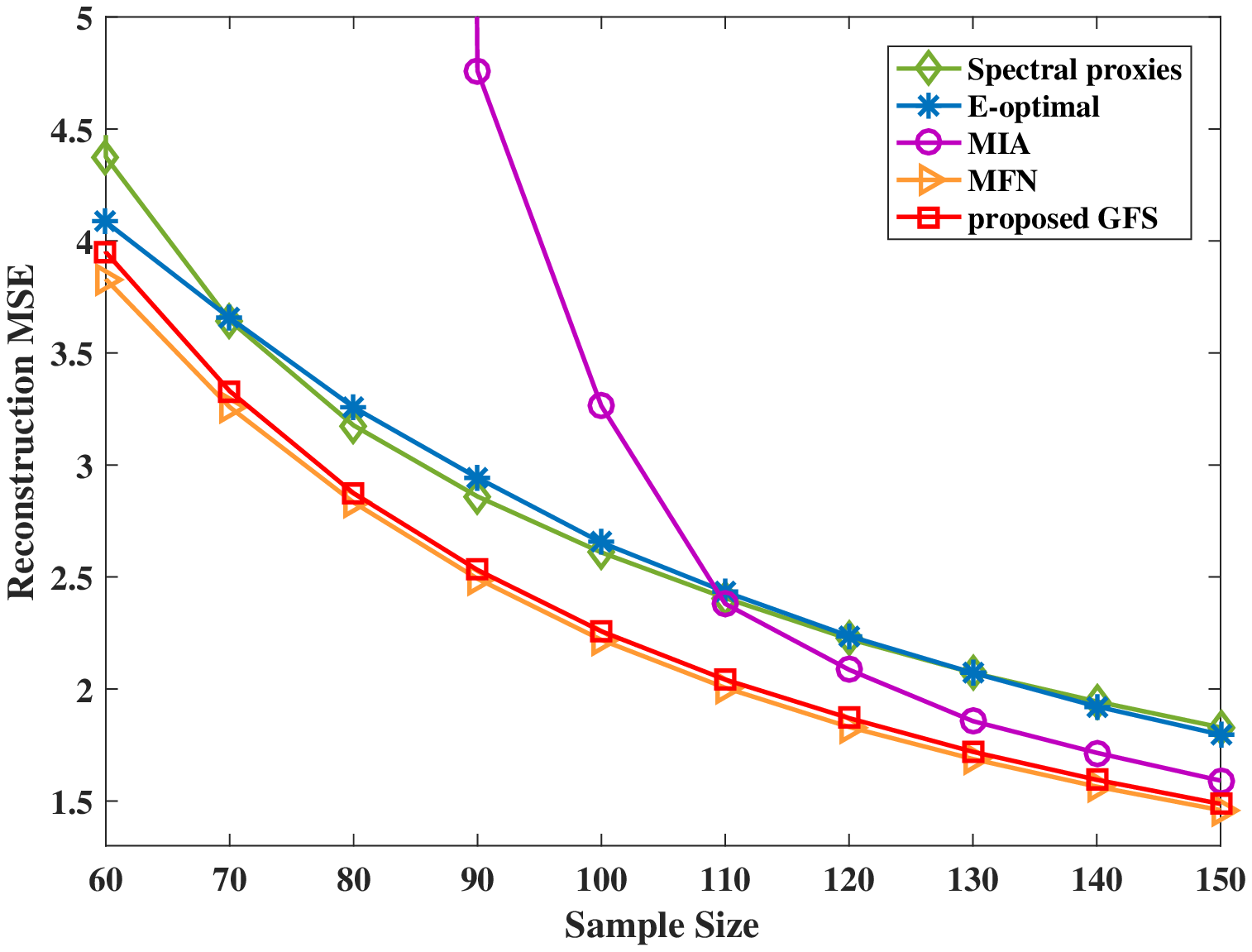}
    \end{minipage}%
    }
    \subfigure[Reconstruction MSE in {\textbf{G3}} at 0dB]{
    \begin{minipage}{5.5cm}
    \centering
        \includegraphics[width=2.2in,height=2.2in]{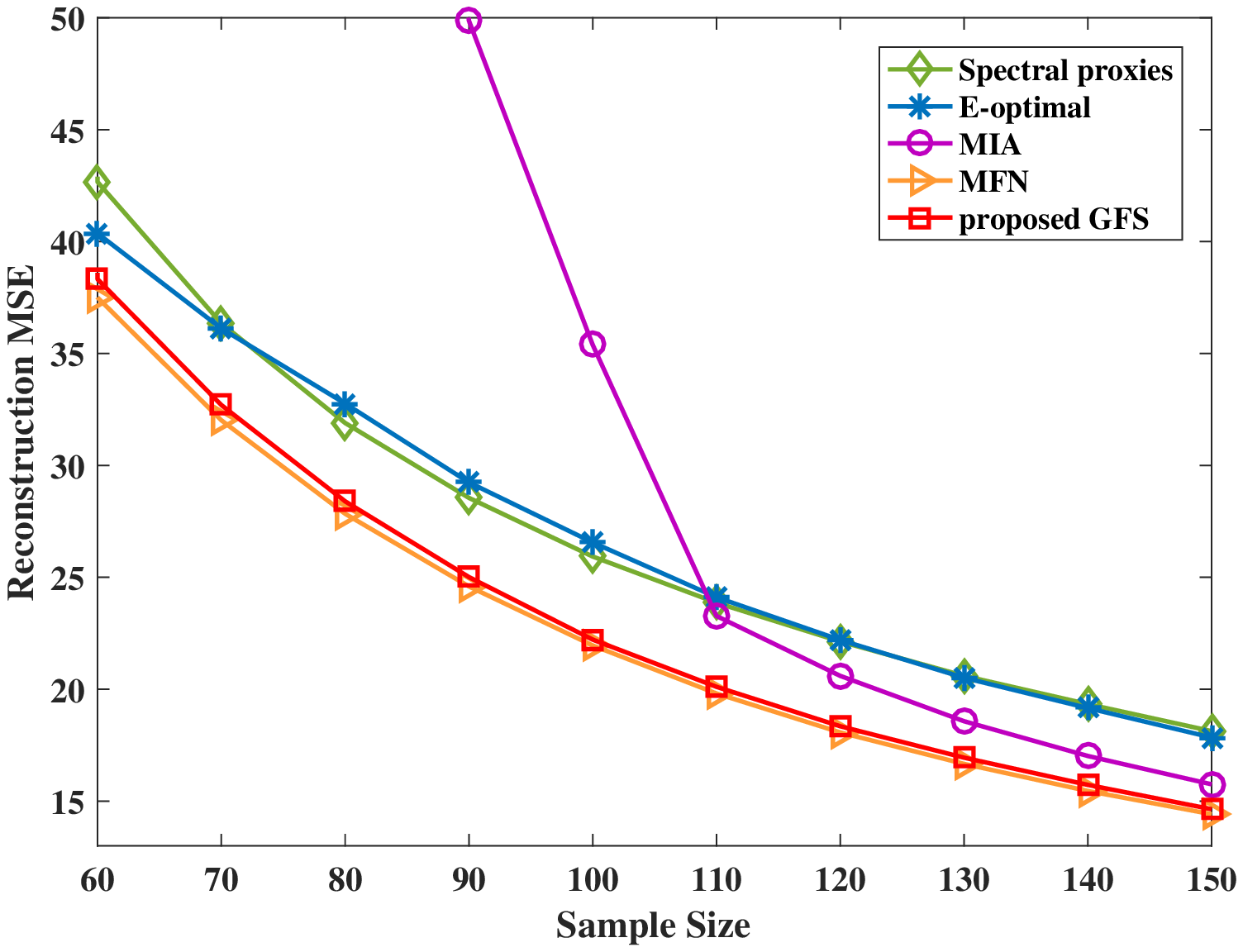}
    \end{minipage}%
    }
\caption{Reconstruction MSE of different sampling algorithms at different SNRs and graph types, where the original graph signals are all recovered from samples by the LS reconstruction.}
\label{staticMSE}
\end{figure*}
{\textbf{Remark}}: {If we choose $\beta=\mu$, then $\mathbf{H}=\mathbf{G}$. All involved matrices in \eqref{proposition3} have been obtained in sampling.
However, the value for $\mu$ has been designed in the sampling procedure based on the condition number constraint.
If $\beta\neq\mu$, we can customize a $\beta$ based on the bias-variance tradeoff (to be discussed), at the cost of computing the inverse $\mathbf{H}^{-1}_{\mathcal{S}}$ once.

We next prove the robustness of the proposed estimator $\mathbf{\hat{x}}$ and propose one strategy to compute the value of $\beta$.}

\vspace{-0.08in}
\subsection{Robustness Analysis}

Assume that the signal has the same energy for different SNRs, and noise $\mathbf{n}$ is i.i.d. with zero mean and variance $\omega^2$ which varies with SNR.
We write the following proposition.

\begin{proposition}
\label{robustness}
Given $\mathbf{T}^{\emph{\textrm{FGFT}}}$ sufficiently approximates the ideal LP filter, MSE of the proposed GFS reconstruction is:
\begin{equation}\label{GFSMSE}
\begin{split}
&\hspace{-0.2cm} {\bf{E}}\left\| {{\bf{\hat x}} - {\bf{x}}} \right\|_2^2
=\sum^K_{i=1}\left(1+\frac{\sigma_i}{\beta}\right)^{-2}
(\mathbf{u}^{\top}_i\tilde{\mathbf{x}}_K)^2+
\omega^2\sum^{K}_{i=1}\frac{\sigma_i}{(\sigma_i+\beta)^2}
\end{split}
\end{equation}
and the MSE of the LS solution is:
\begin{align}\label{LSMSE}
&\hspace{0cm} {\mathbb{E}}\left\| {{\bf{\hat x}_{\emph{\textrm{LS}}}} - {\bf{x}}} \right\|_2^2
= {\omega ^2}\sum\limits_{i = 1}^K {\frac{1}{{ {\sigma _i}}}}.
\end{align}
where $(\sigma_i,\mathbf{u}_i)$ is the $i$-th eigen-pair of the matrix $\mathbf{\Psi}  = {\left( {\mathbf{C}{\mathbf{V}_K}} \right)^{\top}}\mathbf{C}{\mathbf{V}_K}$.

\end{proposition}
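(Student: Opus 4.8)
The plan is to compute both MSEs by diagonalizing the key matrices in a common eigenbasis. First I would use the eigen-decomposition $\mathbf{\Psi} = \mathbf{U}\boldsymbol{\Sigma}\mathbf{U}^{\top}$ with $\boldsymbol{\Sigma} = \text{diag}(\sigma_1,\ldots,\sigma_K)$ and orthonormal $\mathbf{U} = [\mathbf{u}_1,\ldots,\mathbf{u}_K]$. For the LS estimator, the covariance of the reconstruction error was already shown in the Preliminaries to be $\mathbf{R}_{\hat{\mathbf{x}}_{\text{LS}}} = \omega^2\mathbf{V}_K\mathbf{\Psi}^{-1}\mathbf{V}_K^{\top}$ (accounting for noise variance $\omega^2$ rather than unit variance), and since LS is unbiased, the MSE is just $\text{tr}(\mathbf{R}_{\hat{\mathbf{x}}_{\text{LS}}}) = \omega^2\,\text{tr}(\mathbf{\Psi}^{-1}) = \omega^2\sum_{i=1}^K \sigma_i^{-1}$, using that $\mathbf{V}_K$ has orthonormal columns so $\text{tr}(\mathbf{V}_K\mathbf{\Psi}^{-1}\mathbf{V}_K^{\top}) = \text{tr}(\mathbf{\Psi}^{-1})$. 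That yields \eqref{LSMSE}.

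For the proposed biased estimator, I would start from the exact (pre-approximation) form $\hat{\mathbf{x}} = \mathbf{V}_K(\mathbf{\Psi} + \beta\mathbf{I})^{-1}(\mathbf{C}\mathbf{V}_K)^{\top}\mathbf{y}_{\mathcal{S}}$, which is legitimate because $\mathbf{T}^{\text{FGFT}}$ sufficiently approximates $\mathbf{T}$ (the hypothesis of the proposition) so that the approximation in \eqref{proposition3} is tight. Writing $\mathbf{y}_{\mathcal{S}} = \mathbf{C}\mathbf{x} + \mathbf{n}_{\mathcal{S}} = \mathbf{C}\mathbf{V}_K\tilde{\mathbf{x}}_K + \mathbf{n}_{\mathcal{S}}$, I decompose $\hat{\mathbf{x}} - \mathbf{x}$ into a bias term and a noise term. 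The bias term is $\mathbf{V}_K\left[(\mathbf{\Psi}+\beta\mathbf{I})^{-1}\mathbf{\Psi} - \mathbf{I}\right]\tilde{\mathbf{x}}_K = -\beta\mathbf{V}_K(\mathbf{\Psi}+\beta\mathbf{I})^{-1}\tilde{\mathbf{x}}_K$; its squared norm, using orthonormality of $\mathbf{V}_K$ and diagonalizing in the $\mathbf{U}$-basis, is $\sum_{i=1}^K \beta^2(\sigma_i+\beta)^{-2}(\mathbf{u}_i^{\top}\tilde{\mathbf{x}}_K)^2 = \sum_{i=1}^K (1 + \sigma_i/\beta)^{-2}(\mathbf{u}_i^{\top}\tilde{\mathbf{x}}_K)^2$. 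The noise term is $\mathbf{V}_K(\mathbf{\Psi}+\beta\mathbf{I})^{-1}(\mathbf{C}\mathbf{V}_K)^{\top}\mathbf{n}_{\mathcal{S}}$, whose expected squared norm is $\omega^2\,\text{tr}\!\left[(\mathbf{\Psi}+\beta\mathbf{I})^{-1}(\mathbf{C}\mathbf{V}_K)^{\top}(\mathbf{C}\mathbf{V}_K)(\mathbf{\Psi}+\beta\mathbf{I})^{-1}\right] = \omega^2\,\text{tr}\!\left[(\mathbf{\Psi}+\beta\mathbf{I})^{-2}\mathbf{\Psi}\right] = \omega^2\sum_{i=1}^K \sigma_i(\sigma_i+\beta)^{-2}$. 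Since the noise has zero mean, the cross term vanishes, and adding the two contributions gives \eqref{GFSMSE}.

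The main obstacle — really the only subtle point — is justifying that the $\mathbf{T}^{\text{FGFT}}$ approximation can be treated as exact, i.e. that working with $\mathbf{T} = \mathbf{V}_K\mathbf{V}_K^{\top}$ in place of $\mathbf{T}^{\text{FGFT}}$ incurs negligible error; this is exactly what the qualifier ``given $\mathbf{T}^{\text{FGFT}}$ sufficiently approximates the ideal LP filter'' is meant to license, and I would simply invoke it and carry out the computation with the exact filter. Everything else is routine linear algebra: the simultaneous diagonalization of $\mathbf{\Psi}$, $(\mathbf{\Psi}+\beta\mathbf{I})^{-1}$ and $\mathbf{\Psi}(\mathbf{\Psi}+\beta\mathbf{I})^{-2}$ in the basis $\mathbf{U}$, plus the trace identity $\text{tr}(\mathbf{V}_K\mathbf{B}\mathbf{V}_K^{\top}) = \text{tr}(\mathbf{B})$ for $\mathbf{V}_K^{\top}\mathbf{V}_K = \mathbf{I}$, and the standard fact $\mathbb{E}[\mathbf{n}_{\mathcal{S}}\mathbf{n}_{\mathcal{S}}^{\top}] = \omega^2\mathbf{I}$ for the noise covariance.
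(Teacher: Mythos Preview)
Your proposal is correct and follows essentially the same approach as the paper: both decompose the MSE into squared bias plus variance, diagonalize $\mathbf{\Psi}$ via $\mathbf{U}\,\mathrm{diag}(\sigma_i)\,\mathbf{U}^{\top}$, and use $\mathbf{V}_K^{\top}\mathbf{V}_K=\mathbf{I}$ together with $\mathbb{E}[\mathbf{n}_{\mathcal{S}}\mathbf{n}_{\mathcal{S}}^{\top}]=\omega^2\mathbf{I}$. Your bias step is slightly more economical---you use the identity $(\mathbf{\Psi}+\beta\mathbf{I})^{-1}\mathbf{\Psi}-\mathbf{I}=-\beta(\mathbf{\Psi}+\beta\mathbf{I})^{-1}$ directly, whereas the paper expands the full quadratic $(\mathbf{HCV}_K-\mathbf{V}_K)^{\top}(\mathbf{HCV}_K-\mathbf{V}_K)$ term by term---but this is a cosmetic difference, not a different argument.
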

\begin{proof}
The proof is detailed in Appendix \ref{robustnessProof}.
\end{proof}

When the noise variance $\omega^2$ is large enough to dominate MSE, the GFS recovery has lower MSE than the LS solution since $\frac{\sigma_i}{(\sigma_i+\beta)^2}<\frac{1} {\sigma _i}$, $\forall i$.
Therefore the proposed method is more robust to large noise than the LS reconstruction.

{The RHS of formula \eqref{GFSMSE} is consisted of bias and variance of estimator $\mathbf{\hat{x}}$ respectively.
For the bias part, $\mathbf{u}_i$ and $\mathbf{\tilde{x}}_K$ are constant in reconstruction, thus smaller $\beta$ will bring smaller bias.
For the variance part, smaller $\beta$ will bring larger variance.
Hence, given the noise variance $\omega^2$, the optimal $\beta$ can be designed to balance the bias-variance tradeoff to achieve the lowest MSE.
Instead of computing the gradient of function \eqref{GFSMSE}, we propose an empirical and practical strategy to design $\beta$ based on some prior information.
For lower MSE, $\beta$ should be comparable to $\sigma_i$ to combat large noise and to maintain reasonable bias.}
We observe that the average value of $\sigma_i$ is
\begin{equation}
\begin{split}
\bar{\sigma}=\frac{1}{K}\sum^{K}_{i=1}\sigma_i=\frac{1}{K}\text{tr}\left[{\left( {\mathbf{C}{\mathbf{V}_K}} \right)^{\top}}\mathbf{C}{\mathbf{V}_K}\right]=\frac{1}{K}\text{tr}\left(\mathbf{T}_{\mathcal{S}}\right)
\end{split}
\end{equation}

We adopt $\mathbf{T}^{\text{FGFT}}$ to approximate $\mathbf{T}$ and select $\beta$ to be the lower bound of the above average value
\begin{equation}\label{beta}
\begin{split}
&\hspace{0cm}\beta=\frac{1}{K}\sum_{i\in \mathcal{J}} \mathbf{T}^{\text{FGFT}}_{ii}
\end{split}
\end{equation}
where $|\mathcal{J}|=M$ and $\mathcal{J} \subset \mathcal{V}$. $\forall i\in\mathcal{J}$ and $\forall j\in \mathcal{J}^c$, $\mathbf{T}^{\text{FGFT}}_{ii}\leq\mathbf{T}^{\text{FGFT}}_{jj}$.
We will validate the performance of this selection of $\beta$ via simulations.

\vspace{-0.05in}
\section{Experimental Results}
\label{sec:experiments}
\begin{figure}
\begin{center}
\includegraphics[width=170pt,height=160pt]{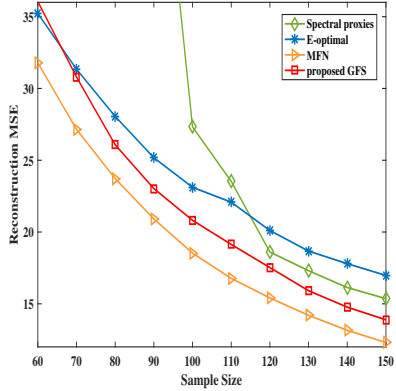}
\caption{Reconstruction MSE of different sampling algorithms at $\textrm{SNR}=0\textrm{dB}$ on the Minnesota graph with $K=50$}
\label{MinnesotaMSE}
\end{center}
\end{figure}
\begin{figure*}[htbp]
    \centering
     \subfigure[Original signal, $K=50$]{
    \begin{minipage}{5.5cm}
    \centering
        \includegraphics[width=2.0in,height=1.8in]{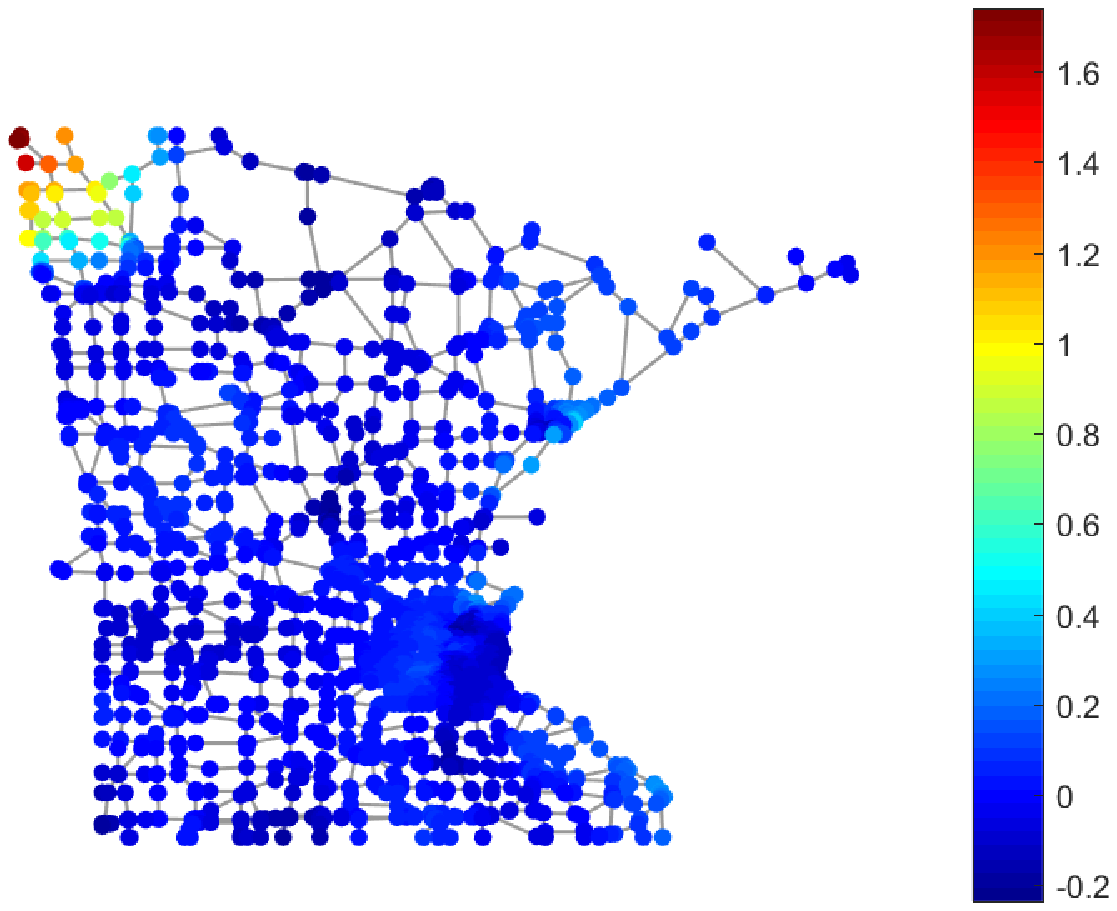}
    \end{minipage}%
    }
     \subfigure[Spectral proxies \cite{AO}, MSE=67.4]{
    \begin{minipage}{5.5cm}
    \centering
        \includegraphics[width=2.0in,height=1.8in]{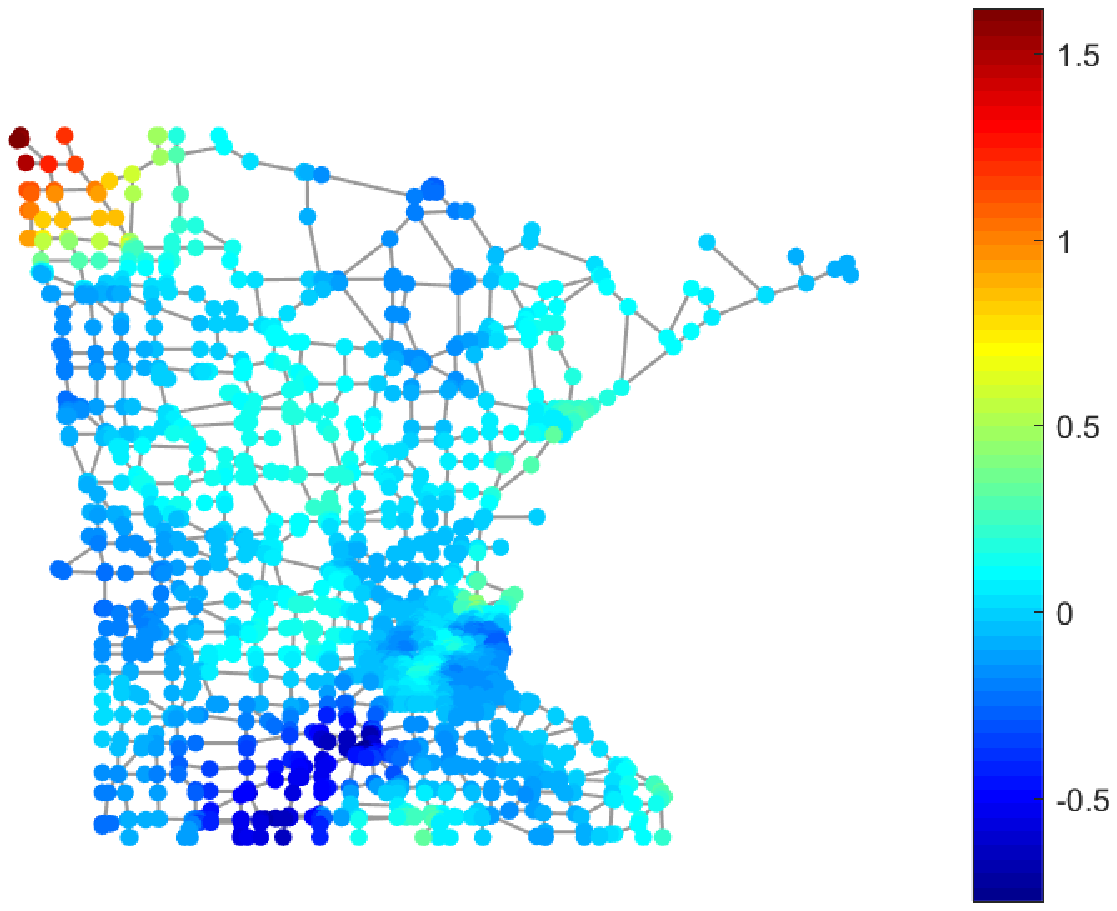}
    \end{minipage}%
    }
    \subfigure[E-optimal \cite{SCsampling}, MSE=22.3]{
    \begin{minipage}{5.5cm}
    \centering
        \includegraphics[width=2.0in,height=1.8in]{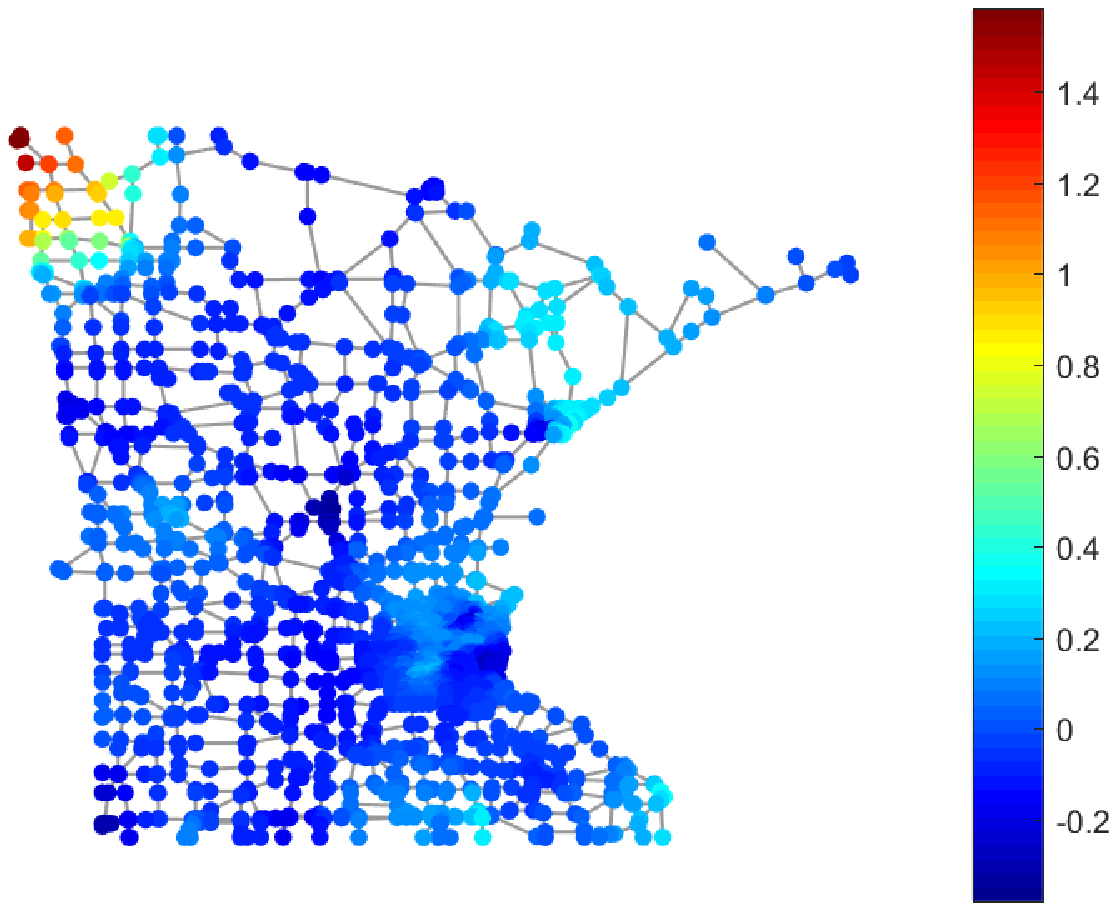}
    \end{minipage}%
    }
     \centering
     \subfigure[MIA \cite{SPL}, MSE=1525]{
    \begin{minipage}{5.5cm}
    \centering
        \includegraphics[width=2.0in,height=1.8in]{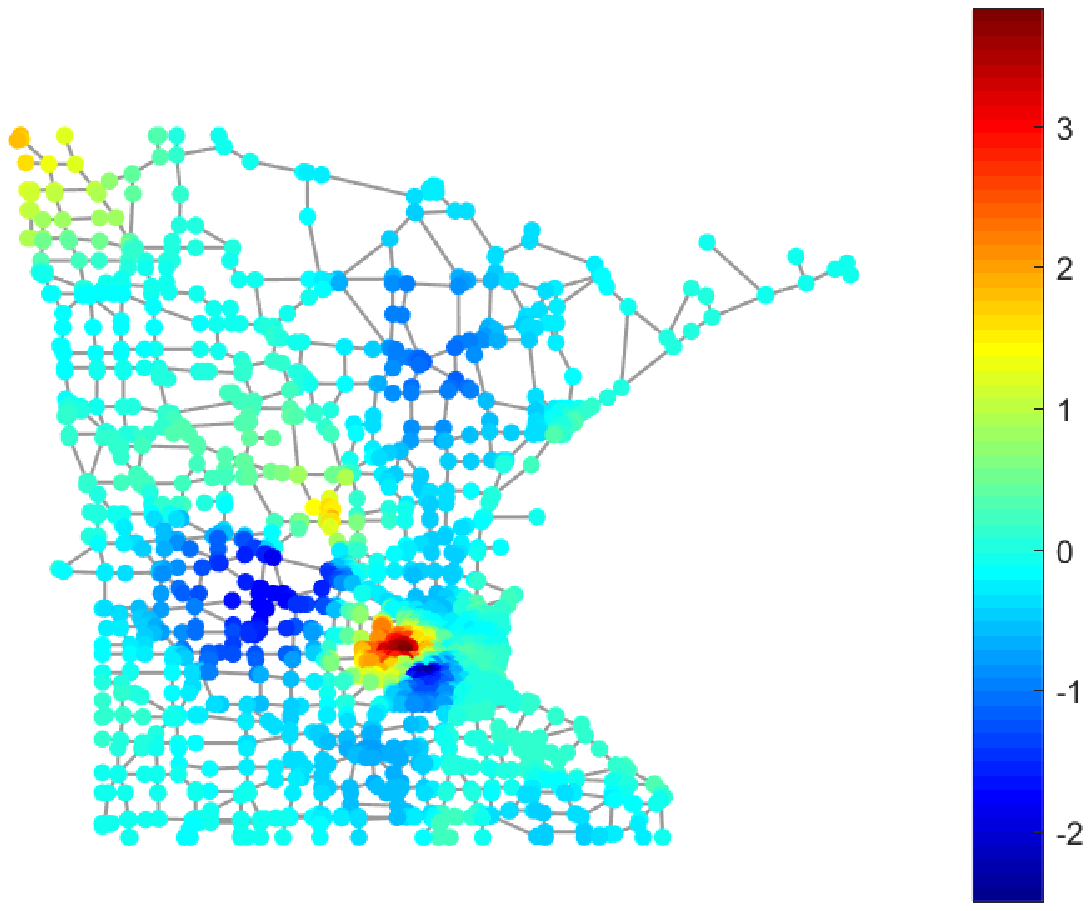}
    \end{minipage}%
    }
     \subfigure[MFN \cite{Uncertainty}, MSE=26.2]{
    \begin{minipage}{5.5cm}
    \centering
        \includegraphics[width=2.0in,height=1.8in]{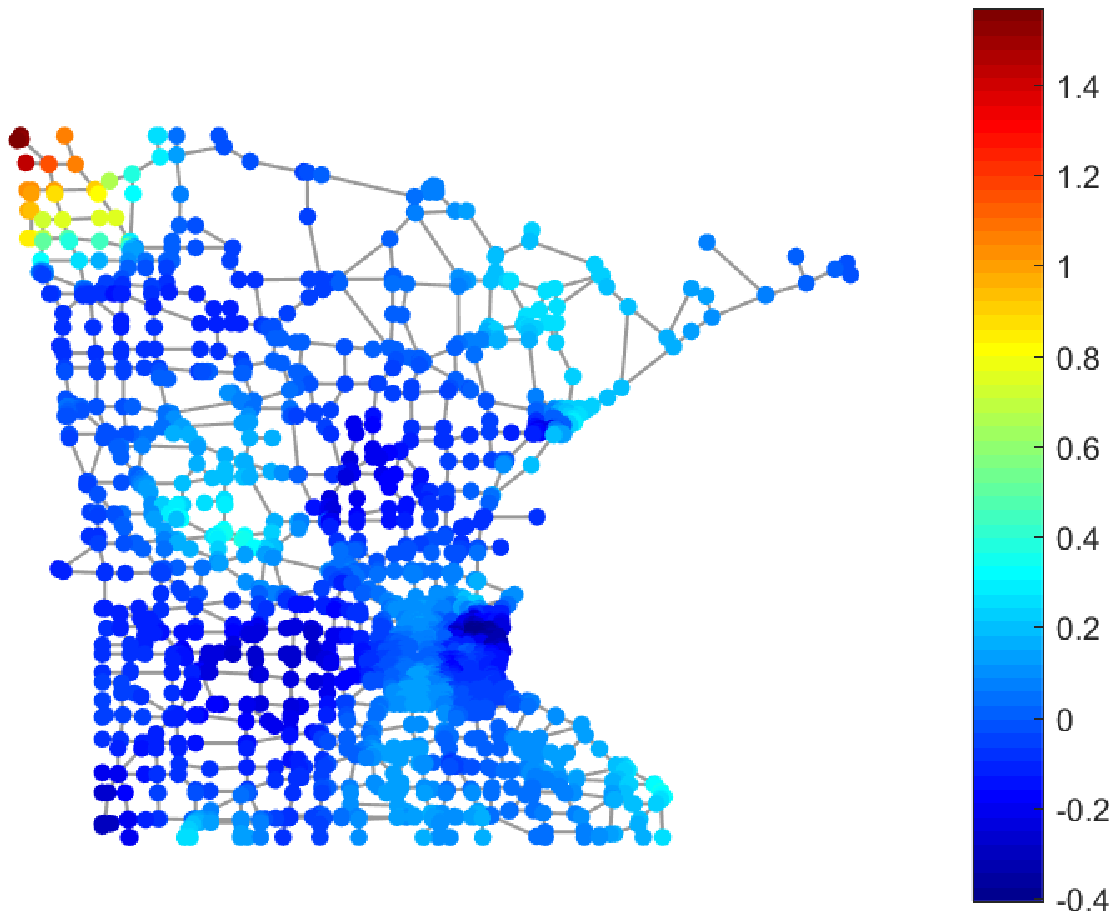}
    \end{minipage}%
    }
    \subfigure[Proposed GFS, MSE=16.5]{
    \begin{minipage}{5.5cm}
    \centering
        \includegraphics[width=2.0in,height=1.8in]{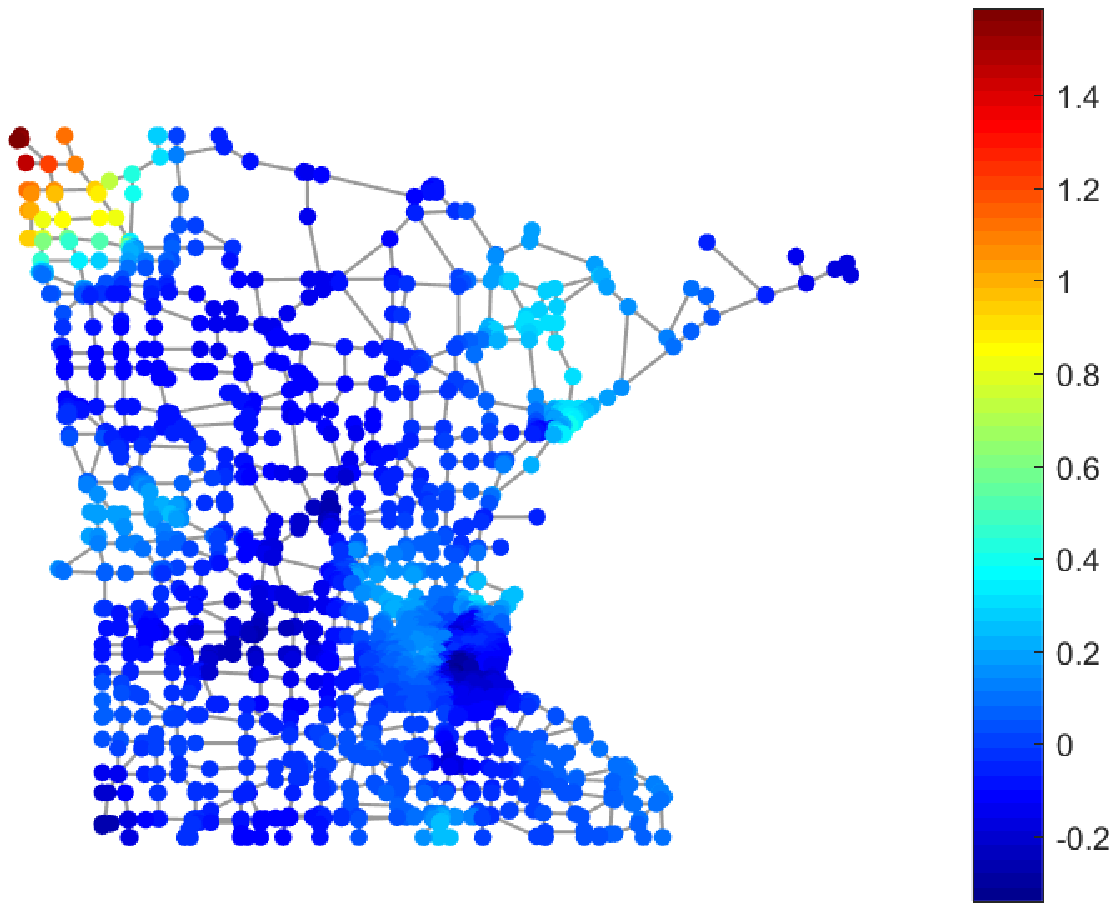}
    \end{minipage}%
    }
\caption{One specific demonstration of the reconstruction MSE of different sampling algorithms at $\textrm{SNR}=0\textrm{dB}$ on Minnesota graph with $M=90$}
\label{MinnesotaDemo}
\end{figure*}
Our proposed sampling and reconstruction strategies are evaluated via experimental simulations.
All experiments were performed in MATLAB R2017b, running on a PC with Intel Core I3 3.7 GHz CPU and 16GB RAM.
The simulated artificial graphs \footnote{All of these graphs are generated using GSP open source in \cite{GSPtool}.} are described as follows:

{\textbf{(G1)}} Community graphs with 1000 nodes and 31 communities;

{\textbf{(G2)}} Sensor graphs with 1000 nodes.

{\textbf{(G3)}} Hype-cube graphs with 1002 nodes in 3-dimension.

We also perform experiments on the Minnesota network, which is a real-world graph with $2642$ nodes.
Artificial graph signals are assumed to be bandlimited with bandwidth $K=50$.
They are constructed by generating appropriate GFT coefficients: the non-zero GFT coefficients are randomly generated from the distribution $\mathcal{N}\left(1,0.5^{2}\right)$, and the coefficients after $K=50$ are zeros.
The generated graph signals are corrupted by additional white Guassian noise (AWGN) with different signal-to-noise ratios (SNRs).

\hspace{-0.1in}
\subsection{Static Global Sampling}
\subsubsection{Reconstruction MSE evaluation}
The metric used to evaluate a sampling scheme is MSE of the reconstructed signal with respect to the ground truth signal.
We compare our proposed GFS sampling algorithm to competing schemes that employ other criteria: SP in \cite{AO}, E-optimal in \cite{SCsampling}, MIA in \cite{SPL} and MFN in \cite{Uncertainty}. 
The truncation degree in MIA is $L=10$.
The SGWT toolbox \cite{GSPtool} is adopted to approximate the ideal LP filter in the MIA method, where $p=25$ and $\alpha=30$ \cite{semisupervised}.
For SP, the approximation order is $k=10$.
As described in Section \ref{sec:AOptimal}, for GFS sampling, the shift parameter $\mu$ is set to be $1/(\kappa_0-1)$, where we set $\kappa_0=100$ as the condition number constraint.
The number of Given rotations matrices is $J=6N\text{log}N$ for {\bf{G1}}--{\bf{G3}} and $J=30N\text{log}N$ for the Minnesota graph.

{Fig.\;\ref{staticMSE} shows the reconstruction MSE of different sampling strategies in terms of sample size in {\bf{G1}}--{\bf{G3}} at different SNRs. }
As shown in Fig.\;\ref{staticMSE}, our proposed GFS achieves lower MSE than three other competing sampling strategies, and closely approximates MFN's performance in community, sensor and cube graphs at different SNRs.
{As previously discussed, MFN greedily minimizes MSE one node at a time directly, thus it needs to compute the first $K$ eigenvectors once and perform matrix inverse for each metric evaluation, as detailed in Section \ref{sec:greedy}.}
In contrast, our proposed strategy can obtain comparable performance with significantly lower complexity.

Fig.\;\ref{MinnesotaMSE} shows the reconstruction MSE of different sampling methods for the real-world Minnesota network, showing that our proposed sampling also outperforms SP and E-optimal sampling, where the performance of MIA is too poor to be shown.
For visualization, we further present one specific demonstration in Fig.\,\ref{MinnesotaDemo}.
Fig.\,\ref{MinnesotaDemo} (a) and (b)--(f) show the original graph signal and the interpolated  signal of different sampling methods, respectively.
Fig.\,\ref{MinnesotaDemo} (b)--(f) show that the proposed GFS achieves the lowest MSE value among competing strategies, and the reconstructed signal is visually smoother with respect to the graph topology compared with signal recovered by SP and MIA.
\begin{figure}
\begin{center}
\includegraphics[width=220pt,height=150pt]{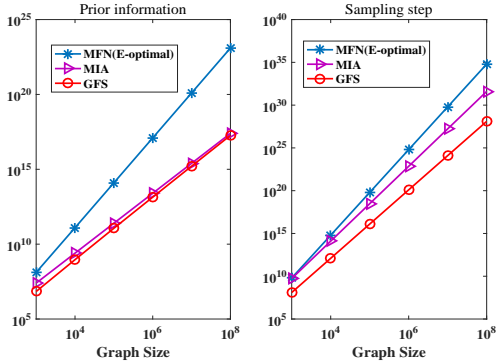}
\caption{Numerical comparison of complexity listed in Table \ref{complexity} with $M=0.05N$.}
\label{complexityFig}
\end{center}
\end{figure}
\begin{figure*}[htbp]
    \centering
     \subfigure[Reconstruction MSE in {\textbf{G1}}]{
    \begin{minipage}{5.5cm}
    \centering
        \includegraphics[width=2.0in,height=2.0in]{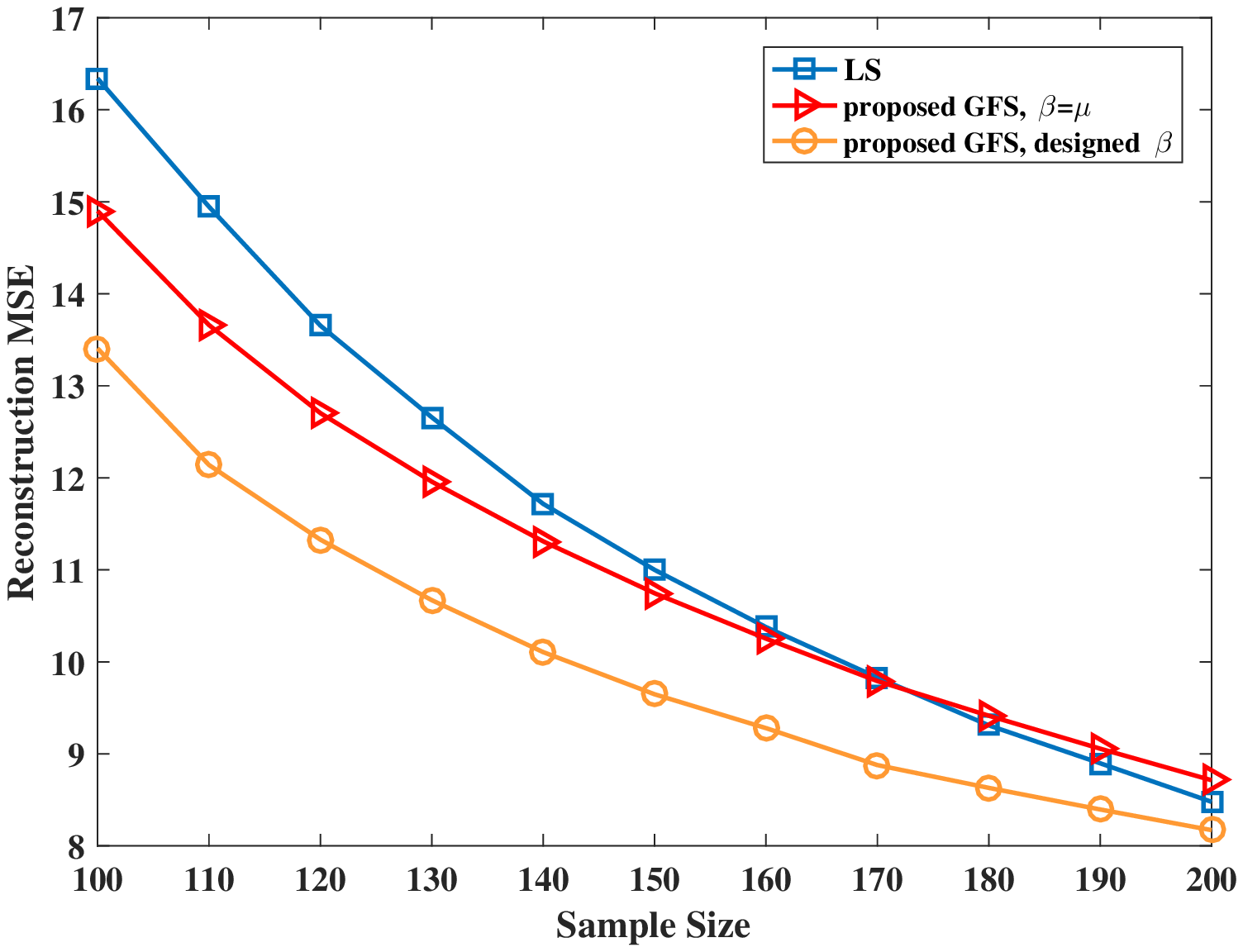}
    \end{minipage}%
    }
     \subfigure[Reconstruction MSE in {\textbf{G2}}]{
    \begin{minipage}{5.5cm}
    \centering
        \includegraphics[width=2.0in,height=2.0in]{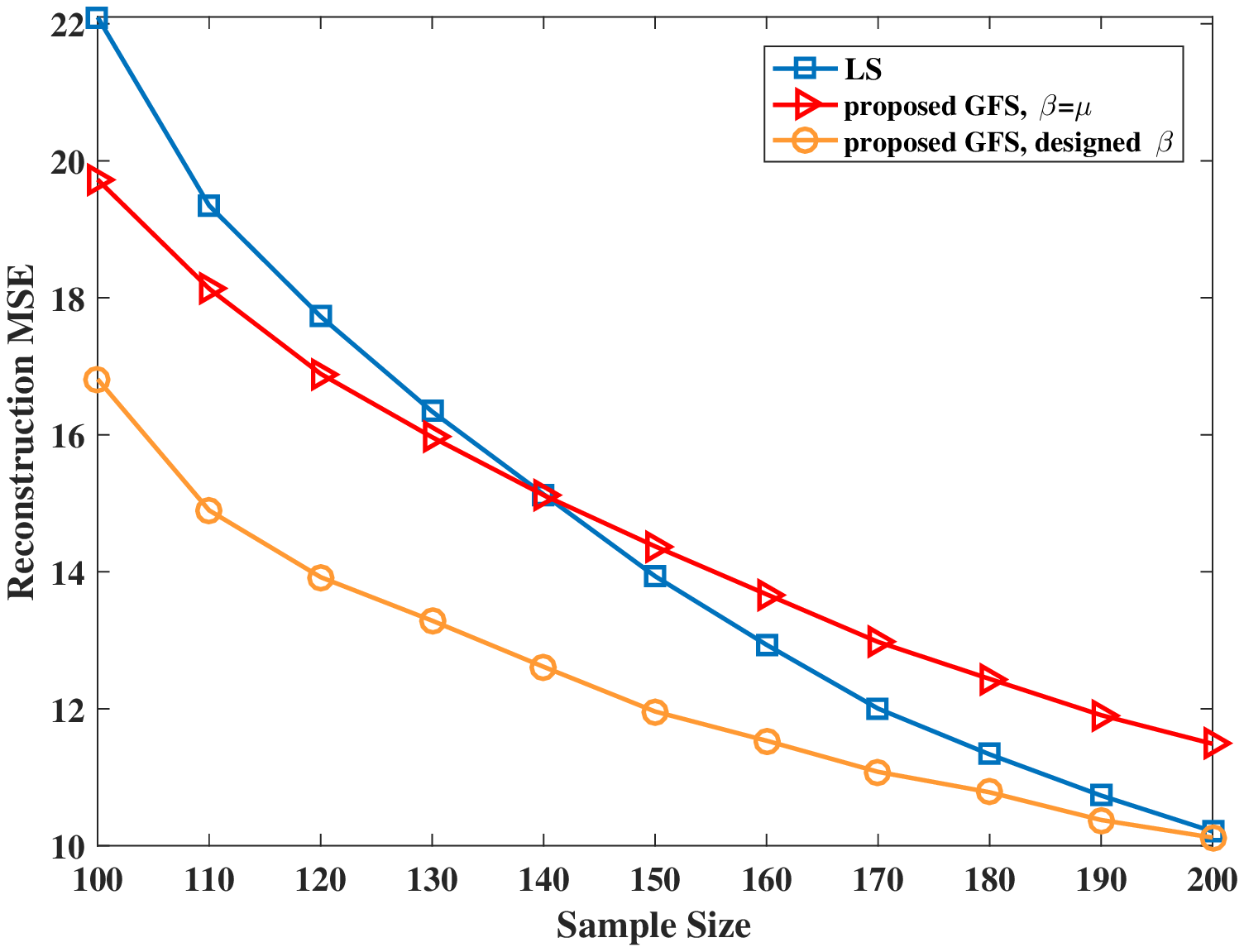}
    \end{minipage}%
    }
    \subfigure[Reconstruction MSE in {\textbf{G3}}]{
    \begin{minipage}{5.5cm}
    \centering
        \includegraphics[width=2.0in,height=2.0in]{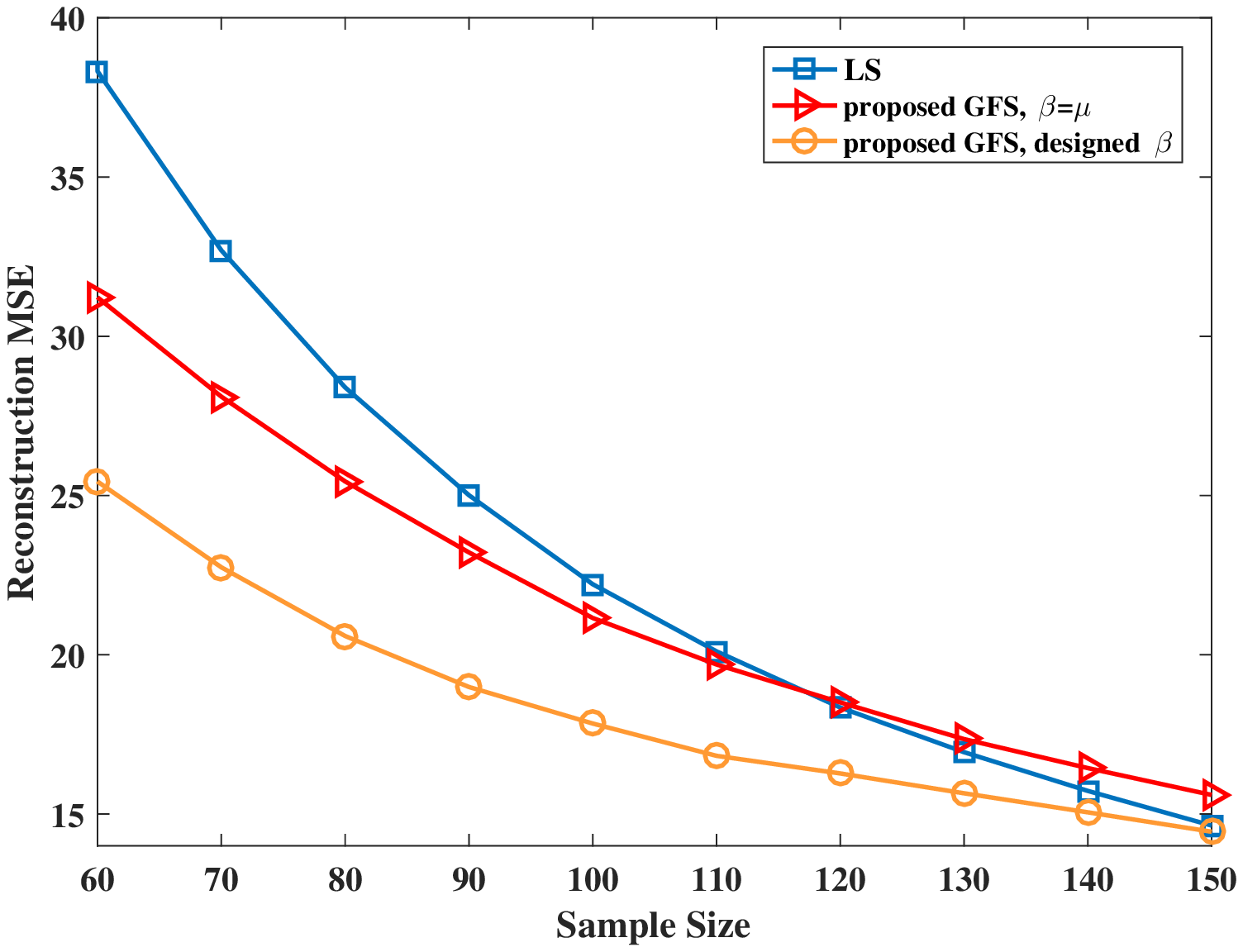}
    \end{minipage}%
    }
\caption{Experimental results of different recovery strategies in different graph types at $\textrm{SNR}=0\textrm{dB}$. The original graph signals are all sampled via GFS algorithm.}
\label{recMSE}
\end{figure*}

\subsubsection{Numerical complexity comparison}
{We evaluate the complexity of GFS via numerical simulations.
Specifically, we compute the numerical complexity order value in Table \ref{complexity} assuming $M=0.05 N$. }
In these experiments, we set $R=10$, $T_1=T_2(k)=100$ and $|\mathcal{E}|=\mathcal{O}(N)$.
Fig.\;\ref{complexityFig} shows the numerical complexity value and shows that GFS has lower complexity than MFN, E-optimal and MIA, especially for large graphs.
\subsubsection{Appropriateness of the shift parameter $\mu$}
The augmented A-optimality function will well approximate the original criterion \eqref{original formulation} if $\mu$ is sufficienty  small.
However, we customize a $\mu$ based on condition number in the GFS sampling which would bring some approximation error.
Table \ref{mu} shows the performance of the designed $\mu=1/(\kappa_0-1)$ approaches that of a extremely small $\mu$.
This indicates the proposed $\mu$ for stable computation won't bring a large performance gap between the original A-optimality value and the augmented one.
In Section \ref{sec:reconstruction}, we propose a new shift parameter $\beta$ based on equation \eqref{beta}.
Table \ref{mu} tells that if we apply $\mu=\beta$, the sampling performance will also be comparable.
\vspace{-0.15in}
\subsection{Static Graph Signal Reconstruction}
Simulations on evaluating the proposed GFS reconstruction algorithm are performed subsequently, where the samples were all collected by the GFS sampling algorithm.
Fig.\,\ref{recMSE} shows the reconstruction MSE of different recovery strategies in {\bf{G1}}--{\bf{G3}} at 0dB.
As illustrated in Fig.\,\ref{recMSE}, the GFS reconstruction outperforms the LS reconstruction for different graphs when the shift parameter $\beta$ is designed by \eqref{beta}, which validates the robustness of the proposed GFS reconstruction compared with the LS solution.

\begin{table}
\caption{Reconstruction MSE of The Proposed GFS Algorithm in Different Shift $\mu$ at 0dB}
\label{mu}
\begin{center}
\begin{tabular}{c|c|c|c|c|c|c}
\hline
\multirow{2}*{Graph}&\multirow{2}*{$\mu$}&\multicolumn{5}{c}{Sample size}\\
\cline{3-7}
~&~&100&110&120&130&140\\
\hline
\hline
\multirow{3}*{\textbf{G1}}&$10^{-5}$&16.10&14.55&13.43&12.44&11.63\\
\cline{2-7}
~&$1/99$&16.07&14.59&13.43&12.46&11.64\\
\cline{2-7}
~&\eqref{beta}&16.10&14.62&13.47&12.50&11.64\\
\hline
\hline
\multirow{3}*{\textbf{G2}}&$10^{-5}$&20.77&18.68&17.09&15.77&14.63\\
\cline{2-7}
~&$1/99$&20.77&18.73&17.12&15.78&14.64\\
\cline{2-7}
~&\eqref{beta}&21.31&19.36&17.59&16.33&15.08\\
\hline
\hline
\multirow{3}*{\textbf{G3}}&$10^{-5}$&23.09&20.79&18.99&17.48&16.19\\
\cline{2-7}
~&$1/99$&22.98&20.77&18.99&17.49&16.21\\
\cline{2-7}
~&\eqref{beta}&23.11&20.87&19.04&17.55&16.32\\
\hline
 \end{tabular}
 \end{center}
\end{table}

Suppose that $\beta=\mu=1/(\kappa_0-1)$, the performance of the proposed GFS sometimes achieves better performance than the LS solution.
In this case, the GFS reconstruction does not compute the matrix inverse, thus enjoys lower complexity in recovery, as remarked in Section \ref{sec:reconstruction}.
Based on the experimental results in Fig.\,\ref{recMSE} and Table \ref{mu}, we can design the shift $\mu$ in sampling and $\beta$ in reconstruction both from equation \eqref{beta}.

\vspace{-0.15in}
\subsection{Dynamic Subset Sampling}
We perform simulations to evaluate the proposed GFS-NE dynamic sampling strategy.
In our experiments, we set $J=18N\text{log}(N)$, $P_0=0.8$, $\epsilon=2\%$ and $K_0=50$.
We first randomly pick up $0.8N$ nodes to form the initial available subset, then this available set evolves over time based on the model defined in Section \ref{sec:DSS}
\footnote{
There sometimes appears a {bad} initial subset, then all sampling algorithms perform extremely bad on it, which is meaningless to do sampling work on such subset.
We propose one strategy to recognize bad subsets, see appendix \ref{badSET} for details, on which we  do not collect samples.}.
For comparison, we simulate the GFS, the E-optimal \cite{SCsampling} and SP \cite{AO} sampling algorithms, all of which select samples from scratch from new available subset.
Fig.\,\ref{dynamicMSE} shows the reconstruction MSE of different sampling strategies in terms of time evolution at 0dB with fixed sample size.
It shows that our proposed GFS-NE ourperforms SP and E-optimal and is comparable to GFS selecting samples from scratch at each time $t$ \footnote{Note that a competing greedy strategy can also replace each unavailable node in $\mathcal{S}_t$ one-by-one instead of selecting the entire set at time $t+1$ from scratch.
However, {it has to compute the initial objective value of the SA set to proceed sampling}, and its performance will be necessarily be no better than the ``start-over" approach.
Hence our method will also outperform competing schemes with unavailable node replacement. }. \begin{figure*}[htbp]
    \centering
     \subfigure[Reconstruction MSE in {\textbf{G1}}, $M=150$]{
    \begin{minipage}{5.5cm}
    \centering
        \includegraphics[width=2.0in,height=2.0in]{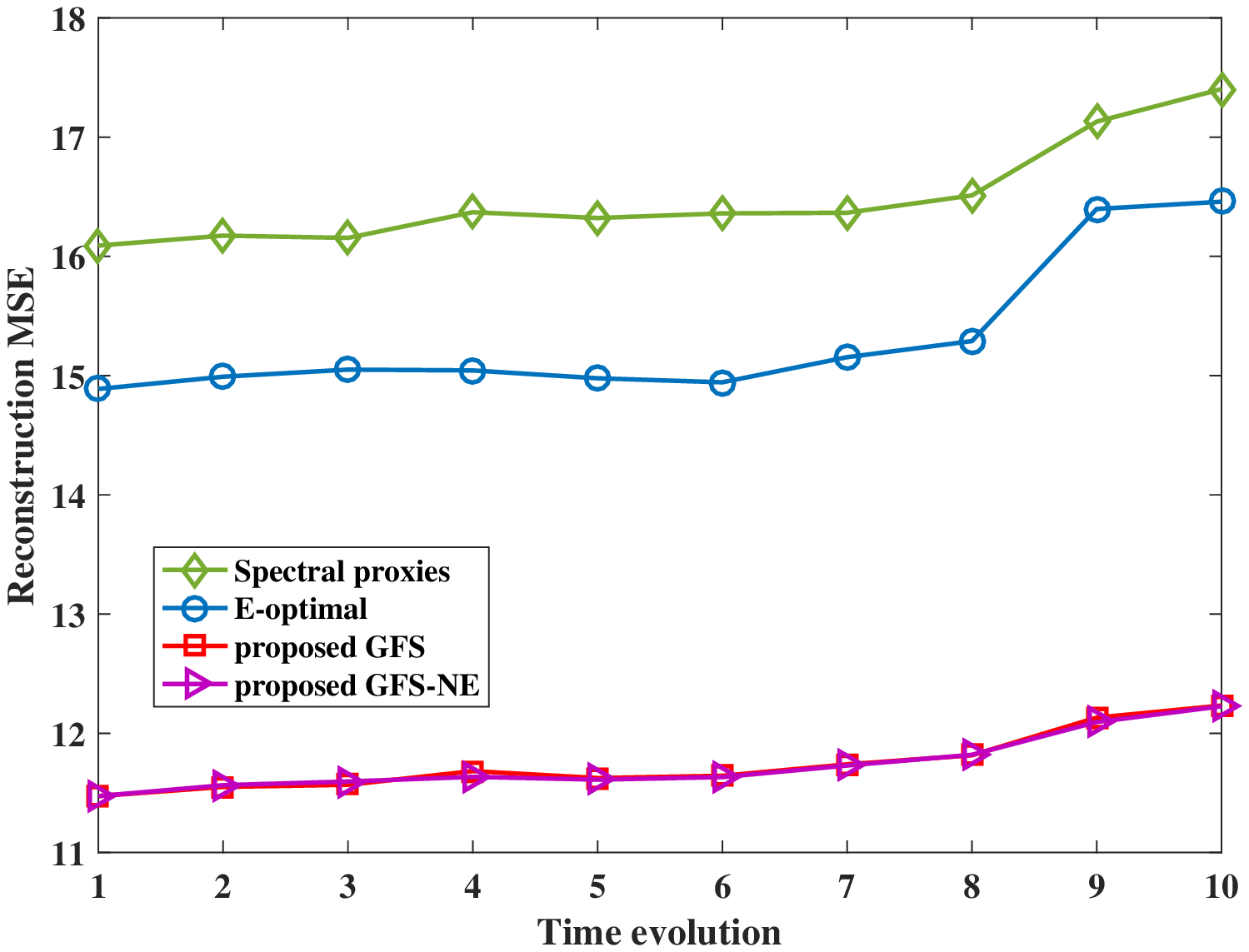}
    \end{minipage}%
    }
     \subfigure[Reconstruction MSE in {\textbf{G2}}, $M=150$]{
    \begin{minipage}{5.5cm}
    \centering
        \includegraphics[width=2.0in,height=2.0in]{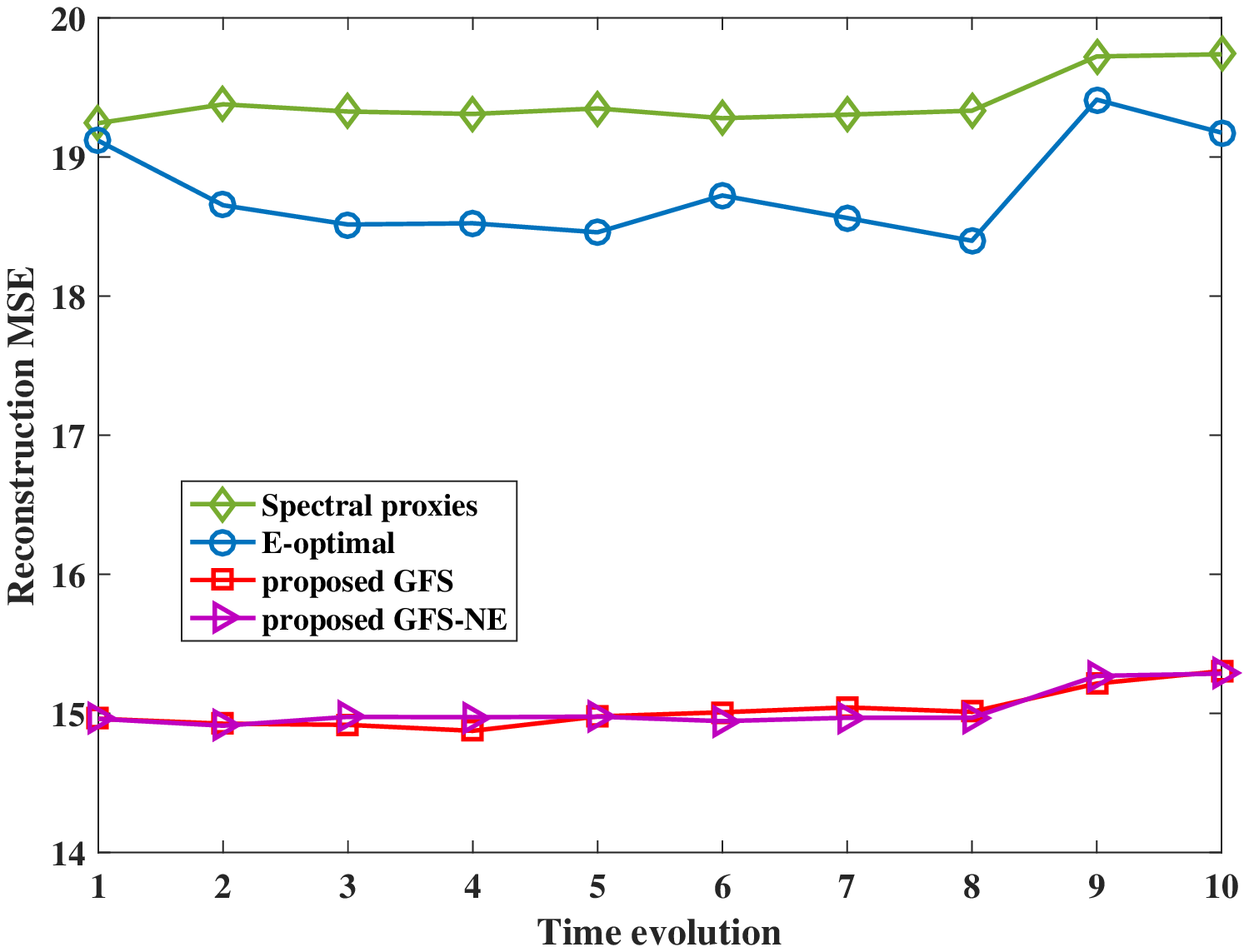}
    \end{minipage}%
    }
    \subfigure[Reconstruction MSE in {\textbf{G3}}, $M=100$]{
    \begin{minipage}{5.5cm}
    \centering
        \includegraphics[width=2.0in,height=2.0in]{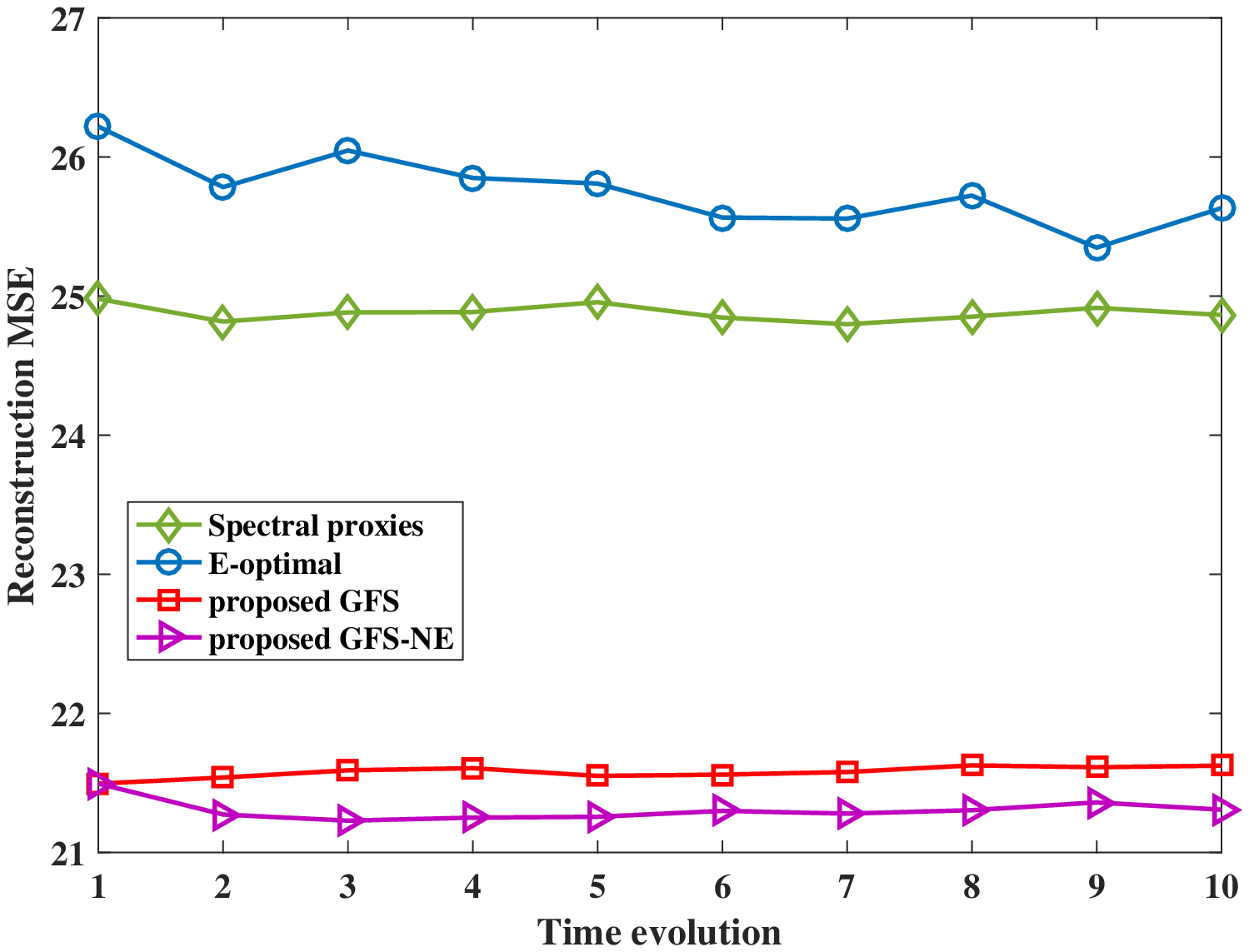}
    \end{minipage}%
    }
\caption{Experimental results of dynamic subset sampling. Different sampling algorithms are simulated in different graph types at $\textrm{SNR}=0\textrm{dB}$. The original graph signals are all recovered from selected samples by the LS reconstruction.}
\label{dynamicMSE}
\end{figure*} 
\vspace{-0.05in}
\section{Conclusion}
\label{sec:conclusion}
Graph sampling with noise remains a challenging problem: MMSE leads to the known A-optimality criterion for independent noise, which is expensive to evaluate and difficult to optimize.
In this paper, we propose an augmented objective
based on Neumann series expansion to approximate the A-optimality criterion, which can be expressed as a function of an ideal LP graph filter, efficiently approximated via fast graph Fourier transform.
Using the augmented objective, we select nodes greedily without any matrix inverse computation based on a matrix inversion lemma.
Further, we extend our sampling scheme to the dynamic network case, where the availability of nodes is time-varying.
For signal recovery, we design an accompanied signal reconstruction strategy to obtain a biased but robust estimator.
Experimental results validate the superiority of the proposed sampling strategy compared with existing schemes and demonstrate the effectiveness of our biased recovery algorithm over unbiased LS reconstruction. 

\appendices
\section{Proof of Proposition \ref{infinite series}}
\label{eigenScope}
From the definition of $\mathbf{C}$, ${\mathbf{C}^{\top}}\mathbf{C} = {\left[ \begin{array}{l}
 {\mathbf{I}_\mathcal{S}}~~~{\bf{0}} \\
 {\bf{0}}~~~~{\kern 1pt}{\bf{0}} \\
 \end{array} \right]}$ under appropriate permutation.
Then, $\forall \mathbf{x} \in {\mathbb{R}^K}$ and $\left\| \mathbf{x} \right\|_2 = 1$,
\begin{equation}\label{product}
\begin{split}
\begin{array}{l}
 {\mathbf{x}^{\top}}\mathbf{\Psi} \mathbf{x} = {\left( {{\mathbf{V}_K}\mathbf{x}} \right)^{\top}}\left( {{\mathbf{C}^{\top}}\mathbf{C}} \right)\left( {{\mathbf{V}_K}\mathbf{x}} \right) \\
~~~~~~~~= {\mathbf{b}^{\top}}\left[ \begin{array}{l}
 {\mathbf{I}_\mathcal{S}}~~~{\bf{0}} \\
 {\bf{0}}~~~~ {\kern 1pt} {\bf{0}} \\
 \end{array} \right]\mathbf{b}={\mathbf{b}^{\top}_{{\mathcal{S}}}}{\mathbf{b}_{{\mathcal{S}}}}, \\
 \end{array}
 \end{split}
\end{equation}
{where $\mathbf{b}={{\mathbf{V}_K}\mathbf{x}}$}.

Since ${\mathbf{b}^{\top}}\mathbf{b} = {\left( {{\mathbf{V}_K}\mathbf{x}} \right)^{\top}}\left( {{\mathbf{V}_K}\mathbf{x}}\right)=1$, $0 \le {\mathbf{x}^{\top}}\mathbf{\Psi} \mathbf{x} \le 1$.
Because $\textrm{rank}{{{\left( \mathbf{C}{\mathbf{V}_K} \right)}}} = K$, $\mathbf{\Psi}$ is positive definite, which results in $0 < {\mathbf{x}^{\top}}\mathbf{\Psi} \mathbf{x} \le 1$ and $0 \le {\mathbf{x}^{\top}}\mathbf{\Phi} \mathbf{x} < 1$.
Due to the {Rayleigh quotient theorem}, $0 \le {\delta _i}<1$.
\section{Proof of Lemma \ref{submodular}}
\label{submodularProof}
If we view matrix $\tilde{\mathbf{V}}_K$ as $\tilde{\mathbf{V}}_K=[\mathbf{t}_1~\mathbf{t}_2~...\mathbf{t}_N]^{\top}$ with $\mathbf{t}_i\in \mathbb{R}^{K}$, then $(\mathbf{C}\tilde{\mathbf{V}}_K)^{\top}\mathbf{C}\tilde{\mathbf{V}}_K=\sum_{i\in \mathcal{S}}\mathbf{t}_i\mathbf{t}^{\top}_i$.
Define $\mathbf{Z}(\mathcal{S})=\sum_{i\in \mathcal{S}}\mathbf{t}_i\mathbf{t}^{\top}_i+\mu\mathbf{I}$, we know that  $\mathbf{Z}\left(\mathcal{S}\cup\{j\}\right)=\mathbf{Z}(\mathcal{S})+\mathbf{t}_j\mathbf{t}^{\top}_j$ for any $j\not\in\mathcal{S}$ and $g(\mathcal{S})=\text{tr}[\mathbf{Z}(\mathcal{S})^{-1}]$.

(i)~Monotonic decreasing
\begin{equation}
\begin{split}
g(\mathcal{S}\cup\{j\})-g(\mathcal{S})=\text{tr}\left[(\mathbf{Z}(\mathcal{S})+\mathbf{t}_j\mathbf{t}^{\top}_j)^{-1}\right]
-\text{tr}[\mathbf{Z}(\mathcal{S})^{-1}]\nonumber
\end{split}
\end{equation}
for any $j\not\in\mathcal{S}$.

From Lemma \ref{matrix lemma2}, we know that
\begin{equation}
\begin{split}
  &  g(\mathcal{S}\cup\{j\})-g(\mathcal{S})=-\text{tr}\left[\frac{\mathbf{Z}(\mathcal{S})^{-1}
    \mathbf{t}_j\mathbf{t}^{\top}_j\mathbf{Z}(\mathcal{S})^{-1}}
    {1+\mathbf{t}^{\top}_j\mathbf{Z}(\mathcal{S})^{-1}\mathbf{t}_j}\right]\\
 & \hspace{2.8cm}  =-\frac{\|\mathbf{Z}(\mathcal{S})^{-1}
    \mathbf{t}_j\|^2_2}
    {1+\mathbf{t}^{\top}_j\mathbf{Z}(\mathcal{S})^{-1}\mathbf{t}_j}
\end{split}
\end{equation}

It is easy to prove the eigenvalues of $\mathbf{Z}(\mathcal{S})$ are in $(\mu,1+\mu]$ for any $\mathcal{S}$ from the similar derivation in Proposition 1 \cite{SPL}.
Therefore, $\mathbf{t}^{\top}_j\mathbf{Z}(\mathcal{S})^{-1}\mathbf{t}_j\geq
\lambda^{-1}_{\text{max}}[\mathbf{Z}(\mathcal{S})]\cdot\|\mathbf{t}_j\|^2_2\geq\frac{\|\mathbf{t}_j\|^2_2}{1+\mu}>0$ from {Rayleigh quotient theorem}.
Using this result and combining it with the above two equations, we know that
$g(\mathcal{S}\cup\{j\})-g(\mathcal{S})\leq0$ for any $j\not\in\mathcal{S}$ which implies the set function $g$ is monotonic decreasing.

(ii) $\alpha$-supermodular

We first present the definition of $\alpha$-supermodularity introduced in paper \cite{greedybound}.
\begin{definition}
A set function $g:~2^{\mathcal{V}}\rightarrow \mathbb{R}$ is $\alpha$-supermodular if for all sets $\mathcal{A}\subseteq\mathcal{B}\subseteq\mathcal{V}$ and all $j\not\in\mathcal{B}$, the following equation holds for some $\alpha\geq0$
\begin{equation}\label{alphaSupermodular}
    g\left(\mathcal{A}\cup\{j\}\right)-g(\mathcal{A})\leq \alpha\left[ g\left(\mathcal{B}\cup\{j\}\right)-g(\mathcal{B})\right]
\end{equation}
\end{definition}

$\alpha$-supermodularity is only of interest when $\alpha$ takes the largest value \cite{greedybound}:
\begin{equation}\label{minimalValue}
\begin{split}
&\hspace{0cm}   \alpha=\mathop{\text{min}}\limits_{\mathop{\mathcal{A}\subseteq\mathcal{B}\subseteq\mathcal{V}}\limits
    _{j\not\in\mathcal{B}}}
    \frac{g\left(\mathcal{A}\cup\{j\}\right)-g(\mathcal{A})}{g\left(\mathcal{B}\cup\{j\}\right)-g(\mathcal{B})}\\
&\hspace{0.3cm}
=\mathop{\text{min}}\limits_{\mathop{\mathcal{A}\subseteq\mathcal{B}\subseteq\mathcal{V}}\limits
    _{j\not\in\mathcal{B}}}
    \frac{1+\mathbf{t}^{\top}_j\mathbf{Z}(\mathcal{B})^{-1}\mathbf{t}_j}
    {1+\mathbf{t}^{\top}_j\mathbf{Z}(\mathcal{A})^{-1}\mathbf{t}_j}\cdot
    \frac{\mathbf{t}^{\top}_j\mathbf{Z}(\mathcal{A})^{-2}\mathbf{t}_j}
    {\mathbf{t}^{\top}_j\mathbf{Z}(\mathcal{B})^{-2}\mathbf{t}_j}
\end{split}
\end{equation}

Since $\lambda^{-1}_{\text{max}}[\mathbf{Z}(\mathcal{S})]\|\mathbf{t}_j\|^2_2
\leq\mathbf{t}^{\top}_j\mathbf{Z}(\mathcal{S})^{-1}\mathbf{t}_j\leq
\lambda^{-1}_{\text{min}}[\mathbf{Z}(\mathcal{S})]\|\mathbf{t}_j\|^2_2$, the lower bound of $\alpha$ is
\begin{equation}\label{alphaLB}
    \alpha\geq \frac{1+\lambda^{-1}_{\text{max}}[\mathbf{Z}(\mathcal{B})]\|\mathbf{t}_j\|^2_2}
    {1+\lambda^{-1}_{\text{min}}[\mathbf{Z}(\mathcal{A})]\|\mathbf{t}_j\|^2_2}\cdot
    \frac{\lambda^{-2}_{\text{max}}[\mathbf{Z}(\mathcal{A})]\|\mathbf{t}_j\|^2_2}
    {\lambda^{-2}_{\text{min}}[\mathbf{Z}(\mathcal{B})]\|\mathbf{t}_j\|^2_2}
\end{equation}

As we claimed, the eigenvalues of $\mathbf{Z}(\mathcal{S})$ are in $(\mu,1+\mu]$, so
\begin{equation}\label{LBwith mu}
    \alpha\geq \frac{1+(1+\mu)^{-1}\|\mathbf{t}_j\|^2_2}{1+\mu^{-1}\|\mathbf{t}_j\|^2_2}\cdot
    \frac{(1+\mu)^{-2}}{\mu^{-2}}\doteq\alpha'
\end{equation}

It is easy to drive that
\begin{equation}
    \frac{\partial\alpha'}{\partial\|\mathbf{t}_j\|^2_2}=\frac{(1+\mu)^{-2}}{\mu^{-2}}
    \cdot\frac{(1+\mu)^{-1}-{\mu}^{-1}}{[1+{\mu}^{-1}\|\mathbf{t}_j\|^2_2]^{2}}<0
\end{equation}
which demonstrates the function $\alpha'$ is decreasing in terms of $\|\mathbf{t}_j\|^2_2$.
Since $\mathbf{t}_j$ is a row of $\mathbf{V}_K$, \textit{i.e.}, $\|\mathbf{t}_j\|^2_2\leq1$, we have
\begin{equation}
\alpha\geq\frac{[1+(1+\mu)^{-1}](1+\mu)^{-2}}{[1+\mu^{-1}]\mu^{-2}}=\frac{\mu^3(\mu+2)}{(\mu+1)^4}>0
\end{equation}
Hence, there exists $\alpha>0$ to fulfill equation \eqref{alphaSupermodular} and its value can be bounded by $\frac{\mu^3(\mu+2)}{(\mu+1)^4}$, which means this function is approximately supermodular with parameter $\alpha$.

\section{Proof of Proposition \ref{robustness}}
\label{robustnessProof}
Assume the eigen-decomposition of $\mathbf{\Psi}$ is $\mathbf{\Psi}=\mathbf{U} \mathbf{\Lambda} {{\mathbf{U}^{\top}}}$
where $\mathbf{\Lambda}=\textrm{diag}\{\sigma _i\}$ and $\mathbf{U}\mathbf{U}^{\top}=\mathbf{I}$.
We have derived in Appendix.B in the paper \cite{SPL} that
\begin{align}
&\hspace{0cm} {\mathbb{E}}\left\| {{\bf{\hat x}_{\textrm{LS}}} - {\bf{x}}} \right\|_2^2
= {\omega ^2}\sum\limits_{i = 1}^K {\frac{1}{{1 - {\delta _i}}}}= {\omega ^2}\sum\limits_{i = 1}^K {\frac{1}{{ {\sigma _i}}}}.
\end{align}

Rewrite the GFS reconstruction $\hat{\mathbf{x}} = {\mathbf{V}_K}[ {{{(\mathbf{C}{\mathbf{V}_K})}^{\top}}\mathbf{C}{\mathbf{V}_K}}+\beta\mathbf{I} ]^{-1}{(\mathbf{CV}_K)^{\top}}\mathbf{y}_\mathcal{S}$ by $\hat{\mathbf{x}}=\mathbf{H}\mathbf{y}_\mathcal{S}=\mathbf{H}\mathbf{C}(\mathbf{x}+\mathbf{n})$, then the bias of this estimator is
\begin{align}
&\hspace{0cm} \textrm{Bias}({\bf{\hat x}};\beta)=\left\| \mathbb{E}[\bf{\hat x}]-\bf{x} \right\|_2
= \left\| (\bf{HC}-\bf{I})\bf{x} \right\|_2
\end{align}

Recall that the original graph signal was assumed $K$-bandlimited, so
\begin{equation}
\begin{split}
&\hspace{0cm}\textrm{Bias}^2({\bf{\hat x}};\mu)=\left\| (\mathbf{HCV}_K-\mathbf{V}_K)\tilde{\mathbf{x}}_K\right\|^2_2\\
&\hspace{0.5cm}=\tilde{\mathbf{x}}^{\top}_K(\mathbf{HCV}_K-\mathbf{V}_K)^{\top}(\mathbf{HCV}_K-\mathbf{V}_K)\tilde{\mathbf{x}}_K\\
\end{split}
\end{equation}
in which
\begin{equation}
\begin{split}
&\hspace{0cm}(\mathbf{HCV}_K-\mathbf{V}_K)^{\top}(\mathbf{HCV}_K-\mathbf{V}_K)\\
&\hspace{0cm}=\left[\mathbf{V}_K(\mathbf{\Psi}+\beta\mathbf{I})^{-1}\mathbf{\Psi}-\mathbf{V}_K\right]^{\top}
\left[\mathbf{V}_K(\mathbf{\Psi}+\beta\mathbf{I})^{-1}\mathbf{\Psi}-\mathbf{V}_K\right]\\
&\hspace{0cm}=\mathbf{\Psi}(\mathbf{\Psi}+\beta\mathbf{I})^{-2}\mathbf{\Psi}-\mathbf{\Psi}(\mathbf{\Psi}
+\beta\mathbf{I})^{-1}
-(\mathbf{\Psi}+\beta\mathbf{I})^{-1}\mathbf{\Psi}+\mathbf{I}\\
&\hspace{0cm}=\mathbf{U}\textrm{diag}\left(\frac{\sigma^2_i}{(\sigma_i+\beta)^2}-\frac{2\sigma_i}{\sigma_i+\beta}
+1\right)\mathbf{U}^{\top}\\
&\hspace{0cm}=\mathbf{U}\textrm{diag}\left[{\left(1+\frac{\sigma_i}{\beta}\right)^{-2}}\right]\mathbf{U}^{\top}\\
&\hspace{0cm}=\sum^K_{i=1}\left(1+\frac{\sigma_i}{\beta}\right)^{-2}\mathbf{u}_i\mathbf{u}^{\top}_i\nonumber
\end{split}
\end{equation}

Therefore,
\begin{equation}
\begin{split}
&\hspace{0cm}\textrm{Bias}^2({\bf{\hat x}};\beta)=\sum^K_{i=1}\left(1+\frac{\sigma_i}{\beta}\right)^{-2}(\mathbf{u}^{\top}_i\tilde{\mathbf{x}}_K)^2
\end{split}
\end{equation}

The variance part of this solution can be derived from its covariance matrix: $\text{cov}(\mathbf{\hat{x}})=\mathbb{E}\left[(\mathbf{\hat{x}}-\mathbb{E}[\mathbf{\hat{x}}])(\mathbf{\hat{x}}-\mathbb{E}[\mathbf{\hat{x}}])^{\top}\right]
=\mathbb{E}\left[\mathbf{HCnn}^{\top}\mathbf{C}^{\top}\mathbf{H}^{\top}\right]$. Because $\mathbb{E}[\mathbf{nn}^{\top}]=\omega^2\bf{I}$ and $\mathbf{C}\mathbf{C}^{\top}=\mathbf{I}$, the variance of $\mathbf{\hat{x}}$ is
\begin{align}\label{variance}
&\hspace{0cm} \textrm{Var}({\mathbf{\hat x}};\beta)=\text{tr}[\text{cov}(\mathbf{\hat{x}})]
=\omega^2\text{tr}[\bf{HH}^{\top}]\nonumber
\end{align}

From the definition of $\bf{H}$ and the property of trace operator,
\begin{equation}\label{variance}
\begin{split}
&\hspace{0cm}\text{tr}[\mathbf{HH}^{\top}]=\text{tr}\left[{\mathbf{V}_K}( \pmb{\Psi}+\beta\mathbf{I} )^{-1}\pmb{\Psi}( \pmb{\Psi}+\beta\mathbf{I} )^{-1}{\mathbf{V}^{\top}_K}\right]\\
&\hspace{1.3cm}=\text{tr}\left[\mathbf{U} \mathbf{\Lambda} {{\mathbf{U}^{\top}}}\mathbf{U} (\mathbf{\Lambda}+\beta\mathbf{I})^{-2} {{\mathbf{U}^{\top}}}\right]\\
&\hspace{1.3cm}=\text{tr}\left[\mathbf{U}\text{diag}\{\frac{\sigma_i}{(\sigma_i+\beta)^2}\}  {{\mathbf{U}^{\top}}}\right]\\
&\hspace{1.3cm}=\sum^{K}_{i=1}\frac{\sigma_i}{(\sigma_i+\beta)^2}
\end{split}
\end{equation}

Therefore, the variance of $\mathbf{\hat{x}}$ is
\begin{align}\label{variance}
&\hspace{0cm} \textrm{Var}({\bf{\hat x}};\mu)=\omega^2\sum^{K}_{i=1}\frac{\sigma_i}{(\sigma_i+\beta)^2}
\end{align}

\section{Recognize the Bad Initial Available Subset}
\label{badSET}
Space $PW_\eta(G)$ contains all bandlimited graph signals with bandwidth less than $\eta$ and space $L_2(\mathcal{S}^c)$ covers graph signals fulfilling $\mathbf{f}(\mathcal{S})=0$ \cite{Pesenson2008}.
Authors in \cite{AO} claimed that for set $\mathcal{S}$ to be a uniqueness set of space $PW_\eta(G)$, $\eta$ needs to be less than the minimum bandwidth of all signals in space $L_2(\mathcal{S}^c)$, which is defined as the \emph{cutoff frequency} of $L_2(\mathcal{S}^c)$.
Thus, a set $\mathcal{S}$ to be good means its {cutoff frequency} should be as large as possible.
They defined the {cutoff frequency} of space $L_2(\mathcal{S}^c)$ as
\begin{equation}
\eta_c(\mathcal{S})=\mathop{\text{min}}\limits_{\mathop{\mathbf{f}\in L_2(\mathcal{S}^c)}\limits_{\mathbf{f}\not=\mathbf{0}}} \eta(\mathbf{f})
\end{equation}
where $\eta(\mathbf{f})$ is the frequency of a specific graph signal $\mathbf{f}$.
%

A strategy was proposed in \cite{AO} to approximate the cutoff frequency by
\begin{equation}
\begin{split}
&\hspace{0cm}
\Omega_k(\mathcal{S})\doteq\mathop{\text{min}}\limits_{\mathop{\mathbf{f}\in L_2(\mathcal{S}^c)}\limits_{\mathbf{f}\not=\mathbf{0}}} \left(\frac{\|\mathbf{L}^k\mathbf{f}\|}{\|\mathbf{f}\|}\right)^{1/k}\\
&\hspace{0cm}
=\left[\mathop{\text{min}}\limits_{\pmb{\psi}}
\frac{\pmb{\psi}^{\top}\left((\mathbf{L}^{\top})^k\mathbf{L}^k\right)_{\mathcal{S}^c}\pmb{\psi}}
{\pmb{\psi}^{\top}\pmb{\psi}}\right]^{1/{2k}}=(\lambda_{1,k})^{1/{2k}}
\end{split}
\end{equation}
where $\lambda_{1,k}$ is the smallest eigenvalue of the reduced matrix $\left((\mathbf{L}^{\top})^k\mathbf{L}^k\right)_{\mathcal{S}^c}$.

This cutoff frequency can be used for judging if a set is good for initialization because a reasonable
set won't have too small $\Omega_k(\mathcal{S})$.
We take this as a criterion to detour a bad initial subset.
During experiments, the size of the available set is $M=800$, so we randomly pick 50 sets off-line with $M=800$ to compute their approximate cutoff frequency $\Omega_k(\mathcal{S})$ and sort these values in ascending order.
We take the fifth value in this rank, denoted by $\tilde{\Omega}$, as the bound for define a bad set.
For an initial set $\mathcal{S}$,  once $\Omega_k(\mathcal{S})>\tilde{\Omega}$, we define it to be a good set and then perform sampling on it.
The following available sets are generated from this initial set with small crossover probability, so their reliability can be guaranteed.
 
\end{document}